\newcommand{\globalcolor}[1]{%
	\color{#1}\global\let\default@color\current@color
}
\newcommand{\TreeVertex}[2]{\draw[fill=black] (#1,#2) circle (0.04)}
\newcommand{\FGI}{\,\raisebox{-3pt}{\tikz{%
			\path[use as bounding box] (-0.04, -0.2) rectangle (0.69,0.2);
			\draw[semithick] (0,0) --(0.65,0);
			\TreeVertex{0}{0};
			\TreeVertex{0.65}{0};
			\drawbox;
		}\,}}
\newcommand{\FGIII}{\,\raisebox{-3pt}{\tikz{%
			\path[use as bounding box] (-0.04, -0.2) rectangle (0.69,0.2);
			\draw[semithick] (0,0)--(0.65,0);
			\draw[semithick] (0,0) edge [out=60,in=120] (0.65,0);
			\draw[semithick] (0,0) edge [out=-60,in=-120] (0.65,0);
			\TreeVertex{0}{0};
			\TreeVertex{0.65}{0};
			\drawbox;
		}\,}}
\newcommand{\FGVI}{\,\raisebox{-5.5pt}{\tikz{%
			\path[use as bounding box] (-0.04, -0.1) rectangle (0.64,0.56);
			\draw[semithick] (0,0) edge [out=20,in=160] (0.6,0);
			\draw[semithick] (0,0) edge [out=-40,in=-140] (0.6,0);
			\draw[semithick] (0,0) edge [out=100,in=-160] (0.3,0.52);
			\draw[semithick] (0,0) edge [out=40,in=-100] (0.3,0.52);
			\draw[semithick] (0.3,0.52) edge [out=-20,in=80] (0.6,0);
			\draw[semithick] (0.3,0.52) edge [out=-80,in=140] (0.6,0);
			\TreeVertex{0}{0};
			\TreeVertex{0.6}{0};
			\TreeVertex{0.3}{0.52};
			\drawbox;
		}\,}}
\newcommand{\FGLoop}{\,\raisebox{-4pt}{\tikz{%
			\path[use as bounding box] (-0.2, -0.04) rectangle (0.2,0.5);
			\draw[semithick] (0,0) edge [loop,distance=25pt,out=135,in=45] (0,0);
			\TreeVertex{0}{0};
			\drawbox;
		}\,}}
\newcommand{\FGIIIplus}{\,\raisebox{-5.5pt}{\tikz{%
			\path[use as bounding box] (-0.04, -0.2) rectangle (0.69,0.56);
			\draw[semithick] (0,0)--(0.65,0);
			\draw[semithick] (0,0) edge [out=60,in=120] (0.65,0);
			\draw[semithick] (0,0) edge [out=-60,in=-120] (0.65,0);
			\draw[semithick] (0,0) edge [out=80,in=-160] (0.325,0.5);
			\draw[semithick] (0.65,0) edge [out=100,in=-20] (0.325,0.5);
			\TreeVertex{0}{0};
			\TreeVertex{0.65}{0};
			\TreeVertex{0.325}{0.5};
			\drawbox;
		}\,}}
\newcommand{\FGIItwice}{\,\raisebox{-3pt}{\tikz{%
			\path[use as bounding box] (-0.04, -0.2) rectangle (1.34,0.2);
			\draw[semithick] (0,0) edge [out=60,in=120] (0.65,0);
			\draw[semithick] (0,0) edge [out=-60,in=-120] (0.65,0);
			\draw[semithick] (0.65,0) edge [out=60,in=120] (1.3,0);
			\draw[semithick] (0.65,0) edge [out=-60,in=-120] (1.3,0);
			\TreeVertex{0}{0};
			\TreeVertex{0.65}{0};
			\TreeVertex{1.3}{0};
			\drawbox;
		}\,}}
\newcommand{\Xpairing}{\,\raisebox{-3pt}{\tikz{%
			\path[use as bounding box] (-0.4, -0.2) rectangle (1.05,0.2);
			\draw[semithick] (0,0) edge [out=60,in=120] (0.65,0);
			\draw[semithick] (0,0) edge [out=-60,in=-120] (0.65,0);
			\draw[semithick] (0,0)--(0.65,0);
			\draw[semithick] (0,0)--(-0.35,0);
			\draw[semithick] (0,0)--(1,0);
			\TreeVertex{0}{0};
			\TreeVertex{0.65}{0};
			\drawbox;
		}\,}}
\newcommand{\YII}{\,\raisebox{-5.5pt}{\tikz{%
			\path[use as bounding box] (-0.01, -0.1) rectangle (0.56,0.56);
			\draw[semithick] (0.2,0) edge [out=160,in=-160] (0.2,0.5);
			\draw[semithick] (0.2,0) edge [out=10,in=-15] (0.2,0.5);
			\TreeVertex{0.2}{0};
			\TreeVertex{0.2}{0.5};
			\drawbox;
		}\,}}
\newcommand{\YIIi}{\,\raisebox{-5.5pt}{\tikz{%
			\path[use as bounding box] (-0.01, -0.06) rectangle (0.7,1);
			\draw[semithick] (0.3,0) edge [out=160,in=-160] (0.3,0.5) node[left, yshift=2mm] {\tiny{$e_1$}};
			\draw[semithick] (0.3,0) edge [out=0,in=-20] (0.3,0.5) node[right, xshift =0.5mm, yshift=2mm] {\tiny{$e_2$}};
			\draw [fill=black]  (0.3, 0.5)  circle(0.04) node[above] {\tiny{$v_1$}};
			\draw [fill=black]  (0.3, 0)  circle(0.04) node[below] {\tiny{$v_2$}};
			\drawbox;
		}\,}}
\newcommand{\FGIIIi}{\,\raisebox{-3pt}{\tikz{%
			\path[use as bounding box] (-0.3, -0.2) rectangle (1,0.2);
			\draw[semithick] (0,0)--(0.65,0);
			\draw[semithick] (0,0) edge [out=60,in=120] (0.65,0);
			\draw[semithick] (0,0) edge [out=-60,in=-120] (0.65,0);
			\TreeVertex{0}{0} node[left] {\tiny{$u_1$}};
			\TreeVertex{0.65}{0} node[right] {\tiny{$u_2$}};
			\drawbox;
		}\,}}
\newcommand{\YIIt}{\,\raisebox{-5.5pt}{\tikz{%
			\path[use as bounding box] (-0.2, 0) rectangle (1,0.7);
			\draw[semithick] (0.15,0.1) edge [out=140,in=-170] (0.3,0.5) node[left, yshift=2mm] {\tiny{$\ell_1$}};
			\draw[semithick] (0.45,0.1) edge [out= 30,in=0] (0.3,0.5) 
			node[right, xshift =0.5mm, yshift=2mm] {\tiny{$\ell_2$}};
			\draw [fill=black]  (0.3, 0.5)  circle(0.04) node[above] {\tiny{$v_1$}};
			
			\drawbox;
		}\,}}
\newcommand{\FGIIIplusL}{\,\raisebox{-5.5pt}{\tikz{%
			\path[use as bounding box] (-0.4, -0.2) rectangle (1,0.56);
			\draw[semithick] (0,0)--(0.65,0);
			\draw[semithick] (0,0) edge [out=60,in=120] (0.65,0);
			\draw[semithick] (0,0) edge [out=-60,in=-120] (0.65,0);
			\draw[semithick] (0,0) edge [out=80,in=-160] (0.325,0.5) node[left, xshift = 2mm, yshift=4mm] {\tiny{$e'_1$}};
			\draw[semithick] (0.65,0) edge [out=100,in=-20] (0.325,0.5) node[right, xshift = -2mm, yshift=4mm] {\tiny{$e'_2$}};
			\TreeVertex{0}{0} node[left] {\tiny{$u_1$}};
			\TreeVertex{0.65}{0} node[right] {\tiny{$u_2$}};
			\TreeVertex{0.325}{0.5} node[above] {\tiny{$v_1$}};
			\drawbox;
		}\,}}
\newcommand{\FGIIIplusR}{\,\raisebox{-5.5pt}{\tikz{%
			\path[use as bounding box] (-0.4, -0.2) rectangle (1,0.56);
			\draw[semithick] (0,0)--(0.65,0);
			\draw[semithick] (0,0) edge [out=60,in=120] (0.65,0);
			\draw[semithick] (0,0) edge [out=-60,in=-120] (0.65,0);
			\draw[semithick] (0,0) edge [out=80,in=-160] (0.325,0.5) node[left, xshift = 2mm, yshift=4mm] {\tiny{$e''_1$}};
			\draw[semithick] (0.65,0) edge [out=100,in=-20] (0.325,0.5) node[right, xshift = -2mm, yshift=4mm] {\tiny{$e''_2$}};
			\TreeVertex{0}{0} node[left] {\tiny{$u_2$}};
			\TreeVertex{0.65}{0} node[right] {\tiny{$u_1$}};
			\TreeVertex{0.325}{0.5} node[above] {\tiny{$v_1$}};
			\drawbox;
		}\,}}
\newcommand{\ISOL}{\,\raisebox{-5.5pt}{\tikz{%
			\path[use as bounding box] (-0.4, -0.2) rectangle (1,0.56);
			\draw[semithick] (0,0)--(0.65,0);
			\draw[semithick] (0,0) edge [out=40,in=-100] (0.325,0.5);
			\draw[semithick] (0,0) edge [out=-60,in=-120] (0.65,0);
			\draw[semithick] (0,0) edge [out=80,in=-160] (0.325,0.5) ;
			\draw[semithick] (0.65,0) edge [out=100,in=-20] (0.325,0.5) ;
			\TreeVertex{0}{0} node[left] {\tiny{$v$}};
			\TreeVertex{0.65}{0} node[right] {\tiny{$u_2$}};
			\TreeVertex{0.325}{0.5} node[above] {\tiny{$u_1$}};
			\drawbox;
		}\,}}
\newcommand{\ISOR}{\,\raisebox{-5.5pt}{\tikz{%
			\path[use as bounding box] (-0.4, -0.2) rectangle (1,0.56);
			\draw[semithick] (0,0)--(0.65,0);
			\draw[semithick] (0.65,0) edge [out=-80,in=140] (0.325,0.5);
			\draw[semithick] (0,0) edge [out=-60,in=-120] (0.65,0);
			\draw[semithick] (0,0) edge [out=80,in=-160] (0.325,0.5) ;
			\draw[semithick] (0.65,0) edge [out=100,in=-20] (0.325,0.5) ;
			\TreeVertex{0}{0} node[left] {\tiny{$u_1$}};
			\TreeVertex{0.65}{0} node[right] {\tiny{$v$}};
			\TreeVertex{0.325}{0.5} node[above] {\tiny{$u_2$}};
			\drawbox;
		}\,}}
\newcommand{\FGIIIa}{\,\raisebox{-3pt}{\tikz[scale=2.5]{%
			\path[use as bounding box] (-0.15, -0.1) rectangle (0.69,0.2);
			\draw[semithick, blue] (0,0)--(0.325,0) 
			node[left, xshift = -1mm, yshift=1mm] {\tiny{$\ell_{b_1}$}};
			\draw[semithick, blue] (0.325,0)--(0.65,0) 
			node[left, xshift = -1mm, yshift=1mm] {\tiny{$\ell_{b_2}$}};
			\draw[semithick,red] (0,0) edge [out=60,in=180] (0.325,0.16) 
			node[above, xshift = 4mm, yshift=2.5mm] {\tiny{$\ell_{r_1}$}};
			\draw[semithick,red] (0.325,0.16) edge [out=0,in=120] (0.65,0) 
			node[above, xshift = 4mm, yshift=-1.5mm] {\tiny{$\ell_{r_2}$}};
			\draw[semithick,darkgreen] (0,0) edge [out=-60,in=180] (0.325,-0.16) 
			node[below, xshift = 4mm, yshift=0.5mm] {\tiny{$\ell_{g_1}$}};
			\draw[semithick,darkgreen] (0.325,-0.16) edge [out=0,in=-120] (0.65,0) 
			node[above, xshift = 4mm, yshift=-0.7mm] {\tiny{$\ell_{g_2}$}};
			\TreeVertex{0}{0}  node[left] {\tiny{$v_1$}};
			\TreeVertex{0.65}{0}  node[right] {\tiny{$v_2$}};
			\drawbox;
		}\,}}
\newcommand{\FGIIIb}{\,\raisebox{-3pt}{\tikz[scale=2.5]{%
			\path[use as bounding box] (-0.15, -0.1) rectangle (0.69,0.2);
			\draw[semithick,darkgreen] (0,0)--(0.325,0) 
			node[left, xshift = -1mm, yshift=1mm] {\tiny{$\ell_{g_2}$}};
			\draw[semithick, darkgreen] (0.325,0)--(0.65,0) 
			node[left, xshift = -1mm, yshift=1mm] {\tiny{$\ell_{g_1}$}};
			\draw[semithick,blue] (0,0) edge [out=60,in=180] (0.325,0.16) 
			node[above, xshift = 4mm, yshift=2.5mm] {\tiny{$\ell_{b_2}$}};
			\draw[semithick,blue] (0.325,0.16) edge [out=0,in=120] (0.65,0) 
			node[above, xshift = 4mm, yshift=-1.5mm] {\tiny{$\ell_{b_1}$}};
			\draw[semithick,red] (0,0) edge [out=-60,in=180] (0.325,-0.16) 
			node[below, xshift = 4mm, yshift=0.5mm] {\tiny{$\ell_{r_2}$}};
			\draw[semithick,red] (0.325,-0.16) edge [out=0,in=-120] (0.65,0) 
			node[above, xshift = 4mm, yshift=-0.7mm] {\tiny{$\ell_{r_1}$}};
			\TreeVertex{0}{0}  node[left] {\tiny{$v_2$}};
			\TreeVertex{0.65}{0}  node[right] {\tiny{$v_1$}};
			\drawbox;
		}\,}}
\newcommand{\FGIIIc}{\,\raisebox{-3pt}{\tikz[scale=2.5]{%
			\path[use as bounding box] (-0.15, -0.1) rectangle (0.69,0.2);
			\draw[semithick, blue] (0,0)--(0.325,0) 
			node[left, xshift = -1mm, yshift=1mm] {\tiny{$\ell_{b_2}$}};
			\draw[semithick, blue] (0.325,0)--(0.65,0) 
			node[left, xshift = -1mm, yshift=1mm] {\tiny{$\ell_{b_1}$}};
			\draw[semithick,red] (0,0) edge [out=60,in=180] (0.325,0.16) 
			node[above, xshift = 4mm, yshift=2.5mm] {\tiny{$\ell_{r_2}$}};
			\draw[semithick,red] (0.325,0.16) edge [out=0,in=120] (0.65,0) 
			node[above, xshift = 4mm, yshift=-1.5mm] {\tiny{$\ell_{r_1}$}};
			\draw[semithick,darkgreen] (0,0) edge [out=-60,in=180] (0.325,-0.16) 
			node[below, xshift = 4mm, yshift=0.5mm] {\tiny{$\ell_{g_2}$}};
			\draw[semithick,darkgreen] (0.325,-0.16) edge [out=0,in=-120] (0.65,0) 
			node[above, xshift = 4mm, yshift=-0.7mm] {\tiny{$\ell_{g_1}$}};
			\TreeVertex{0}{0}  node[left] {\tiny{$v_2$}};
			\TreeVertex{0.65}{0}  node[right] {\tiny{$v_1$}};
			\drawbox;
		}\,}}
\newcommand{\FGIIId}{\,\raisebox{-3pt}{\tikz[scale=2.5]{%
			\path[use as bounding box] (-0.15, -0.1) rectangle (0.69,0.2);
			\draw[semithick, blue] (0,0)--(0.325,0) 
			node[left, xshift = -1mm, yshift=1mm] {\tiny{$\ell_{b_1}$}};
			\draw[semithick, blue] (0.325,0)--(0.65,0) 
			node[left, xshift = -1mm, yshift=1mm] {\tiny{$\ell_{b_2}$}};
			\draw[semithick,darkgreen] (0,0) edge [out=60,in=180] (0.325,0.16) 
			node[above, xshift = 4mm, yshift=2.5mm] {\tiny{$\ell_{g_1}$}};
			\draw[semithick,red] (0.325,0.16) edge [out=0,in=120] (0.65,0) 
			node[above, xshift = 4mm, yshift=-1.5mm] {\tiny{$\ell_{r_2}$}};
			\draw[semithick,red] (0,0) edge [out=-60,in=180] (0.325,-0.16) 
			node[below, xshift = 4mm, yshift=0.5mm] {\tiny{$\ell_{r_1}$}};
			\draw[semithick,darkgreen] (0.325,-0.16) edge [out=0,in=-120] (0.65,0) 
			node[above, xshift = 4mm, yshift=-0.7mm] {\tiny{$\ell_{g_2}$}};
			\TreeVertex{0}{0}  node[left] {\tiny{$v_1$}};
			\TreeVertex{0.65}{0}  node[right] {\tiny{$v_2$}};
			\drawbox;
		}\,}}
\newcommand{\ZIV}{\,\raisebox{-3pt}{\tikz[scale=0.015]{
		\path[use as bounding box] (270, 265) rectangle (190,330);
		\draw[semithick]    (203.17,273.41) -- (247.82,273.14) ;
		\draw[semithick]    (203.17,273.41) .. controls (222.02,252.45) and (235.43,258.98) .. (247.82,273.14) ;
		\draw[semithick]    (203.17,273.41) .. controls (224.55,297.09) and (243.02,279.13) .. (247.82,273.14) ;
		\draw[semithick]    (203.17,307.98) -- (247.82,307.71) ;
		\draw[semithick]    (203.17,307.98) .. controls (222.02,287.02) and (235.43,293.56) .. (247.82,307.71) ;
		\draw[semithick]    (203.17,307.98) .. controls (224.55,331.67) and (243.02,313.7) .. (247.82,307.71) ;
		\draw [semithick]   (203.17,273.41) .. controls (204.31,273.14) and (179.26,288.38) .. (203.17,307.98) ;
		\draw [semithick]    (247.82,273.14) .. controls (248.96,272.86) and (274.64,289.2) .. (247.82,307.71) ;
		\draw[semithick]   (201.62,273.41) .. controls (201.62,272.49) and (202.31,271.74) .. (203.17,271.74) .. controls (204.03,271.74) and (204.72,272.49) .. (204.72,273.41) .. controls (204.72,274.33) and (204.03,275.08) .. (203.17,275.08) .. controls (202.31,275.08) and (201.62,274.33) .. (201.62,273.41) -- cycle ;
\draw  [fill={rgb, 255:red, 0; green, 0; blue, 0 }  ,fill opacity=1 ] (200.58,273.41) .. controls (200.58,271.87) and (201.74,270.62) .. (203.17,270.62) .. controls (204.6,270.62) and (205.76,271.87) .. (205.76,273.41) .. controls (205.76,274.95) and (204.6,276.2) .. (203.17,276.2) .. controls (201.74,276.2) and (200.58,274.95) .. (200.58,273.41) -- cycle ;
\draw  [fill={rgb, 255:red, 0; green, 0; blue, 0 }  ,fill opacity=1 ] (244.19,307.27) .. controls (244.19,305.73) and (245.35,304.48) .. (246.79,304.48) .. controls (248.22,304.48) and (249.38,305.73) .. (249.38,307.27) .. controls (249.38,308.81) and (248.22,310.06) .. (246.79,310.06) .. controls (245.35,310.06) and (244.19,308.81) .. (244.19,307.27) -- cycle ;
\draw  [fill={rgb, 255:red, 0; green, 0; blue, 0 }  ,fill opacity=1 ] (200.58,307.98) .. controls (200.58,306.44) and (201.74,305.19) .. (203.17,305.19) .. controls (204.6,305.19) and (205.76,306.44) .. (205.76,307.98) .. controls (205.76,309.53) and (204.6,310.78) .. (203.17,310.78) .. controls (201.74,310.78) and (200.58,309.53) .. (200.58,307.98) -- cycle ;
\draw  [fill={rgb, 255:red, 0; green, 0; blue, 0 }  ,fill opacity=1 ] (245.45,272.03) .. controls (245.45,270.48) and (246.61,269.24) .. (248.05,269.24) .. controls (249.48,269.24) and (250.64,270.48) .. (250.64,272.03) .. controls (250.64,273.57) and (249.48,274.82) .. (248.05,274.82) .. controls (246.61,274.82) and (245.45,273.57) .. (245.45,272.03) -- cycle ;
			\drawbox;
		}\,}}
	\newcommand{\ZIVI}{\,\raisebox{-3pt}{\tikz[scale=0.015]{%
	\path[use as bounding box] (-10, 220) rectangle (80,290);
		\draw  [fill={rgb, 255:red, 0; green, 0; blue, 0 }  ,fill opacity=1 ] (37.77,220.68) .. controls (37.77,219.14) and (38.93,217.89) .. (40.36,217.89) .. controls (41.8,217.89) and (42.96,219.14) .. (42.96,220.68) .. controls (42.96,222.22) and (41.8,223.47) .. (40.36,223.47) .. controls (38.93,223.47) and (37.77,222.22) .. (37.77,220.68) -- cycle ;
		\draw  [fill={rgb, 255:red, 0; green, 0; blue, 0 }  ,fill opacity=1 ] (37.59,280.68) .. controls (37.59,279.14) and (38.75,277.89) .. (40.18,277.89) .. controls (41.61,277.89) and (42.77,279.14) .. (42.77,280.68) .. controls (42.77,282.22) and (41.61,283.47) .. (40.18,283.47) .. controls (38.75,283.47) and (37.59,282.22) .. (37.59,280.68) -- cycle ;
		\draw  [fill={rgb, 255:red, 0; green, 0; blue, 0 }  ,fill opacity=1 ] (6.4,249.24) .. controls (6.4,247.7) and (7.57,246.45) .. (9,246.45) .. controls (10.43,246.45) and (11.59,247.7) .. (11.59,249.24) .. controls (11.59,250.78) and (10.43,252.03) .. (9,252.03) .. controls (7.57,252.03) and (6.4,250.78) .. (6.4,249.24) -- cycle ;
		\draw  [fill={rgb, 255:red, 0; green, 0; blue, 0 }  ,fill opacity=1 ] (67.39,250.4) .. controls (67.39,248.86) and (68.55,247.61) .. (69.98,247.61) .. controls (71.42,247.61) and (72.58,248.86) .. (72.58,250.4) .. controls (72.58,251.94) and (71.42,253.19) .. (69.98,253.19) .. controls (68.55,253.19) and (67.39,251.94) .. (67.39,250.4) -- cycle ;
		\draw[semithick]    (8.91,250.32) .. controls (6.34,234.03) and (22.4,220.69) .. (40.36,220.68) ;
		\draw[semithick]    (40.87,280.41) .. controls (42.91,264.04) and (56.3,252.74) .. (69.98,250.4) ;
		\draw [semithick]    (8.91,250.32) .. controls (38.37,238.68) and (36.42,228.68) .. (40.67,221.81) ;
		\draw[semithick]    (40.18,280.68) .. controls (30.25,282.38) and (6.82,270.86) .. (8.91,250.32) ;
		\draw [semithick]   (69.96,249.7) .. controls (59.7,252.27) and (43.78,234.83) .. (40.36,220.68) ;
		\draw [semithick]   (40.96,221.09) .. controls (60.45,220.82) and (72.64,235.73) .. (69.96,249.7) ;
		\draw [semithick]   (9.61,249.54) .. controls (28.21,254.12) and (35.52,261.43) .. (40.18,280.68) ;
		\draw [semithick]   (39.27,281.05) .. controls (74.45,277) and (69.73,255.36) .. (70.33,251.25) ;
		
				\drawbox;
			}\,}}
\newcommand{\ZIVII}{\,\raisebox{-3pt}{\tikz[x=0.65pt,y=0.65pt,yscale=-1,xscale=1]{%
	\path[use as bounding box] (-10, 260) rectangle (60,300);
			\draw [semithick]    (12,267) .. controls (22.2,261) and (32.6,260.6) .. (42.6,267.8) ;
			\draw [semithick]    (12,267) .. controls (17.4,272.2) and (27.4,278.6) .. (42.6,267.8) ;
			\draw[semithick]   (26.73,283.69) .. controls (18.33,285.29) and (16.2,299.02) .. (25.8,302.62) ;
			\draw [semithick]    (26.07,284.09) .. controls (38.33,291.29) and (28.07,306.22) .. (24.07,302.62) ;
			\draw [semithick]    (12.4,266.33) .. controls (14.2,275.93) and (14.61,281.72) .. (26.07,284.09) ;
			\draw [semithick]    (40.14,269.15) .. controls (44.54,264.35) and (39.13,286.09) .. (26.73,283.69) ;
			\draw [semithick]    (12,267) .. controls (3.44,275.67) and (0.87,300.09) .. (24.73,303.82) ;
			\draw [semithick]    (42.6,267.8) .. controls (49.44,271) and (55.13,303.02) .. (26.6,303.69) ;
			\draw  [fill={rgb, 255:red, 0; green, 0; blue, 0 }  ,fill opacity=1 ] (24.12,302.62) .. controls (24.12,301.63) and (24.87,300.82) .. (25.8,300.82) .. controls (26.73,300.82) and (27.48,301.63) .. (27.48,302.62) .. controls (27.48,303.62) and (26.73,304.42) .. (25.8,304.42) .. controls (24.87,304.42) and (24.12,303.62) .. (24.12,302.62) -- cycle ;
			\draw  [fill={rgb, 255:red, 0; green, 0; blue, 0 }  ,fill opacity=1 ] (24.39,284.09) .. controls (24.39,283.09) and (25.14,282.28) .. (26.07,282.28) .. controls (26.99,282.28) and (27.74,283.09) .. (27.74,284.09) .. controls (27.74,285.08) and (26.99,285.89) .. (26.07,285.89) .. controls (25.14,285.89) and (24.39,285.08) .. (24.39,284.09) -- cycle ;
			\draw  [fill={rgb, 255:red, 0; green, 0; blue, 0 }  ,fill opacity=1 ] (10.72,268.14) .. controls (10.72,267.14) and (11.47,266.33) .. (12.4,266.33) .. controls (13.33,266.33) and (14.08,267.14) .. (14.08,268.14) .. controls (14.08,269.13) and (13.33,269.94) .. (12.4,269.94) .. controls (11.47,269.94) and (10.72,269.13) .. (10.72,268.14) -- cycle ;
			\draw  [fill={rgb, 255:red, 0; green, 0; blue, 0 }  ,fill opacity=1 ] (40.12,267.8) .. controls (40.12,266.8) and (40.87,266) .. (41.8,266) .. controls (42.73,266) and (43.48,266.8) .. (43.48,267.8) .. controls (43.48,268.8) and (42.73,269.6) .. (41.8,269.6) .. controls (40.87,269.6) and (40.12,268.8) .. (40.12,267.8) -- cycle ;
			\drawbox;
		}\,}}
\def\M{\mathfrak{M}}
\def\emptyset{{\centernot\ocircle}}
\def\CP{\mathcal{P}}
\def\CG{\mathcal{G}}
\def\CJ{\mathcal{J}}
\def\CA{\mathcal{A}}
\def\CE{\mathcal{E}}
\def\CM{\mathcal{M}}
\def\id{\mathrm{id}}
\def\proj{\mathbf{p}}
\definecolor{blush}{rgb}{0.87, 0.36, 0.51}
	\definecolor{brightcerulean}{rgb}{0.11, 0.67, 0.84}
	\definecolor{greenryb}{rgb}{0.4, 0.69, 0.2}
\newif\ifdark
\definecolor{darkred}{rgb}{0.9,0.2,0.2}
\definecolor{darkblue}{rgb}{0.7,0.3,1}
\definecolor{darkgreen}{rgb}{0.1,0.9,0.1}
\definecolor{franck}{rgb}{0,0.8,1}
\definecolor{pagebackground}{rgb}{.15,.21,.18}
\definecolor{pageforeground}{rgb}{.84,.84,.85}
\definecolor{symbols}{rgb}{0,0.7,1}
\colorlet{connection}{red!80!black}
\colorlet{boxcolor}{blue!50}
\definecolor{darkred}{rgb}{0.7,0.1,0.1}
\definecolor{darkblue}{rgb}{0.4,0.1,0.8}
\definecolor{darkgreen}{rgb}{0.1,0.7,0.1}
\definecolor{franck}{rgb}{0,0,1}
\definecolor{pagebackground}{rgb}{1,1,1}
\definecolor{pageforeground}{rgb}{0,0,0}
\colorlet{symbols}{blue!90!black}
\colorlet{connection}{red!30!black}
\colorlet{boxcolor}{blue!50!black}
\def\slash{\leavevmode\unskip\kern0.18em/\penalty\exhyphenpenalty\kern0.18em}
\def\dash{\leavevmode\unskip\kern0.18em--\penalty\exhyphenpenalty\kern0.18em}
\DeclareMathAlphabet{\mathbbm}{U}{bbm}{m}{n}
\DeclareFontFamily{U}{BOONDOX-calo}{\skewchar\font=45 }
\DeclareFontShape{U}{BOONDOX-calo}{m}{n}{
  <-> s*[1.05] BOONDOX-r-calo}{}
\DeclareFontShape{U}{BOONDOX-calo}{b}{n}{
  <-> s*[1.05] BOONDOX-b-calo}{}
\DeclareMathAlphabet{\mcb}{U}{BOONDOX-calo}{m}{n}
\SetMathAlphabet{\mcb}{bold}{U}{BOONDOX-calo}{b}{n}
\setlist{noitemsep,topsep=4pt,leftmargin=1.5em}
\DeclareMathAlphabet{\mathbbm}{U}{bbm}{m}{n}
\DeclareMathAlphabet{\mcb}{U}{BOONDOX-calo}{m}{n}
\SetMathAlphabet{\mcb}{bold}{U}{BOONDOX-calo}{b}{n}
\DeclareFontFamily{U}{mathx}{\hyphenchar\font45}
\DeclareFontShape{U}{mathx}{m}{n}{
      <5> <6> <7> <8> <9> <10>
      <10.95> <12> <14.4> <17.28> <20.74> <24.88>
      mathx10
      }{}
\DeclareSymbolFont{mathx}{U}{mathx}{m}{n}
\DeclareMathSymbol{\bigtimes}{1}{mathx}{"91}
\providecommand{\figures}{false}
{ \ifthenelse{\equal{\figures}{false}} {#1}{\[ {\rm Figure \ missing !} \]} }{}
\tikzstyle{tinydots}=[dash pattern=on \pgflinewidth off \pgflinewidth]
\tikzstyle{superdense}=[dash pattern=on 4pt off 1pt]
\newcommand{\beq}{\begin{equation}}
\newcommand{\eeq}{\end{equation}}
\def\${|\!|\!|}
\def\proj{\mathfrak{p}}
\newenvironment{DIFnomarkup}{}{} 
\newcommand{\rrightarrow}{{\to\hskip -4.9mm\raise 1pt\hbox{$\to$}}}
\newfont{\indic}{bbmss12}
\def\Nabla_#1{\nabla_{\!#1}}
    \pgfmathsetlength{\pgf@xb}{\pgfkeysvalueof{/pgf/outer xsep}}%
    \pgfmathsetlength{\pgf@yb}{\pgfkeysvalueof{/pgf/outer ysep}}%
\def\symbol#1{\textcolor{symbols}{#1}}
\def\decorate#1#2{
        \ifnum#2>0
    		\foreach \count in {1,...,#2}{
	       	let
				\p1 = (sourcenode.center),
                \p2 = (sourcenode.east),
				\n1 = {\x2-\x1},
				\n2 = {1mm},
				\n3 = {(1.3+0.6*(\count-1))*\n1},
				\n4 = {0.7*\n1}
			in 
        		node[rectangle,fill=symbols,rotate=30,inner sep=0pt,minimum width=0.2*\n2,minimum height=\n2] at ($(sourcenode.center) + (\n3,\n4)$) {}
				}
		\fi
        \ifnum#1>0
    		\foreach \count in {1,...,#1}{
	       	let
				\p1 = (sourcenode.center),
                \p2 = (sourcenode.east),
				\n1 = {\x2-\x1},
				\n2 = {1mm},
				\n3 = {(1.3+0.6*(\count-1))*\n1},
				\n4 = {0.7*\n1}
			in 
        		node[rectangle,fill=symbols,rotate=-30,inner sep=0pt,minimum width=0.2*\n2,minimum height=\n2] at ($(sourcenode.center) + (-\n3,\n4)$) {}
				}
		\fi
}
\tikzset{
    dectriangle/.style 2 args={
        triangle,
        alias=sourcenode,
        append after command={\decorate{#1}{#2}}
    },
    dectriangle/.default={0}{0},
}
\tikzset{
	cross/.style={path picture={ 
  		\draw[symbols]
			(path picture bounding box.south east) -- (path picture bounding box.north west) (path picture bounding box.south west) -- (path picture bounding box.north east);
		}},
root/.style={circle,fill=green!50!black,inner sep=0pt, minimum size=1.2mm},
        dot/.style={circle,fill=pageforeground,inner sep=0pt, minimum size=1mm},
        dotred/.style={circle,fill=pageforeground!50!pagebackground,inner sep=0pt, minimum size=2mm},
        var/.style={circle,fill=pageforeground!10!pagebackground,draw=pageforeground,inner sep=0pt, minimum size=3mm},
        kernel/.style={semithick,shorten >=2pt,shorten <=2pt},
        kernels/.style={snake=zigzag,shorten >=2pt,shorten <=2pt,segment amplitude=1pt,segment length=4pt,line before snake=2pt,line after snake=5pt,},
        rho/.style={densely dashed,semithick,shorten >=2pt,shorten <=2pt},
           testfcn/.style={dotted,semithick,shorten >=2pt,shorten <=2pt},
        renorm/.style={shape=circle,fill=pagebackground,inner sep=1pt},
        labl/.style={shape=rectangle,fill=pagebackground,inner sep=1pt},
        xic/.style={very thin,circle,draw=symbols,fill=symbols,inner sep=0pt,minimum size=1.2mm},
        g/.style={very thin,rectangle,draw=symbols,fill=symbols!10!pagebackground,inner sep=0pt,minimum width=2.5mm,minimum height=1.2mm},
        xi/.style={very thin,circle,draw=symbols,fill=symbols!10!pagebackground,inner sep=0pt,minimum size=1.2mm},
	xies/.style={very thin,rectangle,fill=green!50!black!25,draw=symbols,inner sep=0pt,minimum size=1.1mm},
	xiesf/.style={very thin,rectangle,fill=green!50!black,draw=symbols,inner sep=0pt,minimum size=1.1mm},
        xix/.style={very thin,crosscircle,fill=symbols!10!pagebackground,draw=symbols,inner sep=0pt,minimum size=1.2mm},
        X/.style={very thin,cross,rectangle,fill=pagebackground,draw=symbols,inner sep=0pt,minimum size=1.2mm},
	xib/.style={thin,circle,fill=symbols!10!pagebackground,draw=symbols,inner sep=0pt,minimum size=1.6mm},
	xie/.style={thin,circle,fill=green!50!black,draw=symbols,inner sep=0pt,minimum size=1.6mm},
	xid/.style={thin,circle,fill=symbols,draw=symbols,inner sep=0pt,minimum size=1.6mm},
	xibx/.style={thin,crosscircle,fill=symbols!10!pagebackground,draw=symbols,inner sep=0pt,minimum size=1.6mm},
	kernels2/.style={very thick,draw=connection,segment length=12pt},
	keps/.style={thin,draw=symbols,->},
	kepspr/.style={thick,draw=connection,->},
	krho/.style={thin,draw=symbols,superdense,->},
	krhopr/.style={thick,draw=connection,superdense},
	triangle/.style = { regular polygon, regular polygon sides=3},
	not/.style={thin,circle,draw=connection,fill=connection,inner sep=0pt,minimum size=0.5mm},
	diff/.style = {very thin,draw=symbols,triangle,fill=red!50!black,inner sep=0pt,minimum size=1.6mm},
	diff1/.style = {very thin,dectriangle={1}{0},fill=red!50!black,draw=symbols,inner sep=0pt,minimum size=1.6mm},
	diff2/.style = {very thin,dectriangle={1}{1},fill=red!50!black,draw=symbols,inner sep=0pt,minimum size=1.6mm},
		diffmini/.style = {very thin,rectangle,fill=black,draw=black,inner sep=0pt,minimum size=0.75mm},
	 kernelsmod/.style={very thick,draw=connection,segment length=12pt},
	 rec/.style = {very thin,rectangle,fill=black,draw=black,inner sep=0pt,minimum size=2mm},
	cerc/.style={very thin,circle,draw=black,fill=symbols,inner sep=0pt,minimum size=2mm},
	stars/.style={very thin,star,star points=6,star point ratio=0.5, draw=black,fill=red,inner sep=0pt,minimum size=0.7mm},
	>=stealth,
        }
        \tikzset{
root/.style={circle,fill=black!50,inner sep=0pt, minimum size=3mm},
        circ/.style={circle,fill=white,draw=black,very thin,inner sep=.5pt, minimum size=1.2mm},
        round1/.style={fill=white,outer sep = 0,inner sep=2pt,rounded corners=1mm,draw,text=black,thin,minimum size=1.2mm},
          circ1/.style={circle,fill=red!10,draw=red,very thin,inner sep=.5pt, minimum size=1.2mm},
        rect/.style={fill=white,outer sep = 0,inner sep=2pt,rectangle,draw,text=black,thin,minimum size=1.2mm},
        rect1/.style={fill=white,outer sep = 0,inner sep=2pt,rectangle,draw,text=black,thin,minimum size=1.2mm},
        round2/.style={fill=red!10,outer sep = 0,inner sep=2pt,rounded corners=1mm,draw,text=black,thin,minimum size=1.2mm},
       round3/.style={fill=blue!10,outer sep = 0,inner sep=2pt,rounded corners=1mm,draw,text=black,thin,minimum size=1.2mm}, 
        rect2/.style={fill=black!10,outer sep = 0,inner sep=2pt,rectangle,draw,text=black,thin,minimum size=1.2mm},
        dot/.style={circle,fill=black,inner sep=0pt, minimum size=1.2mm},
        dotred/.style={circle,fill=black!50,inner sep=0pt, minimum size=2mm},
        var/.style={circle,fill=black!10,draw=black,inner sep=0pt, minimum size=3mm},
        kernel/.style={semithick,shorten >=2pt,shorten <=2pt},
         diag/.style={thin,shorten >=4pt,shorten <=4pt},
        kernel1/.style={thick},
        kernels/.style={snake=zigzag,shorten >=2pt,shorten <=2pt,segment amplitude=1pt,segment length=4pt,line before snake=2pt,line after snake=5pt,},
		kernels1/.style={snake=zigzag,segment amplitude=0.5pt,segment length=2pt},
		rho1/.style={densely dotted,semithick},
        rho/.style={densely dashed,semithick,shorten >=2pt,shorten <=2pt},
           testfcn/.style={dotted,semithick,shorten >=2pt,shorten <=2pt},
           visible/.style={draw, circle, fill, inner sep=0.25ex},
        renorm/.style={shape=circle,fill=white,inner sep=1pt},
        labl/.style={shape=rectangle,fill=white,inner sep=1pt},
        xic/.style={very thin,circle,fill=symbols,draw=black,inner sep=0pt,minimum size=1.2mm},
        xi/.style={very thin,circle,fill=blue!10,draw=black,inner sep=0pt,minimum size=1.2mm},
	xib/.style={very thin,circle,fill=blue!10,draw=black,inner sep=0pt,minimum size=1.6mm},
	xie/.style={very thin,circle,fill=green!50!black,draw=black,inner sep=0pt,minimum size=1mm},
	xid/.style={very thin,circle,fill=symbols,draw=black,inner sep=0pt,minimum size=1.6mm},
	edgetype/.style={very thin,circle,draw=black,inner sep=0pt,minimum size=5mm},
	nodetype/.style={very thick,circle,draw=black,inner sep=0pt,minimum size=5mm},
	kernels2/.style={very thick,draw=connection,segment length=12pt},
clean/.style={thin,circle,fill=black,inner sep=0pt,minimum size=1mm},	not/.style={thin,circle,fill=symbols,draw=connection,fill=connection,inner sep=0pt,minimum size=0.8mm},
	>=stealth,
        }
 \def\1{\mathbf{\symbol{1}}}
\DeclareMathAlphabet{\mathpzc}{OT1}{pzc}{m}{it}
\def\eqref#1{(\ref{#1})}
\newcommand*{\bigcdot}{}
\DeclareRobustCommand*{\bigcdot}{%
  \mathbin{\mathpalette\bigcdot@{}}%
}
\newcommand*{\bigcdot@scalefactor}{.5}
\newcommand*{\bigcdot@widthfactor}{1.15}
\newcommand*{\bigcdot@}[2]{%
  \sbox0{$#1\vcenter{}$}
  \sbox2{$#1\cdot\m@th$}%
  \hbox to \bigcdot@widthfactor\wd2{%
    \hfil
    \raise\ht0\hbox{%
      \scalebox{\bigcdot@scalefactor}{%
        \lower\ht0\hbox{$#1\bullet\m@th$}%
      }%
    }%
    \hfil
  }%
}
\def\BPHZ{\textnormal{\tiny \textsc{bphz}}}
\def\F{\textnormal{\tiny \textsc{f}}}
\def\M{\textnormal{\tiny \textsc{m}}}
\def\two{{\<generic>\kern0.05em\<genericb>}}
\def\twoI{{\<Ito>\kern0.05em\<Itob>}}
\def\st{\mathsf{fgt}}
\def\mail#1{\burlalt{#1}{mailto:#1}}
\declaretheorem[style=definition]{example}
\begin{document}

\def\st{\mathsf{fgt}}
\def\mail#1{\burlalt{#1}{mailto:#1}}
\title{Renormalising Feynman diagrams with multi-indices}
\author{Yvain Bruned, Yingtong Hou,}
\institute{ 
	IECL (UMR 7502), Université de Lorraine
	\\
	Email:\ \begin{minipage}[t]{\linewidth}
		\mail{yvain.bruned@univ-lorraine.fr}
		\\
		\mail{yingtong.hou@univ-lorraine.fr}
\end{minipage}}

\def\dsqcup{\sqcup\mathchoice{\mkern-7mu}{\mkern-7mu}{\mkern-3.2mu}{\mkern-3.8mu}\sqcup}

\maketitle

\begin{abstract}
\ \ \ \ In this work, we provide a method to obtain the renormalised measure in quantum field theory directly from the renormalisation of the expansion of the original measure.
Our approach is based on BPHZ renormalisation via multi-indices, a combinatorial structure extremely successful for describing scalar-valued singular SPDEs. 
We propose the multi-indices counterpart to the Hopf algebraic program initiated by Connes and Kreimer for the renormalisation of  Feynman diagrams. This new Hopf algebra also bridges the gap between the analysis of ``pre-Feynman diagrams" and traditional diagrammatic methods.
The construction relies on a well-chosen extraction-contraction coproduct of multi-indices equipped with a correct symmetry factor. We illustrate our method by the $ \Phi^4 $ measure example. 
%
\end{abstract}

\setcounter{tocdepth}{1}
\tableofcontents

\section{ Introduction }
\subsection{Some history on the algebraic side of BPHZ renormalisation}\label{sec:history}
The BPHZ renormalisation \cite{BP, Hepp, Zim}, named after Bogoliubov, Parasiuk, Hepp, and Zimmermann, provides a systematic framework for renormalising Feynman diagrams, often expressed combinatorially via forest formulae.
In the late 1990s, Connes and Kreimer introduced a novel algebraic perspective on this renormalisation procedure, based on Hopf algebras constructed from Trees and Feynman diagrams.
The Hopf algebra of Feynman diagrams is defined via an extraction–contraction coproduct, which systematically identifies and extracts divergent subgraphs \cite{CK2}. 
In parallel, a Hopf algebra structure on trees, equipped with a coproduct performing admissible cuts, encodes the hierarchy of nested subdivergences \cite{CK1}.
Notably, this latter structure also appears in the field of numerical analysis \cite{Butcher72}, where it models the composition of numerical schemes.
Another Hopf algebra on trees with extraction-contraction coproduct plays a significant role in the context of substitution for B-series see \cite{CHV05,CHV07} and see \cite{CEM} for the co-interaction between admissible cuts and extraction-contraction coproducts. 
Moreover, this Hopf algebra approach is one of the cornerstones of the theory of Regularity Structures, introduced by Martin Hairer in \cite{reg} to rigorously treat singular Stochastic Partial Differential Equations (SPDEs). 
 In \cite{BHZ}, two co-interacted Hopf algebras on decorated trees are derived for recentering and for renormalising iterated stochastic integrals, respectively, which extend those from quantum field theory (QFT) and numerical analysis. 
The convergence of these renormalised iterated integrals has been proved in \cite{CH16} and, together with \cite{BCCH}, one solves a large class of singular SPDEs via the theory of Regularity Structures. 
 
Connes and Kreimer \cite{CK2} proved that, for a measure in QFT, the renormalised value and counterterms satisfy the Birkhoff decomposition. However, their work relies on the dimensional regularisation technique.
Subsequently, Hairer translates the algebraic results of  \cite{BHZ} to Feynman diagrams in  \cite{Hairer18}. He reformulated Connes and Kreimer's theory to a general context where dimensional regularisation is not available. 
In his formulation, the BPHZ counterterms are defined through a ``BPHZ character" \eqref{eq:BPHZ} constructed from a ``twisted antipode" \eqref{eq:antipode1} of a Hopf algebra on Feynman diagrams. This Hopf algebra is  equipped with a Connes-Kreimer type extraction-contraction coproduct. Furthermore, the author shows a group structure on the BPHZ character, analogous to the characters group in \cite{CK2}.
The same type of Hopf algebra has been recently understood via a deformation in  \cite{BM22} and a post-Lie product in \cite{BYK23} in the context of singular SPDEs. Similar Hopf algebra also appears in \cite{BS}, where it is used to analyse the local error of resonance-based schemes for dispersive PDEs.

\subsection{Results and the motivation of multi-indices formulation} \label{subsec:intro}
One considers the following Lagrangian
\begin{equs}
\mathcal{L} = \CL_0-\CL_\alpha	 = -\left(\frac{1}{2}\Vert \nabla \phi(x)\Vert^2 +\frac{1}{2}\phi(x)^2 + \CL_{\alpha}\right)
\end{equs}
where $\phi$ is a classic field and $x$ take values in the space $\Lambda$ (for example $d$-dimensional torus). Usually the perturbation $\CL_\alpha$ is a polynomial of $\phi$ and its derivatives (the same setting as in \cite{CK2}). 
Here, for the simplicity in the notation, we choose the perturbation as the polynomial of $\phi$
\begin{equs}
	\CL_\alpha = \sum_{k \in \mathbb{N}} \alpha_k \phi(x)^k.
\end{equs}
Nevertheless, results in this paper can be applied to the general case involving derivatives of $\phi$ by adding some ``decorations" to the combinatorial object ``multi-indices".
Denote $Z(0)$ as the partition function 
 of the measure with Lagrangian $\CL_0$. In other words,
 \begin{equs}
 	\mu_0 = \frac{1}{Z(0)}e^{\int_{\Lambda}\CL_0(x) dx }d\phi = \frac{1}{Z(0)} e^ {-\int_{\Lambda}\frac{1}{2}\Vert \nabla \phi(x)\Vert^2 +\frac{1}{2}\phi(x)^2 dx }d\phi.
 \end{equs}
Then, the partition function of the measure involving perturbation is
 \begin{equs}
 	Z(\alpha) = Z(0) \mathbb{E}\left[e^{- \int_\Lambda \sum_{k \in \mathbb{N}} \alpha_k \phi(x)^k dx }\right].
 \end{equs}
Our study is based on the renormalisation of
\begin{equs}\label{eq:unrenormalised _cumulant}
	\mathbb{E}\left[e^{- \int_\Lambda \sum_{k \in \mathbb{N}} \alpha_k \phi(x)^k dx }\right]
\end{equs}
 and it is implemented as below. 
\begin{itemize}
	\item Replace the ill-posed products of $\phi$ by its Wick product \cite{WB08}. The $k$-th Wick power can be represented as Hermite polynomials $H_k$.
	\item Cumulant expand the  moment-generating function $\mathbb{E}\left[e^{- \int_\Lambda \sum_{k \in \mathbb{N}} \alpha_k H_k(X(x)) dx }\right]$.
	 By the Linked Cluster Theorem \cite{connect1,connect2,connect3} (see also \cite[Section 3.3]{connect4} and \cite[Propisition 3.1]{BK23} for the explanation in the Hopf algebra setting), the cumulant expansion are projected to the valuation \eqref{eq:valuation} of connected Feynman diagrams.
	\item Apply the BPHZ method to the Feynman diagrams and get the renormalised cumulant series.
\end{itemize} 

However, the challenge is to determine the renormalised measure whose cumulant expansion of the partition function matches the renormalised cumulant series of the original measure. In other words, one seeks values of $\gamma_k$ such that
\begin{equs}
\mathbb{E}\left[\exp\left(-\int_\Lambda\sum_{k \in \mathbb{N}} (\alpha_k+\gamma_k) H_{k}(X(x)) dx\right) \right]
\end{equs}
has the cumulant equals the BPHZ renormalised cumulant of \eqref{eq:unrenormalised _cumulant}.
The main result of this paper is establishing a self-contained and systematic approach that can help find the renormalised measure through the renormalised cumulant expansion. This approach is equivalent to the classical BPHZ renormalisation procedure via Feynman diagrams.

To explain why multi-indices are the suitable combinatorial objects in solving the problem discussed above, we first examine the nature of the combinatorial factors appearing in the cumulant expansion. 
 The Linked Cluster Theorem mainly states that the moment-generating function of $Z(\alpha) / Z(0)$ can be obtained by projecting moments in the power series to the valuation $\Pi_\F$ \eqref{eq:valuation} of connected Feynman diagrams, i.e.,
	\begin{equs} \label{cumulant}
		&\log	\mathbb{E}\left[\exp\left(-\int_\Lambda\sum_{k \in \mathbb{N}} \alpha_k H_{k}(X(x)) dx\right)\right] 
		\\=& 
		\sum_{n=2}^\infty \frac{1}{n!}\proj \left(\mathbb{E}\left[
		\left(\sum_{k \in \mathbb{N}} -\alpha_k \int_{\Lambda}H_{k}(X(x)) dx\right)^n
		\right]\right)
	\end{equs}
	where $\proj$ is the projection and the connected Feynman digrams are formed by regarding each $H_k$ as a vertex with $k$ half-edges and by then pairing all half-edges from different vertices in all possible ways. For example
	 \begin{equs}
		\proj \left(\mathbb{E}\left[
		\left( -\alpha_3 \int_{\Lambda}H_{3}(X(x)) dx\right)^2
		\right]\right) = (-\alpha_3)^2	3! \Pi_\F \left(\FGIII\right)
	\end{equs}
	where $3!$ is the number of diagrams isomorphic to $\FGIII$ which can be formed by pairing two vertices each having $3$ half-edges.
	The disconnected diagram $\FGIII \FGIII$ is not allowed to appear in $\proj \left(\mathbb{E}\left[
	\left( -\alpha_3 \int_{\Lambda}H_{3}(X(x)) dx\right)^4
	\right]\right) $.
	Then, by the multinomial theorem, we can rewrite \eqref{cumulant} as 
	\begin{equs}  \label{eq:cumulat_diagram}
	 \sum_{\Gamma \in \mathbf{F}}\frac{\Upsilon^\alpha_\F (\Gamma)N(\Gamma)}{\hat{S}_\F(\Gamma)} \Pi_\F(\Gamma)
	\end{equs}
	where $\mathbf{F}$ denotes the set of connected Feynman diagrams and $\Pi_\F$ is a suitable valuation map on this space (formula in \eqref{eq:valuation}). 
	The coefficients
	\begin{equs} 
		\Upsilon^\alpha_\F (\Gamma) = \prod_{k \in \mathbb{N}} (-\alpha_k)^{\beta(\Gamma, k)}, \quad \quad
		\hat{S}_\F(\Gamma) = \prod_{k \in \mathbb{N}} \beta(\Gamma, k)!
	\end{equs}
	where $\beta(\Gamma, k)$ is the number of vertices connected with $k$ edges in the Feynman diagram $\Gamma$.
	By interpreting each vertex in a Feynman diagram with $k$ edges as having $k$ half-edges, the entire diagram can be understood as arising from pairing of all half-edges. $N(\Gamma)$ is the number of possible coupling that lead to the Feynman diagrams isomorphic to $\Gamma$.
\begin{remark}
	Lemma \ref{prop:cumulant_F} will prove that \eqref{eq:cumulat_diagram} is the diagrammatic representation of the moment-generating function rigorously.
\end{remark}

Then, the problem concerning finding the renormalised measure from renormalised cumulant expansion can be illustrated by the following diagram.
{\scriptsize
	\begin{equs}
		\begin{tikzcd}
			&\mathbb{E}\left[\exp\left(-\int_\Lambda\sum_{k \in \mathbb{N}} \alpha_k H_{k}(X(x)) dx\right)\right] \arrow[r, "\text{\shortstack{cumulant\\expansion}}"] \arrow[ddd, swap, "\text{{\Large ?}}"] 
			& \sum_{\Gamma \in \mathbf{F}}\frac{\Upsilon^\alpha_\F (\Gamma)N(\Gamma)}{\hat{S}_\F(\Gamma)} \Pi_\F(\Gamma) 
			\arrow[r, "\text{BPHZ}"]  
			& \sum_{\Gamma \in \mathbf{F}}\frac{\Upsilon^\alpha_\F (\Gamma)N(\Gamma)}{\hat{S}_\F(\Gamma)} \Pi_\F(\hat{ M}_{\F}\Gamma)
			\\&&&
			\\&&&
			\\ & \mathbb{E}\left[\exp\left(-\int_\Lambda\sum_{k \in \mathbb{N}} (\alpha_k+\gamma_k) H_{k}(X(x)) dx\right) 
			\arrow[rruuu, swap,"\text{cumulant expansion}"] \right]
			&
			&
		\end{tikzcd}
	\end{equs}
}
where $\hat{M}_\F$ is the BPHZ procedure in \cite{Hairer18} and its formula is in \eqref{eq:BPHZ_martin}.

To find the value of $\gamma$, one has to match the cumulant of renormalised measure
\begin{equs}
	\mathbb{E}\left[\exp\left(-\int_\Lambda\sum_{k \in \mathbb{N}} (\alpha_k+\gamma_k) H_{k}(X(x)) dx\right) \right]
	=
	\sum_{\Gamma' \in \mathbf{F}}\frac{\Upsilon^{\alpha+\gamma}_\F (\Gamma')N(\Gamma')}{\hat{S}_\F(\Gamma')} \Pi_\F(\Gamma')
\end{equs}
with the renormalised cumulant. However, it is not a easy task as the structure of diagrams can be very involved in the renormalisation procedure $\hat{M}_\F$ and the cumulant expansion. Indeed, one has to solve complicated equations of $\gamma_k$ determined by $\beta(\Gamma', k)$ and many combinatorial structures. Since the cumulant series has infinite terms indexed by diagrams, it is essential to ask if the solution $\gamma_k$ exists and if it is unique. 

Nevertheless, one important observation gives the hint: The multi-index $\beta(\cdot,k)$ suffices to describe all the combinatorial factors $\Upsilon^\alpha_\F (\cdot)$, $N (\cdot)$, $\hat{S}_{\F}(\cdot)$ and hence the cumulant expansion \eqref{eq:cumulat_diagram} as well as the renormalised cumulant. Moreover, cumulant expanding the renormalised measure amounts to pairing the half-edges connected to vertices representing the Hermite polynomials. Therefore, the BPHZ renormalisation procedure $\hat{M}_\M$ through a Hopf algebra of those vertices offers hope to avoid the combinatorial factors associated with the geometric configuration of Feynman diagrams.
While these vertices can be paired via their half-edges to form Feynman diagrams, $\hat{M}_\M$ itself does not rely on the diagrammatic structure. 
Consequently, we can circumvent the intermediate equations of $\gamma_k$ and derive $\gamma_k$ directly from the renormalisation procedure $\hat{M}_\M$ (see Theorem \ref{prop:exponential}).

In the literature, there are works such as \cite{pre_FD} studying the aforementioned sets of vertices, referred to as pre-Feynman diagrams. However, no Hopf algebra has been formulated on these structures.
Recently, the first attempt at vertex-based Hopf algebraic renormalisation appeared in \cite{BK23}, where the authors studied the $\Phi^4$ measure in QFT. 
 They regard the vertices  with $4$ half-edges representing the $4$-th order Wick product as a variable $X$ and those with $2$ half-edges as the variable $Y$. Then they described the Hopf algebra by the monomials of $X$ and $Y$.
Nevertheless, their approach could not be extended to other models, as their derivation of the coproduct (a key component of the renormalisation procedure) relies on the simplicity of divergent subdiagrams, allowing the coproduct to be obtained by comparison with the extraction-contraction coproduct of Feynman diagrams.
From their work, one observes that the pre-Feynman diagrams can be described
combinatorially using ``multi-indices", a tool introduced in \cite{OSSW, LOT, LO23, LOTT} for scalar singular SPDEs, and later applied in rough path theory \cite{L23} and numerical analysis \cite{BEH}.
Specifically, a vertex with $k$ half-edges is identified with an abstract variable $z_k$,
and the pre-Feynman diagrams correspond to monomials
\begin{equs}
	z^{\beta} : = \prod_{k \in \mathbb{N}} z_k^{\beta(k)}
\end{equs}
defined in \eqref{def:multi_indices} where $\beta(k) \in \mathbb{N}$ is the frequency of the vertex with $k$ half-edges in a Feynman diagram. 
Although the renormalisation of SPDEs using multi-indices has been studied in \cite{OSSW, BL23},  their methods are based on the postulation of counterterms of SPDEs and thus cannot be adopted to our problem. Moreover, there is no explicit formula for the multi-indices extraction-contraction coproduct. The explicit formula was found in \cite{BH24} and later another formulation was built in the third arXiv version of \cite{GMZ24}. 
For notational simplicity, we restrict ourselves in this paper to a renormalisation without higher-order Taylor series terms. The construction in the general case is quite similar as it will be based on a coproduct introduced in \cite{BH24} on extended multi-indices for singular SPDEs.

Another result of this paper is that the Hopf algebra via multi-indices constructed in QFT renormalisation fills the algebraic gap between  pre-Feynman diagrams and  Feynman diagrams. The link between these two combinatorial objects will be illustrated by some morphisms. Moreover, this Hopf algebra helps compare renormalisation in QFT with similar procedures in related fields (introduced in Section \ref{sec:history}) such as SPDEs and numerical analysis via rooted trees and multi-indices. Additionally, since the the transformation between multi-indices and Feynman diagrams is essentially a ``pairing half-edges" problem, the algebraic results in this paper could apply to study the same type of problems provided that suitable valuation maps are available to model the problems.
For example, it seems that the map $\CP$ (Definition \ref{def:map_P}) can also be used to study symmetries in the configuration models (see \cite{config1, config2, config3}) in the field of random graphs.

\subsection{Overview of the methodology}
The main idea of the paper is to establish the multi-indices-based BPHZ renormalisation and to show that it is equivalent to the diagram-based one. The summary of the procedure is shown in the first line of the graph below.
{\scriptsize
		\begin{equs}
			\begin{tikzcd}
				&\mathbb{E}\left[\exp\left(-\int_\Lambda\sum_{k \in \mathbb{N}} \alpha_k H_{k}(X(x)) dx\right)\right] \arrow[r, "\text{\shortstack{cumulant\\expansion}}"] \arrow[ddd, swap, "\text{Theorem \ref{prop:exponential}}"] 
				& \sum_{z^\beta \in \mathbf{M}}\frac{\Upsilon^\alpha_\M (z^\beta)}{\hat{S}_\M(z^\beta)} \Pi_\M(z^\beta) \arrow[r, "\text{ BPHZ}"]  
				& \sum_{z^\beta \in \mathbf{M}}\frac{\Upsilon^\alpha_\M (z^\beta)}{\hat{S}_\M(z^\beta)} \Pi_\M(\hat{ M}_{\M}z^\beta)
				\\&&&
				\\&&&
				\\ &  \mathbb{E}\left[\exp\left(-\int_\Lambda\sum_{k \in \mathbb{N}} (\alpha_k+\gamma_k) H_{k}(X(x)) dx\right) 
				\arrow[rruuu, swap,"\text{cumulant expansion}"] \right]
				&
				&
			\end{tikzcd}
		\end{equs}}
where $\mathbf{M}$ is the set of multi-indices and other combinatorial factors are the counterparts to those of Feynman diagrams. We will show that the cumulant expansion represented by multi-indices equals the one  represented by Feynman diagrams, and that the two renormalisations are equivalent.  Finally, Theorem \ref{prop:exponential} provides the method of obtaining the renormalised measure from the renormalised cumulant.

	Let us summarise the main strategies.
	We firstly define a ``counting map" $\Phi$ \eqref{eq:Phi_map} linking the multi-indices and Feynman diagrams by counting the frequencies of vertices in Feynman diagrams
	\begin{equs}
		\Phi(\Gamma) := \prod_{ v \in \CV(\Gamma)} z_{k(v)}
	\end{equs}
	where $k(v)$ is the number of edges attached to the vertex $v$ and $\CV$ is the vertex set of the Feynman diagram $\Gamma$. As an illustration, consider the following example
		\begin{equs}
			\Phi(\FGIIIplus) = z_2z_4^2
		\end{equs}
		as $\FGIIIplus$ has one vertex with arity $2$ and two vertices with arity $4$.
	The equivalence between two representations of cumulant expansion and the equivalence between two renormalisations rely on the fact that the map $\CP$, the adjoint of the map $\Phi$ under suitable inner product,
	lifts multi-indices to a sum of all possible Feynman diagrams by pairing of half-edges represented by $z^\beta$ . Indeed, the map $\CP$ gives the coefficient $N(\cdot)$ in \ref{eq:cumulat_diagram} when one forms the Feynman diagram by pairing half edges to represent the cumulant, which
	 will be shown in Proposition \ref{prop:Sta_Orb}. 
	 This leads to the equivalence between the valuation of multi-indices and that of Feynman diagrams. Specifically, $\Pi_\M(z^\beta) =  \Pi_\F \circ \CP (z^\beta)$ (see Corollary \ref{coro:commute}), which shows that the equivalence in valuations actually comes from the ``pairing half-edges" problem.
	
	Another key element of showing the equivalence between the renormalisation via multi-indices and diagram-based BPHZ renormalisation (Theorem \ref{them:main}) is to find a simultaneous insertion product of Feynman diagrams $\star_\F$ which has the morphism property with the simultaneous insertion product of multi-indices $\star_\M$ 
\begin{equs} \label{morphism_star}
	\Phi(	\tilde{\prod}_{\F \, i=1}^n \Gamma_i \star_{\F} \bar\Gamma ) = \Phi(\tilde{\prod}_{i=1}^n \Gamma_i) \star_\M \Phi(\bar\Gamma)
\end{equs}
and satisfies the following adjoint relation
\begin{equs} \label{adjoint}
	\langle \Delta_\F \Gamma , \tilde{\prod}_{\F \, i=1}^n \Gamma_i  \otimes \bar\Gamma \rangle = 	\langle \Gamma, \tilde{\prod}_{\F \, i=1}^n \Gamma_i  \star_\F \bar\Gamma \rangle .
\end{equs}
Here $\tilde{\prod}_{\F \, i=1}^n \Gamma_i $ is a disjoint union of connected Feynman diagrams, $\Phi$ \eqref{eq:Phi_map} is the map translating Feynman diagrams to their unpaired vertex set -- multi-indices, and the inner product is defined in \eqref{inner_product_diagram}.
$\Delta_{\F}$ is the reduced extraction-contraction coproduct central to the renormalisation $\hat{ M}_\F$.
The suitable $\star_\F$ (Definition \ref{def:insertion_FD}) is obtained from the Guin-Oudom \cite{Guin1, Guin2} construction for the insertion of Feynman diagrams which is a pre-Lie product and is a modification of the one in the  literature \cite{K2006, VS2007} to satisfy both \eqref{morphism_star} and \eqref{adjoint}.
The morphism property \eqref{morphism_star} will be proved through the Proposition \ref{prop:morphism}  while the adjoint relation \eqref{adjoint} will be justified in Theorem \ref{them: adjoint_Feyman}.

Let us outline the paper by summarising the sections. In Section \ref{Sec::2}, we start by recalling the definition of Feynman diagrams and how they are involved in
the BPHZ renormalisation, where the reduced extraction-contraction coproduct $\Delta_\F$ \eqref{eq:EC_F} and its adjoint operator, the simultaneous insertion $\star_{\F}$ (Def \ref{def:insertion_FD}), play important roles.

In Section \ref{Sec::3}, we start with motivating multi-indices by introducing the``pairing half-edges" problem which leads to the correspondence between multi-indices and Feynman diagrams. This relation will be described through map $\Phi$ \eqref{eq:Phi_map}
and its adjoint $\CP$ (Definition \ref{def:map_P}). We define the symmetry factor of multi-indices \eqref{eq:SF_multiindices} as the cardinal of the isomorphism group of their corresponding Feynman diagrams, a key quantity in proving of all equivalences in this paper. Finally, we derive the explicit formulae of simultaneous insertion product of multi-indices and its adjoint reduced extraction-contraction coproduct.

The Theorem \ref{them:main} describing the equivalence between two renormalisations, $\hat{M}_\F$ and $\hat{M}_\M$, is proved in Section \ref{sec:BPHZ}.
The commutative diagram \eqref{main_diagram} illustrates the equivalence. The proof is essentially achieved by the adjoint and dual relations between products and coproducts and morphism properties of the map $\Phi$. 
As the symmetry factors are used as the underlying pairing of the inner products, it is necessary to discuss the choice of symmetry factor of Feynman diagrams such that \eqref{adjoint} holds (Theorem \ref{them: adjoint_Feyman}) and such that $\CP$ lifts multi-indices to a sum of all possible Feynman diagrams by pairing of half-edges.
We finish this section with main Theorem~\ref{prop:exponential} which provides a new Hopf algebraic understanding of the renormalisation of measures in QFT. The combinatorial description used (multi-indices) reflects well the products in the Lagrangian of the measures.

Finally, in Section \ref{sec:Example} we will revisit the example of renormalising $\Phi^4_3$ measure studied in \cite{BK23} and show that the combinatorial objects used there can be formally described as multi-indices and the map $\CP$ used there is a special case of the map $\CP$ defined in this paper. Some examples of Theorem \ref{them:main2} will also be given. $\Phi^4_3$  is used here for the illustration. However, our results can be applied to more general Lagrangians as described in Section \ref{subsec:intro}.

\subsection*{Acknowledgements}

{\small
	Y. B. gratefully acknowledges funding support from the European Research Council (ERC) through the ERC Starting Grant Low Regularity Dynamics via Decorated Trees (LoRDeT), grant agreement No.\ 101075208.
	Views and opinions expressed are however those of the author(s) only and do not necessarily reflect those of the European Union or the European Research Council. Neither the European Union nor the granting authority can be held responsible for them.
} 

%

\section{Feynman diagrams and BPHZ renormalisation} \label{Sec::2}
Feynman diagrams are commonly used combinatorial tools in renormalising SPDEs and corresponding invariant measures. The BPHZ renormalisation amounts to extracting divergent subgraphs from a Feynman graph and recursively applying a twisted antipode to the extracted parts. In this section, we will briefly review Feynman diagrams and the BPHZ renormalisation. Finally, we will conclude this section by introducing the extraction-contraction coproduct of Feynman diagrams and its adjoint simultaneous insertion product, which are the core of the Hopf algebra appearing in the BPHZ renormalisation.  
\subsection{Feynman diagrams}
%
%
In this paper, a Feynman diagram is a connected graph containing vertices connected by edges. Some vertices may have legs (external edges). 
A diagram with no legs is called a vacuum diagram. We start with the case without any decorations on vertices or edges. Feynman diagrams are used to represent the integral of a product of some kernel $K: \Lambda \rightarrow \mathbb{R}^d$ and some test functions $\psi$. Precisely, we have the following valuation map.
For a Feynman diagram $\Gamma = (\CV, \CE, \CL)$, where $\CV$ is the set of vertices in $\Gamma$, $\CE$ is its edge set, and $\CL$ is its leg set, its valuation is
\begin{equs}
	\Pi_{\F}(\Gamma) = \int_{\Lambda^{|\CV|}} \prod_{e \in \CE} K(x_{e_{+}}- x_{e_{-}}) \psi(x_{l_1}, ..., x_{l_{|\CL|}}) dx
\end{equs}
in which  $e_{+}$ and  $e_{-}$ are two vertices connected by the edge $e$. $x_{l_i}$ is the variable encoded by the vertex where the leg $l_i \in \CL$ is attached to. If the kernel $ K $ is symmetric i.e., $K(-x) = K(x)$ then the orientation on edges is irrelevant, which means that
it does not matter which vertex is chosen to be ${e_{+}}$ and which to be $e_{-}$. If one has different kernels, the decorations of edges can be used to distinguish them (details can be found in \cite{Hairer18}). 

To be concise, in this paper, we consider the simplest case that we only have one symmetric kernel $K$ and there is no leg (vacuum diagram) but our results can be generalised to more complex settings by putting decorations or orientations. In other words, we consider the integrals
\begin{equs} \label{eq:valuation}
	\Pi_{\F}(\Gamma) = \int_{\Lambda^{\CV}} \prod_{e \in \CE} K(x_{e_{+}}- x_{e_{-}}) dx.
\end{equs}
Furthermore, we do not allow half-edges (or sometimes called free legs), and self-connected vertices in a Feynman diagram are forbidden, which means each edge must be attached to two vertices like $\FGI$.
For example, $\Xpairing $ and $\FGLoop$ are not considered as Feynman diagrams in this paper. Note that here ``free legs" (half-edges) are different from ``legs". They do not represent the test functions. Instead, Graphs with free legs are intermediate elements to form a Feynman diagram.  
One has to pair two free legs from different vertices (to prevent self-connection) to form an edge.
The following are some examples of the valuation of Feynman diagrams considered in this paper.
\begin{example}
\begin{equs}
	&\Pi_{\F}(\text{$\FGIII$}) = \int_{\Lambda^{2}} K(x_1- x_{2})^3 dx_1dx_2
	\\&\Pi_{\F}(\text{$\FGIItwice$}) =  \int_{\Lambda^{3}} K(x_1- x_{2})^2 K(x_2- x_{3})^2dx_1dx_2dx_3
	\\&\Pi_{\F}(\text{$\FGVI$}) = \int_{\Lambda^{3}} K(x_1- x_{2})^2 K(x_2- x_{3})^2K(x_3- x_{1})^2dx_1dx_2dx_3
	\\&\Pi_{\F}(\text{$\FGIIIplus$})=\int_{\Lambda^{3}} K(x_1- x_{2}) K(x_2- x_{3})^3K(x_3- x_{1})dx_1dx_2dx_3
\end{equs}
\begin{equs}
\end{equs}
\end{example}

In this paper, we consider an important notion of equivalence for Feynman diagrams, namely that of isomorphic Feynman diagrams, defined below.
	\begin{definition}(isomorphic Feynman diagrams)
		\\
		Two Feynman diagrams $\Gamma_1, \Gamma_2 $ are isomorphic (equivalent) to each other if they have the same connection of edges between vertices provided ignoring the labels of edges and vertices. 
		We use the notation $\Gamma_1 \cong \Gamma_2$.
	\end{definition}
\begin{example}
	\begin{equs}
		\ISOL \cong \ISOR
	\end{equs}
\end{example}
For clarity, let us introduce some notations.
\begin{itemize}
	\item  $\mathbf{F}$ is the space of connected Feynman diagrams without decorations, orientations, legs, free legs, or self-connection. For isomorphic graphs, we treat them as the same single element in this space. 
	\item  A forest is a unconnected union of connected Feynman diagrams, corresponding to the symmetric algebra generated by connected Feynman diagrams. $\tilde{\prod}_\F$ (or alternatively $\mu_\F$ or $\tilde{\bullet}_\F$) denotes the commutative forest product which is a juxtaposition of connected Feynman diagrams.
	\item $\CF$ is the space of forests of connected Feynman diagrams in $\mathbf{F}$.
	\item $\emptyset_\F$ is the empty forest of Feynman diagrams, which means the number of Feynman diagrams in the forest is $0$. It is the identity element of the forest product. Therefore we say $\emptyset_\F \in \CF$.
	\item $\langle \mathbf{F} \rangle$ is the linear span of  connected Feynman diagrams in $\mathbf{F}$. 
	\item $\langle \CF \rangle$ is the linear span of  forests formed by connected Feynman diagrams in $\mathbf{F}$, in which $\emptyset_\F$ is the unit.
	\end{itemize}
Since a forest of connected Feynman diagrams is essentially disconnected Feynman diagrams with several connected pieces, the valuation preserves the forest product
\begin{equs}
\Pi_\F\left(\tilde{\prod}_{\F \, i=1}^n \Gamma_i \right) := \prod_{i=1}^n \Pi_\F(\Gamma_i).
\end{equs}

\subsection{Degree of a Feynman diagram and BPHZ renormalisation character}
When the kernel is smooth enough, integrals $\Pi_{\F}(\Gamma)$ are well-defined. However, suppose we have a kernel $K$ that is smooth everywhere except at the origin where its behaviour exhibits a Hölder degree $\ell < 1$ ($\ell$ can be negative). This means that for any $x \in \Lambda$ near the origin, there exists a constant $C$ such that
\begin{equs}
	|K(x)| \le C |x|^\ell.
\end{equs}
Then, the valuation of Feynman diagrams for this kernel can blow up, and the renormalisation is necessary. The first step of the renormalisation is to spot where the divergent parts are encoded in the Feynman diagram. Thanks to \cite[Proposition 2.3]{Hairer18}, this can be solved by assigning a degree to each Feynman diagram
\begin{equs}\label{eq:Fdeg}
\deg \Gamma : =  \ell |\CE| + d(|\CV| - 1)
\end{equs}
where $|\CE|$ and $|\CV|$ representing the number of edges and the number of vertices are cardinals of sets $\CE$ and $\CV$, respectively. We also have to introduce the term ``subgraph" before illustrating how to detect the divergence. A subgraph $\bar\Gamma$ of a Feynman diagram $\Gamma (\CV, \CE)$ has vertices set $\bar \CV \subset \CV$ and edges set $\bar \CE \subset \CE$ where  each of its vertices is incident to at least one edge in $\bar\CE$, and we denote $\bar\Gamma \subset \Gamma$. Notice that $\bar\Gamma \in \CF$ is not necessarily connected which means that it can be a forest consisting of multiple individual sub-Feynman diagrams.
\begin{proposition} \label{prop:divergence_degree}
	For a Feynman diagram $\Gamma$, if $\deg \bar\Gamma > 0$ for every subgraph $\bar\Gamma \subset \Gamma$ then
	the integral $\Pi_\F(\Gamma)$ defined in \eqref{eq:valuation} is bounded. 
\end{proposition}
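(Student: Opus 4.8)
The plan is to prove this by the classical position-space power-counting (Hepp-sector) estimate; it is an instance of the standard convergence criterion (Weinberg's theorem, in position space), and I would follow the bookkeeping in the spirit of \cite{Hairer18}.

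First I would reduce to the substantive case. If $\ell\ge 0$ then, since $K$ is smooth away from the origin and $\Lambda$ is compact, $K$ is bounded, so $\Pi_\F(\Gamma)$ is a finite integral over a compact set and there is nothing to prove (and indeed $\deg\bar\Gamma\ge d(|\bar\CV|-1)\ge d>0$ for every subgraph automatically). So assume $\ell<0$. Using the local Hölder bound together with smoothness of $K$ on $\Lambda\setminus\{0\}$ and compactness, I would fix a constant $C$ with $|K(x)|\le C\,\rho(x)^{\ell}$ for all $x\in\Lambda$, where $\rho$ denotes the torus distance to $0$; then $|\Pi_\F(\Gamma)|\le C^{|\CE|}\,I(\Gamma)$ with
\[
	I(\Gamma)=\int_{\Lambda^{\CV}}\ \prod_{e\in\CE}\rho(x_{e_+}-x_{e_-})^{\ell}\,dx,
\]
so it suffices to show $I(\Gamma)<\infty$. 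By translation invariance on the torus I would pin one vertex at $0$, and by a partition of unity reduce to a local statement in which $\Lambda$ is replaced by a small ball of $\mathbb{R}^{d}$ and $\rho$ by the Euclidean norm; the only possible source of divergence is then the blow-up of the integrand near coincidences of vertices.

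The core step is the Hepp-sector decomposition. I would split the integration domain according to the dyadic order of magnitude of all the pairwise distances $|x_u-x_w|$: at each scale $2^{-n}$ this groups the vertices into clusters of mutual distance $\lesssim 2^{-n}$, and the nested collection of clusters that occurs, together with the order in which they form, defines a rooted hierarchy $\mathcal{U}$ on $\CV$. For a fixed hierarchy there are only finitely many combinatorial types of sector, so it is enough to bound one of them. On a sector, $\prod_e\rho(x_{e_+}-x_{e_-})^{\ell}$ is bounded above by $\prod_e 2^{-\ell\,n(e)}$, where $n(e)$ is the scale at which the endpoints of $e$ first lie in a common cluster, while the volume of the slice on which a cluster $U$ forms at scale $2^{-n_U}$ is of order $2^{-d\,n_U(c(U)-1)}$, $c(U)$ being the number of immediate sub-clusters of $U$. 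Summing the resulting geometric series in the scale variables $n_U$ (ordered compatibly with $\mathcal{U}$) and telescoping, the sector integral is finite exactly when, for every cluster $U$ occurring in $\mathcal{U}$,
\[
	\ell\,|\CE(U)|+d\,(|U|-1)\ =\ \deg\bigl(\Gamma|_U\bigr)\ >\ 0,
\]
where $\CE(U)$ is the set of edges of $\Gamma$ with both endpoints in $U$ and $\Gamma|_U$ the induced subgraph. Each such $\Gamma|_U$ (after dropping the isolated vertices it may contain) is a subgraph $\bar\Gamma\subset\Gamma$, so the hypothesis makes every sector integral finite, and summing the finitely many sectors gives $I(\Gamma)<\infty$, i.e.\ $|\Pi_\F(\Gamma)|<\infty$.

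The delicate part is the combinatorial bookkeeping compressed into ``telescoping'': (i) checking that a fixed hierarchy-shape produces only finitely many sectors, so that the genuine summation is the convergent geometric one over dyadic scales (strictness of $\deg\bar\Gamma>0$ is exactly what rules out the borderline logarithmic divergences); (ii) the telescoping identity identifying the exponent that controls the integration of the scale attached to a cluster $U$ with $\deg(\Gamma|_U)$ — this is where the term $d(|\CV|-1)$ of \eqref{eq:Fdeg}, which records the volume of relative coordinates, is used; and (iii) the clusters that induce subgraphs with isolated vertices, which contribute inequalities of the form $\deg(\bar\Gamma)+d\cdot(\#\text{isolated vertices})>0$ that are automatically implied by the hypothesis since $d>0$. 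An equivalent route, meeting the same obstacle, is induction on $|\CV|$: integrate out a vertex of minimal degree using the standard kernel estimate $\int\rho(x-y)^{a}\rho(y-z)^{b}\,dy\lesssim\rho(x-z)^{a+b+d}$ when $a+b+d<0$ (and a bound that is uniform, up to a logarithm, when $a+b+d\ge0$), then check that the contracted diagram still satisfies the hypothesis. I would cite Weinberg's theorem and the power-counting estimates of \cite{Hairer18} for the details.
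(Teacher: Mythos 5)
Your proposal is correct and coincides with the argument the paper relies on: the paper gives no proof of its own here, but simply defers to \cite[Proposition 2.3]{Hairer18} and Theorem 5.2.3 of \cite{B22}, and your Hepp-sector (multiscale clustering) power-counting sketch is exactly the proof behind those references. The reduction to $\ell<0$, the identification of the cluster exponent with $\deg(\Gamma|_U)$ via telescoping, and the observation that clusters inducing isolated vertices only strengthen the inequality are all the right bookkeeping points.
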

This proposition is a special case of \cite[ Proposition 2.3]{Hairer18} as we do not have legs. Theorem 5.2.3. in \cite{B22} gives a proof of this case. We then define the space $\mathbf{F}_{-}$  of Feynman diagrams with non-positive degrees, the space $\CF_{-}$ of non-empty forests in which each individual Feynman diagram has a non-positive degree, and the linear span of $\CF_{-}$ with unit $\emptyset_\F$ as $\langle \CF_- \rangle$. 

The BPHZ renormalisation is implemented through the twisted antipode  $ \CA_\F : \langle \CF_- \rangle \rightarrow \langle \CF \rangle $  defined recursively as
\begin{equs}\label{eq:antipode1}
	& \CA_\F(\emptyset_\F) := \emptyset_\F, \quad \CA_\F( \Gamma_1 \tilde{\bullet}_\F \ \Gamma_2) := \CA_\F( \Gamma_1 ) \tilde{\bullet}_\F \CA_\F( \Gamma_2 ) \quad \text{ for any } \Gamma_1, \Gamma_2 \in \mathcal{F}_-,
	\\&
	\CA_\F(\Gamma) := -\Gamma - \sum_{\CF_{-} \ni \bar\Gamma \subsetneq \Gamma} \CA_\F(\bar \Gamma)\tilde{\bullet}_\F(\Gamma / \bar \Gamma),
	 \quad \text{ for any } \Gamma \in \mathbf{F}_-,
\end{equs}
where  $\Gamma / \bar \Gamma$ is obtained by extracting each individual Feynman diagram in the forest $\bar \Gamma$ from $\Gamma$ and contracting each to a single vertex. The map $\CA_\F$ is called a ``twisted" antipode because we only apply it recursively to the forest with non-positive degree $\bar\Gamma$ but not to remaining $\Gamma / \bar \Gamma$.
 Below, we provide one example 
\begin{equs}
	\Gamma = \FGIIIplus, \quad \bar \Gamma = \FGIII, \quad \Gamma / \bar \Gamma = \YII.
\end{equs}
  The lower vertex in $\Gamma / \bar \Gamma$ is obtained by contracting $\bar{\Gamma}$ and extracting it to a single node. Then, the BPHZ renormalisation character can be defined by 
\begin{equs} \label{eq:BPHZ}
	\Pi^{\BPHZ}_\F (\Gamma) = \Pi_\F \left(\CA_\F(\Gamma)\right).
\end{equs}
Let us stress again that the formula for $\CA_\F$ could be more involved if one has to face a quite high degree of divergence. In this work, we limit ourselves to the case without going to higher orders. The general case requires the introduction of monomials and derivatives on the kernel $K$, which means decorations on vertices and edges are needed.

\subsection{Extraction-contraction coproduct, BPHZ renormalisation, and simultaneous insertion product of Feynman diagrams}
One can observe that in the twisted antipode \eqref{eq:antipode1}, the recursive part is controlled by the reduced extraction-contraction coproduct defined below.
\begin{equs} \label{eq:EC_F}
	\Delta_{\F} \Gamma =  \sum_{\CF_{-} \ni \bar\Gamma \subsetneq \Gamma} \bar\Gamma \otimes \Gamma / \bar\Gamma 
\end{equs}
where the sum and $ \Gamma / \bar\Gamma $ are the same as in \eqref{eq:antipode1}. 
The twisted antipode can be rewritten as 
\begin{equs}
		& \CA_\F(\emptyset_\F) := \emptyset_\F,
	\\&
	\CA_\F(\Gamma) = -\Gamma - \mu_\F\circ(\CA_\F \otimes \id)\circ\Delta_{\F} \Gamma,
	\quad \text{ for any } \Gamma \in \mathbf{F}_-.
\end{equs}
The extraction-contraction coproduct is the reduced version plus the primitive terms.
\begin{equs}
\Delta^{\! -}_{\F} \Gamma =  \emptyset_\F \otimes \Gamma + \Gamma \otimes \emptyset_\F +
 \sum_{\CF_{-} \ni \bar\Gamma \subsetneq \Gamma} \bar\Gamma \otimes \Gamma / \bar\Gamma.
\end{equs}
It is straightforward to verify from this definition that $\Delta^{\! -}_{\F}$ preserves the forest product:
\begin{equs}
	\Delta^{\! -}_{\F} \left(\Gamma_1 \tilde{\bullet}_{\F} \Gamma_2 \right) = \Delta^{\! -}_{\F} (\Gamma_1) \tilde{\bullet}_{\F} \Delta^{\! -}_{\F}(\Gamma_2) 
\end{equs}
where the forest product on the right-hand side has to be understood in the following way:
\begin{equs}
	\left( \Gamma_1 \otimes \Gamma_2 \right) \tilde{\bullet}_{\F} \left( \Gamma_3 \otimes \Gamma_4 \right) = \left( \Gamma_1  \tilde{\bullet}_{\F} \Gamma_3 \right) \otimes \left( \Gamma_2 \tilde{\bullet}_{\F} \Gamma_4 \right)  
\end{equs}
for $ \Gamma_i \in \mathcal{F} $.
Moreover, equivalently we can write $\CA_\F$ in the following form 
\begin{equs}
	\mu_\F\circ(\CA_\F \otimes \id)\circ\Delta^{\! -}_{\F} \Gamma	= 0
\end{equs}
for $ \Gamma $ not empty and
where $ \mu_\F $ is defined by
\begin{equs}
	\mu_\F( \Gamma_1\otimes \Gamma_2) = \Gamma_1\tilde{\bullet}_{\F}  \Gamma_2.
\end{equs}
The BPHZ renormalisation procedure with the character \eqref{eq:BPHZ} defined via $\CA_{\F}$ is
\begin{equs}\label{eq:BPHZ_martin}
	\hat{M}_\F := \left(\Pi_{\F} (\CA_{\F}\cdot) \otimes \id \right) \Delta_{\F}^{-} .
\end{equs}
The following example shows how to compute $\Delta_{\F}$, $\Delta^{\! -}_{\F}$ and $\CA_{\F}$.
\begin{example}
	Consider the graph $\FGIIIplus$ and 
	suppose that we are dealing with the $\Phi^4_3$ model ($d = 3$), which means the kernel is the Green's function and it behaves like $-1$-Hölder ($\ell = -1$). Then 
	\begin{equs}
		\deg \Gamma : =  - |\CE| + 3(|\CV| - 1)
	\end{equs}
	Then the only subgraph of $\FGIIIplus$ with a non-positive degree  is $\FGIII$. Therefore we have
	\begin{equs}
		\Delta_{\F} \left(\FGIIIplus\right) & = \FGIII \otimes \YII
	\\
		\Delta^{\! -}_{\F} \left(\FGIIIplus\right) & = \emptyset_\F \otimes \FGIIIplus + \FGIIIplus \otimes \emptyset_\F + \FGIII \otimes \YII
	\\
		\CA_\F \left(\FGIIIplus\right) &= - \FGIIIplus + \FGIII \tilde\bullet_\F \YII
	\end{equs}
\end{example}
\vspace{10pt}

We then need to introduce some operators on Feynman diagrams for passing to the dual side of the reduced extraction-contraction coproduct.
\begin{itemize}
	\item Firstly, we define the operator $C_v(\Gamma)$ which cuts the vertex $v \in \CV(\Gamma)$ down and transforms edges connecting other vertices $\bar{v}_1,\ldots, \bar{v}_{k_v} \in \CV(\Gamma)$ with $v$ to ``free legs" (half-edges) attached to $\bar{v}_1,\ldots, \bar{v}_{k_v}$, where $k_v$ is the number of edges connected to $v$ in $\Gamma$ and $\bar{v}_1,\ldots, \bar{v}_{k_v}$ are not necessarily distinct.
	\item Secondly, 
	for $\hat \Gamma$, a graph with half-edges attached to $\Gamma \in \mathbf{F}$, we associate the map $\mathfrak{L}_{\hat \Gamma}$ which sends a half-edge $l_i$ in $\hat \Gamma$ to the vertex $v_i$ where it is attached to.
	\item Finally, for any $\hat \Gamma$ which is a graph obtained by adding $k$ half-edges to a Feynman diagram, we define
	 the operator
	\begin{equs}
		\CG \left( \hat \Gamma, \Gamma_1\right)  := \sum_{u_1,...,u_{k} \in \CV(\Gamma_1)} \hat \Gamma  \curvearrowright^{ \prod_{i=1}^{k} (\mathfrak{L}_{\hat \Gamma}(l_i),u_i)}\Gamma_1
	\end{equs}
	acting on $\Gamma_1 \in \mathbf{F}$ and $\hat \Gamma$,
	where $ \hat \Gamma \curvearrowright^{(\mathfrak{L}_{\hat \Gamma}(l_i),u_i)}\Gamma_1$ means connecting $\hat \Gamma$ and $\Gamma_1$ by adding an edge between the vertices $\mathfrak{L}_{\hat \Gamma}(l_i)$ and $u_i$ and eliminating the half-edge $l_i$. Note that $u_1, \ldots, u_k $ are not necessarily distinct.
\end{itemize}
Finally, we have the ``insertion product"  $\triangleright$ ,  a pre-Lie product on
Feynman diagrams.
\begin{definition}(insertion product) \\
	For any $\Gamma_1, \Gamma_2 \in \mathbf{F}$, the insertion product  $\triangleright: \mathbf{F} \otimes \mathbf{F} \mapsto \langle \mathbf{F} \rangle$ is
	\begin{equs} \label{eq:def_insertion}
		\Gamma_1 \triangleright \Gamma_2 := \sum_{v \in \CV(\Gamma_2)}\CG \left(C_v(\Gamma_2), \Gamma_1\right).
	\end{equs}
\end{definition}
This product can be explained as a two-step algorithm.  Firstly, we choose one vertex $v$ in $\Gamma_2$ and cut it down, converting all the edge that were previously attached to $v$ to half-edges. We call remaining part (with those half-edges)  in $\Gamma_2$ a ``trunk". Secondly, we connect each half-edge in the trunk to one vertex in $\Gamma_1$.
\begin{remark}
	Our definition of insertion is a modification of the one in the literature (see \cite{K2006} and \cite{VS2007}), as we want an insertion whose Guin-Oudom construction is the adjoint of the reduced extraction-contraction coproduct under the inner product defined via the symmetry factor of Feynman diagrams (will be later introduced in Section \ref{subsec:symmetry}).
\end{remark}

	However, if we consider one specific equation, only certain types of vertices will show up in Feynman diagrams encoding the expansions of solutions or invariant measures. For example, in the renormalisation of $\Phi_3^4$ model, all the vertices in the Feynman diagrams must have arity $2$ or $4$, which means each vertex has $2$ or $4$ edges connected to it. Therefore, when considering a specific equation, it is restricted to Feynman diagrams with certain types of vertices, and this restriction is called a rule.
\begin{definition} \label{def:rules}
(Rules of insertion)
\\
A rule $\CR$ is a set of vertex types
\begin{equs}
	\CR := \{v: k_v \in \CK(\CR) \}
\end{equs}
where $k_v$ is the arity of the vertex $v$, the number of edges attached to $v$, and $\CK(\CR)$ is the set of possible arities. Applying a rule to an insertion product means projecting the insertion defined by \eqref{eq:def_insertion} to
\begin{equs} 
	\Gamma_1 \triangleright_{\CR} \Gamma_2 := \sum_{v \in \CV(\Gamma_2)}\CG_\CR \left(C_v(\Gamma_2), \Gamma_1\right)
\end{equs}
where $\Gamma_2$ obeys the rule $\CR$ and 
	\begin{equs}
	\CG_{\CR} \left(\hat \Gamma, \Gamma_1\right)  := \sum_{u_1,...,u_{k} \in \CV_\CR(\Gamma_1)} \hat \Gamma  \curvearrowright^{ \prod_{i=1}^{k} (\mathfrak{L}_{\hat \Gamma}(l_i),u_i)}\Gamma_1.
\end{equs}
Here $\CV_\CR(\Gamma_1)$ means, after grafting free legs to $\Gamma_1$, each vertex in the Feynman diagram $\hat \Gamma  \curvearrowright^{ \prod_{i=1}^{k} (\mathfrak{L}_{\hat \Gamma}(l_i),u_i)}\Gamma_1$ satisfies the rule $\CR$.
\end{definition}
We use the example below to illustrate how the insertion product works.
\begin{example} 
We consider the $\Phi_3^4$ model, which means the rule is
\begin{equs}
	\CR = \{v: k_v \in \{2,4\} \}.
\end{equs} 
Let us consider the insertion $\Gamma_1 \triangleright_{\CR} \Gamma_2$ with rule $\CR$,
$\Gamma_1 = \FGIII$, and $\Gamma_2 = \YII$.
For clarity, we index vertices in $\Gamma_1 = \FGIIIi$ and index vertices as well as edges in $\Gamma_2 = \YIIi$. Then,
\begin{equs}
	\sum_{v \in \CV(\Gamma_2)}\CG_\CR \left(C_v(\Gamma_2), \Gamma_1\right) = \CG_\CR \left(C_{v_1}(\Gamma_2), \Gamma_1\right)  +\CG_\CR \left(C_{v_2}(\Gamma_2), \Gamma_1\right). 
\end{equs}
We now compute the term $\CG_\CR \left(C_{v_2}(\Gamma_2), \Gamma_1\right)$. Firstly, 
\begin{equs}
	C_{v_2}(\Gamma_2) = \YIIt 
\end{equs}
where $\ell_1$ is from cutting $e_1$ and $\ell_2$ is from $e_2$.
Then, according to the rule, we have only two possible grafting $\{(\ell_1, u_1), (\ell_2, u_2)\}$ and $\{(\ell_1, u_2), (\ell_2, u_1)\}$. Thus,
\begin{equs}
	\CG_\CR \left(C_{v_2}(\Gamma_2), \Gamma_1\right) = \FGIIIplusL + \FGIIIplusR
\end{equs}
where $e'_i$ and $e''_i$ are obtained from grafting the half-edge $\ell_i$ for $i = 1, 2$.
The insertion $\CG_\CR \left(C_{v_1}(\Gamma_2), \Gamma_1\right)$ is the same as that for $v= v_2$. Finally, since all the obtained terms are isomorphic to each other, we have 
\begin{equs}
\Gamma_1 \triangleright_{\CR} \Gamma_2 = 4	\FGIIIplus.
\end{equs}
\end{example}

The following-defined`` simultaneous insertion" product obtained by the Guin--Oudom  construction (see \cite{Guin1} \cite{Guin2}) on  $\triangleright$ will later be shown in Theorem~\ref{them: adjoint_Feyman} to be the adjoint dual of the reduced extraction-contraction coproduct $\Delta_\F$.
\begin{definition} \label{def:insertion_FD}
	For any Feynman diagrams $\Gamma_1,\ldots, \Gamma_n, \bar\Gamma \in \mathbf{F}$, the simultaneous insertion product $\star_\F : \CF \times \mathbf{F} \mapsto \langle \mathbf{F}\rangle$ is
\begin{equs}
	\tilde{\prod}_{\F \, i=1}^n \Gamma_i \star_{\F} \bar\Gamma 
	:=
	 \sum_{v_1 \ne v_2 \ne \ldots \ne v_n \in \CV(\bar\Gamma)}\CG\left[C_{v_n}\left(\ldots \CG\left[ C_{v_2}\left(\CG\left[C_{v_1}(\bar\Gamma), \Gamma_1\right]\right),\Gamma_2\right] \ldots \right) , \Gamma_n \right].
\end{equs}
\end{definition}
This product can be interpreted as the following algorithm: Firstly, we choose $n$ distinct vertices in $\bar\Gamma$ as spots to insert $\Gamma_1,\ldots,\Gamma_n$ respectively. ``Distinct" means that each node $v$ can be used at most once as the position of insertion. Secondly, for each $i$, graft the free legs obtained by cutting $v_i$ down to vertices of $\Gamma_i$. The order of the vertices does not matter for defining this insertion.
We can also rewrite the insertion product as 
\begin{equs}
	\tilde{\prod}_{\F \, i=1}^n \Gamma_i \star_{\F} \bar\Gamma = \sum_{\Gamma \in \mathbf{F}} I(\tilde{\prod}_{\F \,i=1}^n \Gamma_i, \bar\Gamma, \Gamma)\Gamma
\end{equs}
where $ I( \tilde{\prod}_{\F \, i=1}^n \Gamma_i, \bar\Gamma, \Gamma)  $ is the number of insertions of $\tilde{\prod}_{\F \, i=1}^n \Gamma_i$ into $\bar\Gamma$ that give $\Gamma$ up to isomorphism.

\section{Multi-indices} \label{Sec::3}

Feynman diagrams appear in the literature to represent problems that can be described as pairing half-edges of some vertices. For example,  the Wick power $:X^n:$ of a Gaussian variable $X$ can be expressed as the Hermite polynomial  
\begin{equs}
	H_n(X) = \left(-\mathrm{Var}(X)\right)^n \exp \left({\frac{X^2}{2\mathrm{Var}(X)}}\right) \frac{d^n}{dX^n}\exp \left({\frac{-X^2}{2\mathrm{Var}(X)}}\right)
\end{equs}
which can be later graphically represented by a vertex with $n$ half edges. When taking covariances of integral of Hemite polynomials, we essentially use those vertices to form Feynman diagrams by pairing their half-edges. The analytical foundation of this process is
the \cite[Lemma 4.2.9]{B22} cited below.
\begin{lemma}\label{lemma:expectation}
	(\cite[Lemma 4.2.9]{B22})
	
	Let $X$ and $Y$ be jointly Gaussian centred random variables. Then for any $n,m \ge 0$ one has
	\begin{equs}
		\mathbb{E}\left[H_n(X)H_m(Y)\right] = \delta_m^n n!\mathbb{E}\left[XY\right]^n
	\end{equs}
	where $\delta_m^n $ is the usual Dirac's delta.
\end{lemma}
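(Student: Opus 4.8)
The plan is to pass through the exponential generating function of the variance-weighted Hermite polynomials and then compare power-series coefficients. \emph{Step 1 (generating function).} First I would establish the identity
\[
\sum_{n\ge 0}\frac{t^n}{n!}H_n(X)=\exp\!\Big(tX-\tfrac{t^2}{2}\mathrm{Var}(X)\Big),
\]
which follows directly from the Rodrigues-type formula for $H_n$ recalled just before the statement: writing $v=\mathrm{Var}(X)$ one has $H_n(X)=v^{n/2}\mathrm{He}_n(X/\sqrt v)$ with $\mathrm{He}_n$ the standard probabilists' Hermite polynomials, and the claim reduces to the classical identity $\sum_n s^n\mathrm{He}_n(x)/n!=e^{sx-s^2/2}$. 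In particular this gives $\mathbb{E}[H_n(X)]=\delta_0^n$.

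\emph{Step 2 (multiply and take expectations).} Next I multiply the generating functions for $X$ and $Y$ with independent dummy variables $s$ and $t$ and take the expectation:
\[
\sum_{n,m\ge 0}\frac{s^n t^m}{n!\,m!}\,\mathbb{E}\big[H_n(X)H_m(Y)\big]
=\mathbb{E}\Big[\exp\!\Big(sX+tY-\tfrac{s^2}{2}\mathrm{Var}(X)-\tfrac{t^2}{2}\mathrm{Var}(Y)\Big)\Big].
\]
The interchange of the double sum and the expectation is legitimate for $s,t$ in a neighbourhood of the origin, where absolute convergence of the series follows from the Gaussian moment bounds on $H_n(X)$ and $H_m(Y)$; convergence near the origin is all that is needed for the coefficient extraction below.

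\emph{Step 3 (Gaussian Laplace transform and coefficient comparison).} Since $(X,Y)$ is jointly centred Gaussian, $sX+tY$ is centred Gaussian with variance $s^2\mathrm{Var}(X)+2st\,\mathbb{E}[XY]+t^2\mathrm{Var}(Y)$, so $\mathbb{E}[e^{sX+tY}]=\exp\!\big(\tfrac12 s^2\mathrm{Var}(X)+st\,\mathbb{E}[XY]+\tfrac12 t^2\mathrm{Var}(Y)\big)$. The quadratic correction terms cancel exactly and the right-hand side above collapses to $\exp(st\,\mathbb{E}[XY])=\sum_{k\ge 0}(st)^k\,\mathbb{E}[XY]^k/k!$. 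Comparing the coefficient of $s^n t^m$ on both sides yields $\mathbb{E}[H_n(X)H_m(Y)]=0$ when $n\ne m$ and $\mathbb{E}[H_n(X)H_n(Y)]=n!\,\mathbb{E}[XY]^n$, which is the asserted formula $\mathbb{E}[H_n(X)H_m(Y)]=\delta_m^n\, n!\,\mathbb{E}[XY]^n$.

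The only step that is not purely mechanical is Step 1 — tying the Rodrigues formula in the excerpt to the generating function — together with the convergence bookkeeping in Step 2; both are standard. As an alternative avoiding generating functions, one can induct on $\min(n,m)$: from $H_n'(X)=nH_{n-1}(X)$ and the three-term recurrence $H_{n+1}(X)=XH_n(X)-n\,\mathrm{Var}(X)H_{n-1}(X)$, Gaussian integration by parts in the variable $X$ gives $\mathbb{E}[H_{n+1}(X)H_m(Y)]=m\,\mathbb{E}[XY]\,\mathbb{E}[H_n(X)H_{m-1}(Y)]$, and the base case $\mathbb{E}[H_n(X)H_0(Y)]=\mathbb{E}[H_n(X)]=\delta_0^n$ then closes the induction.
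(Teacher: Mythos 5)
Your proof is correct. The paper does not prove this lemma itself --- it simply cites it from \cite{B22} (Lemma 4.2.9) --- and your argument via the generating function $\sum_n t^nH_n(X)/n!=\exp(tX-\tfrac{t^2}{2}\mathrm{Var}(X))$, the Gaussian Laplace transform, and coefficient comparison is exactly the standard proof of this orthogonality relation (the integration-by-parts induction you sketch as an alternative is equally valid), so there is nothing to flag.
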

Now we rewrite $H_n(X)$ as $ \psi_{n,X}(x)$ and $H_m(Y)$ as $ \psi_{m,Y}(y)$. Note that, by the definition of the Hermite polynomial, $H_1 (X)= X$. Thus, we use $X(x)$ representing $ \psi_{1,X}(x)$.
Then, by Lemma \ref{lemma:expectation}, the expectation 
\begin{equs}
	\mathbb{E}\left[\int_\Lambda \psi_{n,X}(x) dx \int_\Lambda \psi_{m,Y}(y) dy\right] &= 
	\int_{\Lambda^2} \mathbb{E}\left[\psi_{n,X}(x)\psi_{m,Y}(y)\right] dx dy
	\\&=
		\int_{\Lambda^2} \mathbb{E}\left[\psi_{n,X}(x)\psi_{m,Y}(y)\right] dx dy
	\\&=
	\delta_m^n n! \int_{\Lambda} \mathbb{E}\left[X(x)Y(y)\right]^n dxdy.
\end{equs}
Suppose that the covariance $\mathbb{E}\left[X(x)Y(y)\right]$ can be described as a Kernel $K(x-y)$ for $x,y$ in the space $\Lambda$. We can further rewrite
the expectation as
\begin{equs}\label{eq:expectation}
	\mathbb{E}\left[\int_\Lambda \psi_{n,X}(x) dx \int_\Lambda \psi_{n,Y}(y) dy\right] =
\delta_m^n n! \int_{\Lambda} K(x-y)^n dxdy.
\end{equs}
Observe that the integral on the right-hand side is a valuation of Feynman diagram with two vertices and $n$ edges in between and $n!$ is the number of pairings of the half edges to form the edges.
Specifically, the valuation $\Pi_{\F}$ translates an edge to $\mathbb{E}[H_1(X_{e_+})H_1(X_{e_-})] = \mathbb{E}[X_{e_+}(x_+)X_{e_-}(x_-)] $, the covariance of variables encoded by the edge's two end vertices, which can be further  calculated as the integral of kernels as defined in \eqref{eq:valuation}. For example,
\begin{equs}
	\Pi_\F \left(\FGIII\right) = \int_{\Lambda^2} \mathbb{E}[X(x)Y(y)]^4dxdy = \int_{\Lambda^2} K(x-y)^4 dxdy
\end{equs}
where the left vertex represents the variable $X$ and the right one denotes $Y$.
A generalisation of Lemma \ref{lemma:expectation} shows that
\begin{equs}\label{eq:pairing_problem}
	\mathbb{E}\left[ \prod_{i} \int_{\Lambda}H_{k_i}(X(x_i))dx_i\right] = \sum_{F \in \CF} N(F)\Pi_{\F} \left( F\right)
\end{equs}
 where $\int_{\Lambda}H_{k_i}(X(x_i))dx_i$ is regarded  as a vertex with $k_i$ half-edges to be paired with half-edges attached to other vertices, and 
 $N(F)$ is the number of distinct pairings of half-edges to form forests of Feynman diagrams isomorphic to $F$.
From the nature of the ``pairing half-edges" problem discussed above, it follows that
 the vertex set suffices to describe $N(F)$ and that all the diagram structures are inherited implicitly. Thus, $\Pi_{\F}(F)$ can also be calculated through the vertex set.
Therefore, it is natural to study the set of vertices which is called ``pre-Feynman diagrams" in \cite{pre_FD}. When the graphical structures are irrelevant, algebra on the vertex set is more suitable.
Later (in Proposition \ref{prop:Sta_Orb}) we will show that there exists a map $\CP$ which helps understand the pairing and gives $N$ under suitable symmetry factors for Feynman diagrams and for multi-indices respectively. This $\CP$ is one of the key elements in showing the equivalence of renormalisations $\hat{M}_\F$ and $\hat{M}_\M$.

In this section, we will show that the vertex set can be formally described by the combinatorial object ``multi-indices". 
Then, in section \ref{sec:BPHZ}, it will be proven that the BPHZ renormalisation via multi-indices can be implemented directly without pairing their half-edges to form Feynman diagrams. 
Moreover, we will use the pairing problem \eqref{eq:pairing_problem} as an example to discuss the choice of valuation map $\Pi_{\M}$ \eqref{eq:valuation_M}. One can generalise the algebraic results to other ``pairing half-edges" problem provided that reasonable valuation maps exist.

\subsection{Correspondence between Multi-indices and Feynman diagrams}
A multi-indice is defined through abstract variables $(z_k)_{k \in \mathbb{N}}$.  Over these variables, a multi-index $ \beta : \mathbb{N} \rightarrow \mathbb{N}  $ is a map with finite support and we define a multi-indice $z^\beta$ as the monomial
\begin{equs} \label{def:multi_indices}
	z^{\beta} : = \prod_{k \in \mathbb{N}} z_k^{\beta(k)},
\end{equs}
where ``finite support" means $\beta$ has finitely many non-zero entries.
Given a graph $ G = (\mathcal{V},\mathcal{E}) $, we associate the variable $z_{k(v)}$ to one vertex $v \in \CV$ such that the number of edges attached to $v$ is $k(v)$ and we call $k(v)$ the arity of $v$. Then $ \beta(k) \in \mathbb{N}$, one entry of $\beta$, counts the number of vertex with arity $k$ in the Feynman diagram, which is described by the following defined ``counting map" $\Phi:  \mathbf{F}  \mapsto \mathbf{M}$
\begin{equs} \label{eq:Phi_map}
	\Phi(\Gamma) := \prod_{ v \in \CV(\Gamma)} z_{k(v)}
\end{equs}
where $\mathbf{M}$ is the set of non-empty multi-indices $z^\beta$, i.e., the entries of $\beta$ cannot be all $0$. We also require multi-indices in the space $\mathbf{M}$ to be able to form meaningful Feynman diagrams by pairing free legs. This implies that $\mathbf{M}$ is the image of the counting map. One can ignore this restriction since, by definition in \eqref{eq:valuation_M} and Lemma \ref{lemma:expectation}, the valuation $\Pi_{\M}$ is $0$ for those multi-indices which do not satisfy this condition. The restriction is stated here purely for clarity.

The following notations will also be used in the sequel
\begin{itemize}
	\item $\tilde{\prod}_\M$ (or alternatively $\mu_\M$ or $\tilde{\bullet}_\M$) denotes the commutative forest product which is a juxtaposition of multi-indices, which indicates that there is no order in the forest, i.e., $z^\alpha \tilde{\bullet}_\M z^\beta \equiv z^\beta \tilde{\bullet}_\M z^\alpha$ and they are regarded as the same element. 
	\item $\CM$ is the space of forests of multi-indices in $\mathbf{M}$.
	\item $\emptyset_\M$ is the empty forest of multi-indices, which means the cardinal of this collection of non-empty multi-indices is $0$. It is the identity element of the forest product. Consequently, we say $\emptyset_\M \in \CM$.
	\item $\langle \mathbf{M} \rangle$ is the space of linear span of  non-empty multi-indices.
	\item $\langle \CM \rangle$ is the space of linear span of  forests of multi-indices. 
\end{itemize}
 Therefore, $(\langle \CM \rangle, \tilde{\bullet}_\M)$ is the symmetric algebra of multi-indices with identity $\emptyset_\M$.
 
To further simplify the notation, we use $\tilde{z}^{\tilde{\beta}}$ denoting a forest of multi-indices, where $\tilde{\beta}$ is a collection of populated multi-indices without order.  For $\tilde{\beta} = \{\beta_1,\ldots,\beta_n\}$
\begin{equs}
	\tilde{z}^{\tilde{\beta}} := 	\tilde{\prod}_{\M \, j=1}^n z^{\beta_j}.
\end{equs}
The repetition of individual populated multi-indices in $\tilde{\beta}$ is allowed, which means $z^{\beta_1},\ldots,z^{\beta_n}$ are not necessarily distinct. For a forest of Feynman diagrams we define the counting map
\begin{equs}
	\Phi \left(\tilde\prod_{\F \, i=1}^n \Gamma_i\right) := \tilde \prod_{\M \, i = 1}^n \Phi(\Gamma_i).
\end{equs}
\vspace{6pt}

Similar to Feynman diagrams, the renormalisation through multi-indices can also be studied via Hopf algebras whose grade is given by the norm of a multi-indice $z^\beta$, which is the sum of each element of $\beta$
	\begin{equs}
		|z^\beta| := \sum_{k \in \mathbb{N}} \beta(k).
	\end{equs}
We define a symmetry factor different from the ones in the literature, which describes the adjoint relation between the product and its dual coproduct.
\begin{equs} \label{eq:SF_multiindices}
		S_\M(z^{\beta}) := \prod_{k \in \mathbb{N}} \beta(k)! (k!)^{\beta(k)}
\end{equs}
This symmetry factor consists of two parts: $\beta(k)!$ and $(k!)^{\beta(k)}$. From a Feynman diagram  point of view, $\beta(k)!$ counts the permutation of vertices with the same arity, and $(k!)^{\beta(k)}$ amounts to the cardinal of permuting half-edges attached to the same node. 
The above-mentioned values can be extended to a forest of multi-indices. The norm is defined as
\begin{equs}
	\left|\tilde{\prod}_{\M \, i=1}^m z^{\beta_i}\right| := \sum_{i=1}^m |z^{\beta_i}|.
\end{equs}
The symmetry factor for a forest $\tilde{z}^{\tilde{\beta}} =\tilde{\prod}_{\M \, i=1}^m \left(z^{\beta_i}\right)^{\tilde{\bullet_{\M}}r_i}$	with distinct $\beta_i$ is
\begin{equs}  \label{symmetry_factor_2}
	S_\M \left(\tilde{z}^{\tilde{\beta}}\right) := \prod_{i=1}^m r_i!\left(S_\M(z^{\beta_i})\right)^{r_i}.
\end{equs} 
\vspace{6pt}

For two forests of multi-indices $\tilde{z}^{\tilde{\alpha}}$ and  $\tilde{z}^{\tilde{\beta}}$, the pairing (do not confuse it with the pairing of half-edges) is defined as the inner product
\begin{equs} \label{eq:inner_multiindices}
	\langle \tilde{z}^{\tilde{\alpha}}, \tilde{z}^{\tilde{\beta}}\rangle := S_\M(\tilde{z}^{\tilde{\alpha}})\delta^{\tilde{\alpha}}_{\tilde{\beta}},
\end{equs}
where $\delta^{\tilde{\alpha}}_{\tilde{\beta}}$ is the usual Dirac's delta. Then we can define a map $\CP$ which is proved in proposition \ref{prop:adjoint_P_Phi} to be the adjoint of the counting map $\Phi$ under this inner product .
\begin{definition} \label{def:map_P}
	The map $\CP: \mathbf{M} \rightarrow \langle \mathbf{F}\rangle$ is defined as
	\begin{equs}
		\CP(z^\beta) := \sum_{\Gamma: \Phi(\Gamma) = z^\beta} \frac{S_\M(z^\beta)}{S_\F(\Gamma)} \Gamma.
	\end{equs}
	where $S_\F(\Gamma)$ is the symmetry factor of the Feynman diagram $\Gamma$ and its explicit formula will be discussed in Section \ref{subsec:symmetry}.
\end{definition}
Since the support of $\beta$ is finite, the map $\CP$ as a sum is well-defined.
\begin{proposition} \label{prop:injective_P}
	The map $\CP$ is injective.
\end{proposition}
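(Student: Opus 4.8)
The plan is to show that $\CP$ is injective by exhibiting a left inverse, or more precisely by showing that distinct basis elements $z^\beta$ are sent to linearly independent elements of $\langle \mathbf{F}\rangle$. The crucial observation is that the counting map $\Phi$ partitions the set $\mathbf{F}$ of connected Feynman diagrams into fibers $\Phi^{-1}(z^\beta)$, and that $\CP(z^\beta)$ is supported entirely on the fiber over $z^\beta$. First I would note that for $\beta \neq \beta'$ the fibers $\{\Gamma : \Phi(\Gamma) = z^\beta\}$ and $\{\Gamma : \Phi(\Gamma) = z^{\beta'}\}$ are disjoint subsets of $\mathbf{F}$, since a Feynman diagram determines a unique vertex-arity profile. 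Therefore $\CP(z^\beta)$ and $\CP(z^{\beta'})$ are supported on disjoint subsets of the basis $\mathbf{F}$ of $\langle \mathbf{F}\rangle$.

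Next I would check that $\CP(z^\beta)$ is genuinely nonzero for each $z^\beta \in \mathbf{M}$. By the very definition of $\mathbf{M}$ as the image of the counting map $\Phi$ (as stated in the paragraph following \eqref{eq:Phi_map}), there is at least one Feynman diagram $\Gamma$ with $\Phi(\Gamma) = z^\beta$, so the sum defining $\CP(z^\beta)$ has at least one term. Each coefficient $\frac{S_\M(z^\beta)}{S_\F(\Gamma)}$ is a ratio of positive integers (a symmetry factor is a cardinality of a group, hence positive), so no cancellation can occur among the terms of $\CP(z^\beta)$, and $\CP(z^\beta) \neq 0$.

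Combining these two points: if $\sum_i c_i \CP(z^{\beta_i}) = 0$ for finitely many distinct $\beta_i$ and scalars $c_i$, then since the supports of the $\CP(z^{\beta_i})$ are pairwise disjoint and each $\CP(z^{\beta_i})$ is a nonzero combination of distinct diagrams, restricting the equation to the fiber $\Phi^{-1}(z^{\beta_i})$ forces $c_i \CP(z^{\beta_i}) = 0$, hence $c_i = 0$ for every $i$. This shows $\{\CP(z^\beta) : z^\beta \in \mathbf{M}\}$ is linearly independent, which is exactly the injectivity of the linear extension of $\CP$.

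The only point requiring any care — and the natural candidate for the "main obstacle" — is the finiteness and well-definedness of $\CP(z^\beta)$ as an element of $\langle \mathbf{F}\rangle$ (not a formal infinite sum): one must confirm that the fiber $\Phi^{-1}(z^\beta)$ is finite. This is immediate because $|\beta| = \sum_k \beta(k)$ bounds the number of vertices and hence, together with the arities recorded by $\beta$, the total number of half-edges, so there are only finitely many ways to pair half-edges into a connected diagram up to isomorphism. With that remark in place — and it is essentially already granted in the line "Since the support of $\beta$ is finite, the map $\CP$ as a sum is well-defined" — the argument above is complete, and I would expect the write-up to be short.
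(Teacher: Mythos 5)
Your argument is correct and follows essentially the same route as the paper's own proof, which likewise rests on the observation that the fibers $\{\Gamma : \Phi(\Gamma) = z^\beta\}$ are pairwise disjoint for distinct $z^\beta$, so the images $\CP(z^\beta)$ have disjoint supports in the basis $\mathbf{F}$. Your additional checks — that each $\CP(z^\beta)$ is nonzero (nonempty fiber, positive coefficients) and that each fiber is finite — are points the paper leaves implicit, and they tighten the argument without changing its substance.
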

\begin{proof}
	This is simply because while summing up $\Gamma: \Phi(\Gamma) = z^\beta$ it can be seen from the definition of the counting map $\Phi$ that the sets of  Feynman diagrams $\{\Gamma: \Phi(\Gamma) = z^\beta\}$ are different for different $z^\beta$.
\end{proof}
\begin{proposition} \label{prop:adjoint_P_Phi}
		Under the inner product of multi-indices defined in \eqref{eq:inner_multiindices}  and the following defined inner product of Feynman diagrams
	\begin{equs} 	\label{inner_product_diagram}
		\langle\Gamma_1, \Gamma_2\rangle := \delta^{\Gamma_1}_{\Gamma_2}S_\F(\Gamma_1),
	\end{equs} 
	$\CP$ and $\Phi$ are adjoint , i.e., $
		\langle \Phi(\Gamma), z^\beta \rangle = \langle \Gamma,  \CP(z^\beta)\rangle $.
\end{proposition}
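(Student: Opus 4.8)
The plan is to verify the identity directly on basis elements by unfolding the two inner products. Both pairings — the one \eqref{eq:inner_multiindices} on (forests of) multi-indices and the one \eqref{inner_product_diagram} on Feynman diagrams — are defined on basis elements and extended bilinearly, and $\CP(z^\beta)$ is by construction a \emph{finite} linear combination of connected Feynman diagrams (finiteness of the support of $\beta$ was noted right after Definition \ref{def:map_P}). Hence it suffices to fix a single connected diagram $\Gamma \in \mathbf{F}$ and a single multi-index $z^\beta \in \mathbf{M}$, compute $\langle \Phi(\Gamma), z^\beta\rangle$ and $\langle \Gamma, \CP(z^\beta)\rangle$, and check that they coincide.

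First I would treat the left-hand side. For a connected $\Gamma$, $\Phi(\Gamma)$ is the single monomial $z^{\beta'}$ whose exponent $\beta'(k)$ counts the arity-$k$ vertices of $\Gamma$, so it is a basis element of $\langle \mathbf{M}\rangle$. By \eqref{eq:inner_multiindices},
\[
  \langle \Phi(\Gamma), z^\beta\rangle = S_\M\bigl(\Phi(\Gamma)\bigr)\,\delta^{\Phi(\Gamma)}_{z^\beta},
\]
which equals $S_\M(z^\beta)$ when $\Phi(\Gamma) = z^\beta$ and vanishes otherwise.

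Next I would treat the right-hand side. Using Definition \ref{def:map_P}, bilinearity, and the pairing \eqref{inner_product_diagram},
\[
  \langle \Gamma, \CP(z^\beta)\rangle
  = \sum_{\Gamma'\,:\,\Phi(\Gamma') = z^\beta} \frac{S_\M(z^\beta)}{S_\F(\Gamma')}\,\langle \Gamma, \Gamma'\rangle
  = \sum_{\Gamma'\,:\,\Phi(\Gamma') = z^\beta} \frac{S_\M(z^\beta)}{S_\F(\Gamma')}\,\delta^{\Gamma}_{\Gamma'}\,S_\F(\Gamma).
\]
If $\Phi(\Gamma) = z^\beta$, the Kronecker delta isolates the single term $\Gamma' = \Gamma$, the factors $S_\F(\Gamma)$ cancel, and we are left with $S_\M(z^\beta)$; if $\Phi(\Gamma)\neq z^\beta$, then $\Gamma$ is not among the summation indices, every $\delta^{\Gamma}_{\Gamma'}$ vanishes, and the sum is $0$. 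Thus both sides equal $S_\M(z^\beta)\,\delta^{\Phi(\Gamma)}_{z^\beta}$, proving $\langle \Phi(\Gamma), z^\beta\rangle = \langle \Gamma, \CP(z^\beta)\rangle$, and the general case follows by bilinearity.

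There is essentially no genuine obstacle here; the only point worth a sentence is conceptual rather than technical, namely that the coefficient $S_\M(z^\beta)/S_\F(\Gamma)$ in the definition of $\CP$ is precisely the weight needed so that, when paired against $\Gamma$ in the $S_\F$-weighted inner product, the Feynman-diagram symmetry factor cancels and the multi-index symmetry factor $S_\M(z^\beta)$ re-emerges — which is exactly the reason this particular normalisation of $\CP$ (and of the symmetry factors in Section \ref{subsec:symmetry}) was chosen.
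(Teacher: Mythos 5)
Your proof is correct and follows essentially the same route as the paper: split into the cases $\Phi(\Gamma)\neq z^\beta$ (both sides vanish) and $\Phi(\Gamma)=z^\beta$ (the factor $S_\F(\Gamma)$ cancels against the normalisation in $\CP$, leaving $S_\M(z^\beta)$ on both sides). The only difference is presentational — you package both cases into a single Kronecker-delta computation — and your closing remark about why the weight $S_\M(z^\beta)/S_\F(\Gamma)$ is the right one is a fair, if informal, summary of the design of $\CP$.
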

\begin{proof}
	We firstly check the case that $\Phi(\Gamma) \ne z^\beta$. By the definition of the inner product, $\langle \Phi(\Gamma), z^\beta \rangle = 0$. Since $\Phi(\Gamma) \ne z^\beta$, there is no term including $\Gamma$ in the sum $\CP(z^\beta)$. 
	Thus, 
	\begin{equs}
	\langle \Gamma,  \CP(z^\beta)\rangle   = 0 =	\langle \Phi(\Gamma), z^\beta \rangle.
	\end{equs}
	Then, for $\Phi(\Gamma) =  z^\beta$, we have $\langle \Phi(\Gamma), z^\beta \rangle  = S_\M(z^\beta)$ and 
	\begin{equs}
			\langle \Gamma,  \CP(z^\beta)\rangle = S_\F(\Gamma)\frac{S_\M(z^\beta)}{S_\F(\Gamma)} = S_\M(z^\beta).
	\end{equs}
\end{proof}
\vspace{6pt}

 In the BPHZ renormalisation, it is necessary to detect the divergent subgraphs through their degrees. Therefore, we also need a degree map for the renormalisation via multi-indices.
Since the degree of a Feynman diagram depends only on the considered model, the number of vertices and edges but not on the structure of the diagram, Feynman diagrams with the same associated multi-indices have the same degree. Comparing with the degree of Feynman diagrams defined in \eqref{eq:Fdeg},
it is natural to define the degree of a multi-indice as
\begin{equs}
	\deg z^\beta : =   \frac{\ell}{2}\sum_{k \in \mathbb{N}}k \beta(k) + d(|z^\beta| - 1)
\end{equs}
where $\ell$ and $d$ are the same as defined in \eqref{eq:Fdeg}. Then, for any $\Gamma \in, \mathbf{F}$ 
\begin{equs} \label{eq:deg}
	\deg \Gamma = \deg \Phi(\Gamma).
\end{equs}
In the sequel of the paper, the space of multi-indices with non-positive degrees is denoted by $\mathbf{M}_{-}$. Moreover, $\CM_{-}$ represents the space of non-empty forests in which each individual multi-indice has a non-positive degree.

\subsection{Insertion product and extraction-contraction coproduct of multi-indices}
The reduced extraction-contraction coproduct $\Delta_\M$ of multi-indices is the dual of the product $\star_\M$ understood as `` simultaneous insertion" which is the Guin–Oudom generalisation of the product $ \blacktriangleright : \mathbf{M} \times \mathbf{M} \mapsto \langle \mathbf{M}\rangle$ defined below. For $z^\beta, z^\alpha \in \mathbf{M}$ 
\begin{equs} \label{insertion_product}
	z^\beta \blacktriangleright z^\alpha : =\sum_{k \in \mathbb{N}}\left(D^{k}z^{\beta}\right)
	\left(\partial_{z_{k}}z^\alpha\right)
\end{equs}
where $D$ is the derivation defined as 
\begin{equs}
	D := \sum_{k \in \mathbb{N}} z_{k+1} \partial_{z_k}
\end{equs}
and $ \partial_{z_k} $ is the ordinary partial derivative in the coordinate $ z_k $. 
This product represents replacing one single variable $z_k$ by $D^{k}z^{\beta}$. Notice that the empty multi-indice $z^\mathbf{0}$ is not in the set $\mathbf{M}$, which means inserting the empty multi-indice is forbidden. The derivative $\partial_{z_k}$ ensures that the insertion of multi-indices encapsulates the feature of its Feynman diagram counterpart: one vertex in the Feynman diagram can only be used as the insertion position at most once. The same as for Feynman diagrams, one can define a rule for inserting multi-indices $\blacktriangleright_{\CR}$ by projecting the result to terms whose trunk multi-indices has support  $\subseteq \CK(\CR)$.
\vspace{6pt}

Since the insertion product of multi-indices traces the arity change while inserting Feynman diagrams, one has the following morphism property.
\begin{proposition} \label{prop:morphism}
	For any Feynman diagrams $\Gamma_1, \Gamma_2 \in \mathbf{F}$
	\begin{equs}
		\Phi(\Gamma_1 \triangleright \Gamma_2) = \Phi(\Gamma_1) \blacktriangleright \Phi(\Gamma_2).
	\end{equs}
\end{proposition}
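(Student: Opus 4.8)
The plan is to unfold both sides of the identity $\Phi(\Gamma_1 \triangleright \Gamma_2) = \Phi(\Gamma_1) \blacktriangleright \Phi(\Gamma_2)$ directly from their definitions and check that the two sums agree term by term after applying the counting map. First I would recall that $\Gamma_1 \triangleright \Gamma_2 = \sum_{v \in \CV(\Gamma_2)}\CG(C_v(\Gamma_2),\Gamma_1)$: one picks a vertex $v$ of $\Gamma_2$ of some arity $k = k_\Gamma(v)$, cuts it to produce a trunk with $k$ free legs, and then grafts those $k$ legs one by one onto vertices $u_1,\dots,u_k \in \CV(\Gamma_1)$ (not necessarily distinct). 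The key observation is that $\Phi$ only records the multiset of vertex arities, so I need to track how the arities change under this operation. When the trunk legs are grafted onto $\Gamma_1$, each vertex $u$ of $\Gamma_1$ that receives $j$ of the $k$ legs has its arity raised from $k_{\Gamma_1}(u)$ to $k_{\Gamma_1}(u) + j$, while $\Gamma_2$ loses its vertex $v$ of arity $k$ and all other vertices of $\Gamma_2$ keep their arities. Hence $\Phi$ of a single grafted term equals $z_{k_{\Gamma_2}(w_1)} \cdots$ over the surviving vertices $w$ of $\Gamma_2$ (i.e.\ $\Phi(\Gamma_2)/z_k$), times the product over $\CV(\Gamma_1)$ of the raised-arity variables.

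Next I would identify this with the right-hand side. In the multi-index insertion $z^\beta \blacktriangleright z^\alpha = \sum_{k}(D^k z^\beta)(\partial_{z_k} z^\alpha)$ with $z^\beta = \Phi(\Gamma_1)$ and $z^\alpha = \Phi(\Gamma_2)$, the term $\partial_{z_k} z^\alpha$ selects the choice of a vertex of arity $k$ in $\Gamma_2$ to remove (accounting for the multiplicity $\alpha(k) = \beta(\Gamma_2,k)$ of such vertices, which matches summing over those $v \in \CV(\Gamma_2)$ with $k_{\Gamma_2}(v) = k$), and divides out $z_k$. The operator $D = \sum_j z_{j+1}\partial_{z_j}$ applied $k$ times to $z^\beta$ produces exactly the sum, over all ways of distributing $k$ ``arity increments'' among the variables of $z^\beta$ — equivalently among the vertices of $\Gamma_1$ with the appropriate multiplicities — of the corresponding raised monomial. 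So the combinatorics of $D^k$ acting on a monomial reproduces precisely the combinatorics of choosing the grafting targets $u_1,\dots,u_k \in \CV(\Gamma_1)$ and reading off arities. I would make this precise by checking that the coefficient with which a given monomial $z^\gamma$ appears in $D^k z^\beta$ equals the number of functions from a $k$-element set into $\CV(\Gamma_1)$ that raise the arity profile from $\beta$ to $\gamma$, which is a short multinomial identity.

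Putting these together: applying $\Phi$ to $\sum_{v}\CG(C_v(\Gamma_2),\Gamma_1)$ groups the sum over $v$ according to $k = k_{\Gamma_2}(v)$, yielding for each $k$ a factor counting vertices of arity $k$ in $\Gamma_2$ (matching $\partial_{z_k}\Phi(\Gamma_2)$) times $\Phi$ of the grafted graph summed over all leg-placements (matching $D^k\Phi(\Gamma_1)$), so $\Phi(\Gamma_1 \triangleright \Gamma_2) = \sum_k (D^k\Phi(\Gamma_1))(\partial_{z_k}\Phi(\Gamma_2)) = \Phi(\Gamma_1)\blacktriangleright\Phi(\Gamma_2)$. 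The main obstacle I anticipate is purely bookkeeping: making the identification between ``sum over ordered tuples $(u_1,\dots,u_k)$ of grafting targets in $\CV(\Gamma_1)$, then apply $\Phi$'' and ``$D^k$ applied to the monomial $\Phi(\Gamma_1)$'' airtight, since $D^k$ naturally produces monomials with multinomial-type multiplicities and one must confirm these match the number of ordered leg-placements producing the same arity profile. Once that combinatorial lemma about $D^k$ acting on monomials is isolated and proved, the rest is a direct comparison. One should also note that no self-connections or rule restrictions interfere here because the statement is for the unrestricted products $\triangleright$ and $\blacktriangleright$, and $\Phi$ is insensitive to whether grafted legs land on the same vertex.
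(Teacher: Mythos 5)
Your proposal is correct and takes essentially the same route as the paper's proof: group the sum over insertion vertices $v\in\CV(\Gamma_2)$ by arity $k$, identify the removal of such a vertex with $\partial_{z_k}\Phi(\Gamma_2)$ (the multiplicity of vertices of arity $k$ matching the coefficient produced by the derivative), and match the sum over ordered leg-placements $(u_1,\dots,u_k)$ with $D^k\Phi(\Gamma_1)$. The combinatorial lemma you flag as the main bookkeeping obstacle is precisely what the paper writes explicitly as $\sum_{\ell}\binom{L}{\ell}\Phi(\Gamma_1)\prod_i z_{k_{u_i}+\ell_i}/z_{k_{u_i}}$ and then identifies with $D^{L}\Phi(\Gamma_1)$ via the Leibniz rule, so your plan fills in the same step in the same way.
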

\begin{proof}
	Let $L$ denote the number of half-edges to be (or ``can be" if there is a rule $\CR$) attached to $\Gamma_1$ and $k_v$ be the arity of the vertex $v$. Then we have 
	\begin{equs}
		\Gamma_1 \triangleright \Gamma_2 = \sum_{L \in \mathbb{N}}\sum_{ v \in \CV(\Gamma_2): k_{v} = L} \Gamma_1 \widehat{\triangleright}_v \Gamma_2
	\end{equs}
	The operator $\widehat{\triangleright}_v$ amounts to inserting $\Gamma_1$ to the position of $v$, which means we first detach all the edges connected to $v$. In this process, the frequencies of vertices in $\Gamma_2$ are kept except one vertex $v$ is eliminated. Meanwhile attaching $L$ free legs to $\Gamma_1$ means that the total increase in arities of vertices in $\Gamma_1$ is $L_1$.
	Therefore,  if $\partial_{z_{L}}\Phi(\Gamma_2) \ne 0$ and $k_v = L$,
	\begin{equs}
		\Phi(\Gamma_1 \widehat{\triangleright}_v \Gamma_2) = \frac{\Phi(\Gamma_2)}{z_{L}}\sum_{\ell} {L \choose \ell}\Phi(\Gamma_1)\prod_{i=1}^{|\CV(\Gamma_1)|}\frac{z_{k_{u_i}+\ell_i}}{z_{k_{u_i}}}
	\end{equs}
	where $\ell$ is a $|\CV(\Gamma_1)|$-dimensional  multi-index with $|\ell| = L$ and $k_{u_i}$ is the arity of vertex $u_i \in \CV(\Gamma_1)$.
	Since $D^{L}$ obeys the Leibniz rule, we have 
	\begin{equs}
		\Phi(\Gamma_1 \widehat{\triangleright}_v \Gamma_2) = \frac{\Phi(\Gamma_2)}{z_{L}} D^{L}\Phi(\Gamma_1)
	\end{equs}
	for $\partial_{z_{L}}\Phi(\Gamma_2) \ne 0$. Thus, 
	\begin{equs}
		\Phi(\Gamma_1 \triangleright \Gamma_2) = \sum_{L \in \mathbb{N}}
		\sum_{ v \in \CV(\Gamma_1): k_{v} = L } 
		\frac{\Phi(\Gamma_2)}{z_{L}} D^{L}\Phi(\Gamma_1)
	\end{equs}
	which equals the right-hand side
	\begin{equs}
		\Phi(\Gamma_1) \blacktriangleright \Phi(\Gamma_2) = \sum_{L \in \mathbb{N}} \partial_{z_{L}}  \Phi(\Gamma_2) D^{L}\Phi(\Gamma_1)
	\end{equs}
	since $\partial_{z_{L}}$ counts the number of vertices with arity $L$ in $\Gamma_2$.
\end{proof}

\begin{definition} (Simultaneous insertion of multi-indices)
	\\	
	The simultaneous insertion $\star_\M : \CM \times \mathbf{M} \mapsto \langle \mathbf{M}\rangle$ which inserts a non-empty forest of $z^{\beta_i} \in \mathbf{M}$ into $z^\alpha \in \mathbf{M}$ is
	\begin{equs} \label{multiple_insertion}
		\tilde{\prod}_{\M \, i=1}^n z^{\beta_i} \star_\M z^\alpha:= \sum_{k_1,...,k_n\in \mathbb{N}}  
		\left(\prod_{i=1}^nD^{k_i}z^{\beta_i}\right)\left[\left(\prod_{i=1}^n\partial_{z_{k_i}}\right)z^\alpha\right] ,
	\end{equs}
	where $n \ge 1$.
\end{definition}
One can verify that 
\begin{equs} \label{counting_map_morphism}
	\Phi(	\tilde{\prod}_{\F \, i=1}^n \Gamma_i \star_{\F} \bar\Gamma ) = \Phi(\tilde{\prod}_{i=1}^n \Gamma_i) \star_\M \Phi(\bar\Gamma)
\end{equs}
through the same reasoning as in the proof of Proposition \ref{prop:morphism} or through the universal property \cite[page 29]{Abe} between the pre-Lie product and its universal enveloping algebra from the Guin-Oudom construction.
We can also extend this simultaneous insertion from $z^\alpha \in \mathbf{M}$ to $z^\alpha \in \CM$  through applying the Leibniz rule to $\left(\prod_{j=1}^n\partial_{z_{k_j}}\right)$ in the sense that
\begin{equs} \label{Leibniz_partial_k}
	\partial_{z_k}	\tilde\prod_{\M \, i=1}^m z^{\alpha_i} = \sum_{i=1}^m   \left(\tilde{\prod}_{\M \, j\ne i} z^{\alpha_j}\right)\, \tilde{\bullet}_\M \, \partial_{z_k} z^{\alpha_i}.
\end{equs}
\vspace{6pt}

Since the expression of the simultaneous product $\star_\M$ is explicit, we can derive the formula of its dual coproduct $\Delta_\M$ which is the reduced extraction-contraction coproduct directly through the adjoint relation
\begin{equs} \label{dual_Delta}
	\langle	\tilde{\prod}_{\M \, i=1}^n z^{\beta_i} \star_\M z^{\bar{\beta}}, z^{\beta} \rangle 
	= \langle	\tilde{\prod}_{\M \, i=1}^n z^{\beta_i} \otimes z^{\bar{\beta}}, \Delta_\M z^{\beta} \rangle. 
\end{equs}
\begin{proposition} \label{LOT_coproduct2}
The explicit formula of the reduced extraction-contraction multi-indices coproduct $\Delta_\M $ is 
\begin{equs} \label{explicit_formula_extraction_1}
	\Delta_\M z^{\beta}  =&
	\sum_{\tilde{\prod}_{\M \, i=1}^n z^{\beta_i} \in \CM}
	 \sum_{z^\alpha \in \mathbf{M}}
	E(\tilde{\prod}_{\M \, i=1}^nz^{\beta_i}, z^\alpha ,z^{\beta})  
	\tilde{\prod}_{\M \, i=1}^n z^{\beta_i}
	\otimes   z^\alpha, \quad\quad	
\end{equs}
with
\begin{equs}
	&E(\tilde{\prod}_{\M \, i=1}^nz^{\beta_i}, z^\alpha ,z^{\beta}) 
	\\&=   
	\sum_{k_1,...,k_n\in \mathbb{N}}
	\sum_{\beta  = \hat{\beta}_1 + \cdots + \hat{\beta}_n+\hat\alpha} 
	\frac{S_\M(z^\beta)}{S_\M( \tilde{\prod}_{\M \, i=1}^n z^{\beta_i})S_\M(z^\alpha)}
	\frac{
		\langle \prod_{i=1}^n\partial_{z_{k_i}} z^{\alpha} , z^{\hat{\alpha}} \rangle}
	{S_\M(z^{\hat{\alpha}})}
	\prod_{i=1}^n
	\frac{\langle  
		D^{k_i}z^{\beta_i}, z^{\hat{\beta}_i} \rangle }{S_\M(z^{\hat{\beta}_i})},
\end{equs}
where there is an order among $\hat{\beta}_1, \cdots , \hat{\beta}_n$ as we have order among $k_1, \ldots, k_n$. For example, if $\beta = [1,2]$, we need to count both $\beta = [1,0]+[0,1]+[0,1]$ and $\beta = [0,1]+[1,0]+[0,1]$. 
Note that $ n \ge 1$ as we cannot insert the empty multi-indice.	
\end{proposition}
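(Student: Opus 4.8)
The plan is to read off $\Delta_\M z^{\beta}$ directly from the defining adjunction \eqref{dual_Delta}, so that the whole statement reduces to an expansion of the explicit formula \eqref{multiple_insertion} for $\star_\M$. The starting observation is that the pairing \eqref{eq:inner_multiindices} is diagonal in the monomial/forest basis: distinct forests are orthogonal and $\langle \tilde{z}^{\tilde{\gamma}}, \tilde{z}^{\tilde{\gamma}}\rangle = S_\M(\tilde{z}^{\tilde{\gamma}}) \neq 0$. Hence the coefficient of a basis element in any $v\in\langle\CM\rangle$ is recovered as $\langle v,\tilde{z}^{\tilde{\gamma}}\rangle/S_\M(\tilde{z}^{\tilde{\gamma}})$, and the analogous statement holds for $\langle\mathbf{M}\rangle$ and, with the product pairing, for $\langle\CM\rangle\otimes\langle\mathbf{M}\rangle$. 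In particular the pairing is non-degenerate, so \eqref{dual_Delta} determines $\Delta_\M z^\beta$ uniquely, and writing $\Delta_\M z^\beta=\sum E(\tilde{\prod}_{\M}z^{\beta_i},z^\alpha,z^\beta)\,\tilde{\prod}_{\M}z^{\beta_i}\otimes z^\alpha$ and pairing against $\tilde{\prod}_{\M\, i=1}^n z^{\beta_i}\otimes z^\alpha$ yields
\[
E\bigl(\tilde{\prod}_{\M\, i=1}^n z^{\beta_i},\, z^\alpha,\, z^\beta\bigr)
= \frac{\bigl\langle\, \tilde{\prod}_{\M\, i=1}^n z^{\beta_i}\star_\M z^\alpha,\ z^\beta\,\bigr\rangle}{S_\M\bigl(\tilde{\prod}_{\M\, i=1}^n z^{\beta_i}\bigr)\, S_\M(z^\alpha)}.
\]
Everything then comes down to computing the numerator.

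To do this I would expand each factor of \eqref{multiple_insertion} in the monomial basis using the diagonal-pairing observation: $D^{k_i}z^{\beta_i}=\sum_{\hat\beta_i}\frac{\langle D^{k_i}z^{\beta_i},z^{\hat\beta_i}\rangle}{S_\M(z^{\hat\beta_i})}z^{\hat\beta_i}$ and $\bigl(\prod_{i=1}^n\partial_{z_{k_i}}\bigr)z^\alpha=\sum_{\hat\alpha}\frac{\langle\prod_i\partial_{z_{k_i}}z^\alpha,z^{\hat\alpha}\rangle}{S_\M(z^{\hat\alpha})}z^{\hat\alpha}$. Multiplying these out turns the product in \eqref{multiple_insertion} into a sum of monomials $z^{\hat\beta_1+\cdots+\hat\beta_n+\hat\alpha}$; pairing with $z^\beta$ selects the terms with $\beta=\hat\beta_1+\cdots+\hat\beta_n+\hat\alpha$, each contributing the weight $S_\M(z^\beta)$, and summing over $k_1,\dots,k_n$ reproduces exactly the claimed expression for $E$ after the division by $S_\M(\tilde{\prod}_{\M}z^{\beta_i})\,S_\M(z^\alpha)$. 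Finiteness of the sum defining $\Delta_\M z^\beta$ (so that it is a genuine element of $\langle\CM\rangle\otimes\langle\mathbf{M}\rangle$) follows as for the map $\CP$: the operator $D$ preserves $|z^\beta|$ and raises $\sum_k k\beta(k)$ by one per application, while $\partial_{z_k}$ lowers $|z^\beta|$ by one, so fixing $z^\beta$ forces $\tilde{\prod}_{\M}z^{\beta_i}$ and $z^\alpha$ into finite sets.

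The one point needing genuine care — and what I expect to be the main obstacle — is the interplay between the labelled data $(k_1,\dots,k_n)$, hence the ordered monomials $z^{\hat\beta_i}$ coming from the $n$ factors $D^{k_i}z^{\beta_i}$, and the \emph{unordered} forest $\tilde{\prod}_{\M}z^{\beta_i}$. This is precisely the source of the remark in the statement that in the decomposition $\beta=\hat\beta_1+\cdots+\hat\beta_n+\hat\alpha$ one counts ordered tuples (both $[1,0]+[0,1]+[0,1]$ and $[0,1]+[1,0]+[0,1]$ when $\beta=[1,2]$). One must check that this ordered bookkeeping is matched correctly by the normalisation $S_\M(\tilde{\prod}_{\M}z^{\beta_i})$, whose definition \eqref{symmetry_factor_2} contains the factors $r_i!$ for repeated entries $\beta_i$; equivalently, one must check that the product pairing on $\langle\CM\rangle\otimes\langle\mathbf{M}\rangle$ reads off the coefficient of $\tilde{\prod}_{\M}z^{\beta_i}\otimes z^\alpha$ with the correct multiplicity. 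Both of these are exactly the diagonal-basis statement isolated in the first paragraph, so once that is established the rest is the routine expansion sketched above and no further obstacle is expected.
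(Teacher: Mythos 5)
Your proposal is correct and follows essentially the same route as the paper: the paper likewise posits the general ansatz for $\Delta_\M z^\beta$, reads off $E$ from the adjunction \eqref{dual_Delta} using the diagonal pairing, and then expands $\tilde{\prod}_{\M\, i=1}^n z^{\beta_i}\star_\M z^\alpha$ in the monomial basis via the ordered decomposition $\beta=\hat\beta_1+\cdots+\hat\beta_n+\hat\alpha$. The point you isolate about matching the ordered tuples $(k_1,\dots,k_n)$ against the unordered forest normalisation $S_\M(\tilde{\prod}_{\M}z^{\beta_i})$ is exactly the subtlety the paper absorbs into the remark following the statement, so nothing further is needed.
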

\begin{proof}
	We start with a general form which covers all $\CM \otimes \mathbf{M}$ and assume
	\begin{equs} 
		\Delta_\M z^{\beta}  =
		\sum_{\tilde{\prod}_{\M \, i=1}^n z^{\beta_i} \in \CM}
		 \sum_{ z^{\alpha} \in \mathbf{M}}
		E(\tilde{\prod}_{\M \, i=1}^nz^{\beta_i}, z^{\alpha} ,z^{\beta})  
		\tilde{\prod}_{\M \, i=1}^n z^{\beta_i}
		\otimes   z^{\alpha}, 
	\end{equs}
	one can regard  $\tilde{\prod}_{\M \, i=1}^n z^{\beta_i}$ as the forest simultaneously inserted into $z^{\alpha}$ through $\star_\M$ and has
	\begin{equs}
		\langle  \tilde{\prod}_{\M \, i=1}^n z^{\beta_i} \otimes z^{\alpha} ,	\Delta_\M z^{\beta} \rangle 
		= 
		S_\M( \tilde{\prod}_{\M \, i=1}^n z^{\beta_i}) S_\M(z^{\alpha})  	
		E(\tilde{\prod}_{\M \, i=1}^nz^{\beta_i}, z^{\alpha} ,z^{\beta}) .
	\end{equs}
	On the dual side, one has
	\begin{equs}
		&\tilde{\prod}_{\M \, i=1}^n z^{\beta_i} \star_\M z^{\alpha} 
		\\
		=& \sum_{z^\beta \in \mathbf{M}}
		\sum_{k_1,...,k_n\in \mathbb{N}}  
		\frac
		{\left\langle 
		\left(\prod_{i=1}^nD^{k_i}z^{\beta_i}\right)\left[\left(\prod_{i=1}^n\partial_{z_{k_i}}\right)z^\alpha\right] ,	 z^{\beta} \right\rangle }
		{S_\M(z^\beta)}
		{z^\beta}
		\\ = &
		\sum_{z^\beta \in \mathbf{M}}
		\sum_{\beta  = \hat{\beta}_1 + \cdots + \hat{\beta}_n + \hat{\alpha}}
		\sum_{k_1,...,k_n\in \mathbb{N}}  
		\\&
		\frac
		{\left\langle 
		\left(\prod_{i=1}^nD^{k_i}z^{\beta_i}\right)\left[\left(\prod_{i=1}^n\partial_{z_{k_i}}\right)z^\alpha\right] ,	z^{\hat\alpha} \prod_{i=1}^n z^{\hat\beta_i} \right\rangle }{S_\M(z^{\hat\alpha} \prod_{i=1}^n z^{\hat\beta_i})}
		{z^{\hat\alpha} \prod_{i=1}^n z^{\hat\beta_i}}
		\\ = &
		\sum_{z^\beta \in \mathbf{M}}
		\sum_{\beta  = \hat{\beta}_1 + \cdots + \hat{\beta}_n + \hat{\alpha}}
		\sum_{k_1,...,k_n\in \mathbb{N}}	
		\frac{
			\langle \prod_{i=1}^n\partial_{z_{k_i}} z^{\alpha} , z^{\hat{\alpha}} \rangle}
		{S_\M(z^{\hat{\alpha}})}
		\prod_{i=1}^n
		\frac{\langle  
			D^{k_i}z^{\beta_i}, z^{\hat{\beta}_i} \rangle }{S_\M(z^{\hat{\beta}_i})}
		  z^{\hat\alpha}\prod_{i=1}^nz^{\hat\beta_i}
	\end{equs}
	which yields
	\begin{equs}
		&\frac{\langle  \tilde{\prod}_{\M \, i=1}^n z^{\beta_i} \star_\M z^{\alpha}  ,	z^{\beta} \rangle}{S_\M(z^\beta)}
		\\=& 
		\sum_{\beta  = \hat{\beta}_1 + \cdots + \hat{\beta}_n + \hat{\alpha}}
		\sum_{k_1,...,k_n\in \mathbb{N}}	
		\prod_{i=1}^n
		\frac{\langle  
			D^{k_i}z^{\beta_i}, z^{\hat{\beta}_i} \rangle }{S_\M(z^{\hat{\beta}_i})}
		\frac{
			\langle \prod_{i=1}^n\partial_{z_{k_i}} z^{\alpha} , z^{\hat{\alpha}} \rangle}
		{S_\M(z^{\hat{\alpha}})}  z^\beta.
	\end{equs}
	The equality from the duality
	\begin{equs}
		\langle \tilde{\prod}_{\M \, i=1}^n z^{\beta_i}\otimes \prod_{i=1}^n z_{k_i}  ,	\Delta_\M z^{\beta} \rangle & = 
		\langle  \tilde{\prod}_{\M \, i=1}^n z^{\beta_i} \star_1 \prod_{i=1}^n z_{k_i}  ,	z^{\beta} \rangle
	\end{equs}
	allows us to conclude.
\end{proof}
\begin{remark}
	The proof is a generalisation of the proof of \cite[Proposition 3.8]{BH24} as we do not require the symmetry factor to be multiplicative and do not have the restriction on the number of multi-indices in the forest here.
\end{remark}
The extraction-contraction coproduct is its reduced version adding the primitive terms
\begin{equs}
	\Delta^{\! -}_\M z^{\beta} = \emptyset_\M \otimes z^\beta + z^\beta \otimes \emptyset_\M + \Delta_\M z^{\beta}.
\end{equs}
The adjoint dual of $\Delta^{\! -}_\M$ denoted by $\bar \star_\M$ is the extension of $\star_\M$ by
defining: $\bar \star_\M = \star_\M$ if there is no empty forest involved, otherwise
\begin{equs}
	\emptyset_\M \bar \star_{\M} z^\alpha = z^\alpha \quad \text{ and } \quad 
	\tilde{\prod}_{\M \, i=1}^n z^{\beta_i} \bar \star_{\M} \emptyset_\M = \tilde{\prod}_{\M \, i=1}^n z^{\beta_i}.
\end{equs}

\section{Renormalisation through multi-indices}\label{sec:BPHZ}
In this section, we will introduce the renormalisation via multi-indices coproduct and show how it is equivalent to the Feynman diagram-based BPHZ renormalisation. Finally, we will give the explicit formula of the renormalised measure obtained from the BPHZ renormalised cumulant expansion of the original measure.

\subsection{Symmetry factors of Feynman diagrams } \label{subsec:symmetry}
Recall that we defined the  inner product in \eqref{inner_product_diagram} as
\begin{equs}
	\langle\Gamma_1, \Gamma_2\rangle := \delta^{\Gamma_1}_{\Gamma_2}S_\F(\Gamma_1),
\end{equs}
which means that the symmetry factor is important in the adjoint relation for describing the duality of coproduct and its corresponding product. Meanwhile it commutes multi-indices with Feynman diagrams through the adjoint relation between maps $\Phi$ and $\CP$. In this subsection, we will discuss what the suitable symmetry factors of Feynman diagrams are and how they are related to the symmetry factors of multi-indices.

We choose the symmetry factors that make the 
map $\CP$  lift multi-indice $z^\beta$ to all possible Feynman diagrams formed by pairing half-edges encoded by $z^\beta$, i.e., 
\begin{equs} \label{eq:P_map_function}
	\CP(z^\beta) = \sum_{\Phi(\Gamma) = z^\beta} \frac{S(z^\beta)}{S(\Gamma)} \Gamma = \sum_{\Phi(\Gamma) = z^\beta} N(\Gamma) \Gamma
\end{equs}
holds.
It turns out that the symmetry factor of Feynman diagrams in the literature \cite{CK2, VS2007}, which is defined as the size of their automorphism group is suitable.
\begin{definition} (Automorphism group of Feynman diagrams)\label{def:auto_group}
\\
An automorphism of a Feynman diagram $\Gamma \in \mathbf{F}$ is obtained by permuting its edges and vertices such that the result Feynman diagram is isomorphic to $\Gamma$ and it keeps the endpoints of all its edges. This is equivalent to saying that we have two types of group actions $g_{\CV}$ permuting the vertices and $ g_{\CE}$ permuting edges such that 
\begin{equs}
\Gamma \cong	\Gamma' := g_{\CV} \circ g_{\CE}( \Gamma) 
\end{equs}
and for any $e \in \CE(\Gamma)$
\begin{equs}
	\{g_{\CV} (v_{e_{+}}) , g_{\CV}(v_{e_{-}}) \}= \{v_{g_{\CE}(e)_{+}},  v_{g_{\CE}(e)_{-}}\}
\end{equs}
where $v_{e_{+}}, v_{e_{-}}$ are two endpoints of the edge $e$ and there is no order between two endpoints in the set which means
\begin{equs}
	\{g_{\CV} (v_{e_{+}}) , g_{\CV}(v_{e_{-}}) \} = 
		\{g_{\CV} (v_{e_{-}}) , g_{\CV}(v_{e_{+}}) \}
\end{equs}
The automorphism group $\mathrm{Aut}(\Gamma)$ is the set of all the automorphisms $g_{\CV} \circ g_{\CE}$ of $\Gamma$.
\end{definition}

Then the symmetry factor of a Feynman diagram $\Gamma \in \mathbf{F}$ is defined as the cardinal
\begin{equs}
	S_\F(\Gamma) := |\mathrm{Aut}(\Gamma)|.
\end{equs}
It can be generalised to the symmetry factor of a forest of Feynman diagrams by
\begin{equs}
	S_\F(\tilde\prod_{\F \, i=1}^m \Gamma_i^{\tilde{\bullet_{\F}}r_i}) = \prod_{i=1}^m r_i! S_\F(\Gamma_i)^{r_i}
\end{equs}
where $\Gamma_i$ are distinct and $r_i$ is the number of Feynman diagrams isomorphic to $\Gamma_i$ in the forest. The definition of the map $\CP$ can be generalised accordingly to forests as following: 
\begin{equs}
	& \CP: \CM \rightarrow \langle \CF \rangle
	\\&
	\CP(\tilde \prod_{\M \, i = 1}^n z^{\beta_i}) := \sum_{\tilde \prod_{\F \, i = 1}^n \Gamma_i: \Phi(\Gamma_i) = z^{\beta_i}} \frac{S_\M(\tilde \prod_{i = 1}^n z^{\beta_i})}{S_\F(\tilde \prod_{ i = 1}^n \Gamma_i)} \tilde \prod_{\F \, i = 1}^n \Gamma_i.
\end{equs}
Then by the definitions of symmetry factors  of both the Feynman diagram and multi-indices one can verify that Proposition \ref{prop:injective_P} and \ref{prop:adjoint_P_Phi} hold for this generalisation.
\vspace{6pt}

Since Feynman diagrams are formed when one tries to pair half-edges of pre-Feynman diagrams which are multi-indices. The symmetry factor of multi-indices can be described as the cardinal of the following defined ``isomorphism group".
\begin{definition} (Isomorphism group of Feynman diagrams)\label{def:iso_group}
	\\
	In an isomorphism, an edge should be always viewed as a pairing of two half-edges, one attached to each endpoint. 
	An isomorphism of Feynman diagrams $\Gamma \in \mathbf{F}$ is obtained by two types of group actions $h_{\CL}$ permuting half-edges  and $h_{\CV}$ permuting vertices such that  
	the Feynman diagram obtained after the permutation is isomorphic to $\Gamma$
	\begin{equs}
		\Gamma  \cong	\Gamma'' := h_{\CV} \circ h_{\CL}( \Gamma) .
	\end{equs}
	and the half-edges should always be attached to the same vertex as before the permutation.
	The isomorphism group $\mathrm{Iso}(\Gamma)$ is the set of all the isomorphisms $h_{\CV} \circ h_{\CL}$ of $\Gamma$.
\end{definition}

From this definition, one can see that all the isomorphisms can be obtained by permuting the half-edges attached to the same vertex and then
permuting vertices with the same arity. During the permutation of vertices, the half-edges should be permuted accordingly such that it is attached to the same vertex as it was before the vertices permutation. Therefore, the symmetry factor of multi-indices is the cardinal of the isomorphism group of Feynman diagrams. 

The quotient of the isomorphism group by the automorphism group has the property stated in the following proposition, which ensures that $\CP$ satisfies \eqref{eq:P_map_function}.
\begin{proposition} \label{prop:Sta_Orb}
For a Feynman diagram $\Gamma \in \mathbf{F}$, its symmetry factor and the symmetry factor of multi-indice $\Phi(\Gamma)$ satisfy
	\begin{equs}
		S_\M(\Phi(\Gamma)) = N(\Gamma)S_\F(\Gamma),
	\end{equs}
	where $N(\Gamma)$ is the number of distinct pairings of half-edges in the pre-Feynman diagram $\Phi(\Gamma)$ that can form Feynman diagrams isomorphic to $\Gamma$. As a result, $\CP$ satisfies \eqref{eq:P_map_function}.
\end{proposition}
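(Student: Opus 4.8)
The plan is to read the identity $S_\M(\Phi(\Gamma)) = N(\Gamma)\,S_\F(\Gamma)$ as an instance of the orbit--stabiliser theorem. Fix $\Gamma$ and write $P := \Phi(\Gamma)$, regarded as the disjoint union of its vertices, each vertex $v$ carrying $k(v)$ labelled half-edges; a \emph{pairing} of $P$ is a fixed-point-free matching of its half-edge set that never matches two half-edges attached to the same vertex (so that self-loops do not arise), and each pairing $m$ yields a Feynman diagram $\Gamma_m$ with vertex set $\CV(P)$. Let $G := \mathrm{Iso}(\Gamma)$ be the group whose elements permute the half-edges attached to a common vertex and permute vertices of equal arity, carrying the incident half-edges along (Definition~\ref{def:iso_group}); as noted just before the statement, $|G| = \prod_{k}\beta(k)!\,(k!)^{\beta(k)} = S_\M(\Phi(\Gamma))$. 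The group $G$ acts on the set of pairings of $P$ by transport of structure, the action is well defined (it preserves the ``distinct vertices'' condition), and for $g\in G$ the map $g$ furnishes an isomorphism $\Gamma_m \cong \Gamma_{g\cdot m}$ preserving endpoints, since $g$ is merely a relabelling. Hence $G$ acts on the set $\Omega_\Gamma$ of pairings $m$ with $\Gamma_m\cong\Gamma$, and $|\Omega_\Gamma| = N(\Gamma)$ by the definition of $N$.

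Two facts then remain. First, transitivity of the $G$-action on $\Omega_\Gamma$: given $m,m'\in\Omega_\Gamma$, an isomorphism $\varphi = (\varphi_\CV,\varphi_\CE)\colon \Gamma_m\to\Gamma_{m'}$ preserves vertex degrees, so $\varphi_\CV$ permutes the common vertex set within arity classes; because every edge joins two \emph{distinct} vertices, the rule ``send the half-edge of $e$ lying on $v$ to the half-edge of $\varphi_\CE(e)$ lying on $\varphi_\CV(v)$'' defines a half-edge permutation compatible with $\varphi_\CV$, and the resulting pair is an element $g\in G$ with $g\cdot m = m'$. Second, identification of a point stabiliser: fix $m_0$ with $\Gamma_{m_0}=\Gamma$; an element of $G$ fixes $m_0$ iff the induced half-edge permutation maps the matching $m_0$ onto itself, equivalently iff it induces a permutation of the edges of $\Gamma$ preserving endpoints, i.e.\ an automorphism in the sense of Definition~\ref{def:auto_group}; conversely every such automorphism lifts — uniquely, again because no edge is a loop, so the two half-edges of an edge are distinguished by their vertices — to an element of $G$ fixing $m_0$. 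This yields a group isomorphism $\mathrm{Stab}_G(m_0)\cong\mathrm{Aut}(\Gamma)$, hence $|\mathrm{Stab}_G(m_0)| = S_\F(\Gamma)$.

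Orbit--stabiliser then gives $N(\Gamma) = |\Omega_\Gamma| = |G|/|\mathrm{Stab}_G(m_0)| = S_\M(\Phi(\Gamma))/S_\F(\Gamma)$, which is the claimed identity. For the last assertion: in $\CP(z^\beta) = \sum_{\Phi(\Gamma)=z^\beta} \frac{S_\M(z^\beta)}{S_\F(\Gamma)}\,\Gamma$ every summand satisfies $\Phi(\Gamma) = z^\beta$, so the coefficient equals $S_\M(\Phi(\Gamma))/S_\F(\Gamma) = N(\Gamma)$, which is exactly \eqref{eq:P_map_function}; the same substitution, applied termwise, establishes it for the forest version of $\CP$ once the corresponding forest symmetry factors are multiplicative over connected components.

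The step I expect to require the most care is the construction and verification of the half-edge lift in both the transitivity and the stabiliser arguments: one must check that the lifted map genuinely respects the vertex--half-edge incidence and so lies in $G$, and that it is canonically determined — the delicate case being several parallel edges between the same pair of vertices, where the correspondence between edges and pairs of half-edges must be tracked explicitly. This is precisely where the standing no-self-loop hypothesis is used, and it is what upgrades the evident surjection $\mathrm{Stab}_G(m_0)\twoheadrightarrow\mathrm{Aut}(\Gamma)$ to an isomorphism, so that the stabiliser has order exactly $S_\F(\Gamma)$ rather than a multiple of it.
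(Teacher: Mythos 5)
Your proposal is correct and follows essentially the same route as the paper: both identify $S_\M(\Phi(\Gamma))$ as the order of $\mathrm{Iso}(\Gamma)$ acting on the pairings of half-edges, identify the stabiliser of a fixed pairing with $\mathrm{Aut}(\Gamma)$ and the orbit length with $N(\Gamma)$, and conclude by the orbit--stabiliser theorem. You merely supply the transitivity and stabiliser-lifting details that the paper's much terser proof takes for granted.
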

\begin{proof}
	We denote a group action in $\mathrm{Iso}(\Gamma)$ as $h(\Gamma): = h_{\CV} \circ h_{\CL}( \Gamma)$.
	Then $S_\F(\Gamma)$ is the size of the stabiliser that keeps the pairing of half-edges in the pre-Feynman diagram of $\Gamma$
	\begin{equs}
		 & \mathrm{Stab}(\Gamma): = \{h(\Gamma) \in \mathrm{Iso}(\Gamma): h(\Gamma) \in  \mathrm{Aut}(\Gamma)\},
		 \\&
		 S_\F(\Gamma) = |\mathrm{Stab}(\Gamma)|.
	\end{equs}
	Meanwhile, $N(\Gamma)$ is the cardinal of the quotient $\mathrm{Iso}(\Gamma) / \mathrm{Aut}(\Gamma)$ and therefore it is the length of the orbit corresponding to the stabiliser $\mathrm{Stab}(\Gamma)$.
	\begin{equs}
		 &\mathrm{Orb}(\Gamma):= \{ h(\Gamma): h(\Gamma) \in \mathrm{Iso}(\Gamma) / \mathrm{Aut}(\Gamma)\},
		 \\&
		 N(\Gamma) =|\mathrm{Orb}(\Gamma)|.
	\end{equs}
	By the Orbit-Stabilizer theorem, 
	\begin{equs}
		S_\M(\Phi(\Gamma)) = N(\Gamma)S_\F(\Gamma).
	\end{equs}
\end{proof}
\begin{remark}
This proposition is equivalent to \cite[Eq. 68]{Faris}, where the order of the so-called ``stabilizer group" is exactly the symmetry factor of multi-indices. 
\end{remark}
Let us illustrate the automorphism group, isomorphism group, and Proposition \ref{prop:Sta_Orb} by the following example.
\begin{example}
One simple example is $\Gamma = \FGIII$. We index its vertices and edges in the following way to describe the permutations:
	\begin{equs}
		\Gamma = \FGIIIa.
	\end{equs}
The automorphism can be obtained by permuting the two vertices $v_1, v_2$ and permuting the three edges (red, blue, green)
as they do not change the pairing of half-edges and keep the endpoints of each edge. Therefore 
\begin{equs}
	S_\F(\Gamma) = 2!3! = 12.
\end{equs}
\begin{equs}
\Gamma_1 = 	 \FGIIIc \quad\quad \Gamma_2 =  \FGIIIb
\end{equs}
For example, $\Gamma_1$ and $\Gamma_2$ are obtained from automorphisms of $\Gamma$. $\Gamma_1$ is the result from permuting vertices $v_1$ and $v_2$. $\Gamma_2$ can be achieved by permuting vertices $v_1$ and $v_2$ and also permuting three edges (red edge $\mapsto$ green edge, green edge $\mapsto$ blue edge, blue edge $\mapsto$ red edge). Meanwhile, $\Gamma_1$ and $\Gamma_2$ can also be obtained from isomorphisms of $\Gamma$. For example, $\Gamma_1$ can be recovered by permuting vertices $v_1$ and $v_2$ and requiring that the half-edges are attached to the vertex where it was connected to in $\Gamma$, say $\ell_{r_1}, \ell_{b_1}, \ell_{g_1}$ are still attached to $v_1$ after permutation.
$\Gamma_2$ can be viewed as swapping two vertices and then permuting half-edges attached to the same vertex  but still keep the pairing $\{\ell_{r_1}, \ell_{r_2}\}, \{\ell_{b_1}, \ell_{b_2}\}, \{\ell_{g_1}, \ell_{g_2}\}$. For example ``$\ell_{b_1} \mapsto \ell_{r_1}$ and $\ell_{b_2} \mapsto \ell_{r_2}$" is exactly ``blue edge $\mapsto$ red edge" as the pairing is preserved.

Therefore, the symmetry factor of $\Phi(\Gamma) = z_3^2$ is $3!3!2! = 72$ and the cardinal of the quotient can be calculated by finding all the possible pairings of half-edges that forms Feynman diagrams isomorphic to $\Gamma$. In this case, it is $N(\Gamma) = 3! = 6$ as we have to pair the three half-edges of $v_1$ with those of $v_2$. 
For example,
	\begin{equs}
		\Gamma_3 = \FGIIId
	\end{equs}
can be obtained by an isomorphism  swapping $\ell_{g_1}$ and $\ell_{r_1}$. However, it is not automorphic to $\Gamma$ as the pairing is broken and $\Gamma_3$ is obtained from a different coset as $\Gamma, \Gamma_1,\Gamma_2$. 
\vspace{6pt}

Finally one can check
\begin{equs}
	N(\Gamma)S_\F(\Gamma)  = 72 = S_\M(\Phi(\Gamma)).
\end{equs}
\end{example}

We now reconsider the pairing problem \eqref{eq:pairing_problem} with the Hermite polynomials $H_{k}$.
Accordingly we
define the valuation map of multi-indices as 
\begin{equs} \label{eq:valuation_M}
		\Pi_\M(\tilde{\prod}_{i=1}^n z^{\beta_i}) := & \prod_{i=1}^n \Pi_\M z^{\beta_i},
		\\
		\Pi_\M(\prod_{k \in \mathbb{N}} z_k^{\beta(k)}): = &\mathbb{E}\left[\prod_{k \in \mathbb{N}} \left(\int_\Lambda H_{k}(X(x_k)) dx_k\right)^{\beta(k) }\right]
		\\
		-& 
			\sum_{n \ge 2} 
		\sum_{\tiny \substack{\CM \ni \tilde{z}^{\tilde\beta} = \tilde\prod_{\M \, i=1}^n z^{\beta_i} \\ \sum_{i=1}^n \beta_i = \beta} }
		\prod_{k \in \mathbb{N}} {\beta(k) \choose \beta_1(k), \ldots, \beta_n(k)}
		\prod_{i=1}^n  \Pi_\M( z^{\beta_i}  )
	\end{equs}
where the sum $\sum_{\tiny \substack{\CM \ni \tilde{z}^{\tilde\beta} = \tilde\prod_{\M \, i=1}^n z^{\beta_i} \\ \sum_{i=1}^n \beta_i = \beta} }$ runs over all possible forest whose frequencies of vertices match $\beta$. Note that we sum over forests instead of $z^{\beta_i}$. Therefore, there is no order among $z^{\beta_i}$.
\begin{corollary} \label{coro:commute} For any $z^\beta \in \mathbf{M}$ the following equality holds
\begin{equs}
	\Pi_\M(z^\beta) =  \Pi_\F \circ \CP (z^\beta).
\end{equs}
\end{corollary}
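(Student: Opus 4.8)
The plan is to show that $\Pi_\F \circ \CP$ satisfies the very recursion \eqref{eq:valuation_M} that defines $\Pi_\M$, and then to conclude by uniqueness of that recursion. Each correction term in \eqref{eq:valuation_M} involves a forest $\tilde{z}^{\tilde{\beta}} = \tilde{\prod}_{\M\,i=1}^n z^{\beta_i} \in \CM$ with $n\ge 2$, hence pieces $z^{\beta_i}$ of strictly smaller norm $|z^{\beta_i}| < |z^\beta|$; so \eqref{eq:valuation_M} expresses $\Pi_\M(z^\beta)$ in terms of the moment and of already-known values, and since $\beta$ has finite support it has a unique solution. It therefore suffices to verify the identity of \eqref{eq:valuation_M} with $\Pi_\M$ replaced by $\Pi_\F\circ\CP$ everywhere (equivalently, one inducts on $|z^\beta|$, the minimal cases being a direct check; recall that both $\Pi_\M$ and $\Pi_\F\circ\CP$ vanish on multi-indices admitting no connected completion).

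The first step is to rewrite the target. By Definition \ref{def:map_P}, Proposition \ref{prop:Sta_Orb} (which gives $S_\M(z^\beta)/S_\F(\Gamma) = N(\Gamma)$), and linearity of $\Pi_\F$,
\begin{equs}
\Pi_\F \circ \CP(z^\beta) = \sum_{\Gamma:\ \Phi(\Gamma) = z^\beta} N(\Gamma)\, \Pi_\F(\Gamma),
\end{equs}
the sum running over connected diagrams with vertex content $z^\beta$; moreover, extending $\CP$ and $\Pi_\F$ to forests as in Section \ref{subsec:symmetry}, multiplicativity of $\Pi_\F$ over the forest product yields $\prod_{i=1}^n \Pi_\F\circ\CP(z^{\beta_i}) = \Pi_\F\circ\CP\big(\tilde{\prod}_{\M\,i=1}^n z^{\beta_i}\big)$.

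Next I would feed in the pairing identity \eqref{eq:pairing_problem}; since the only forests with nonzero pairing count have vertex content $z^\beta$, and $\prod_k (\int_\Lambda H_k(X(x_k))dx_k)^{\beta(k)}$ is the special case of the product in \eqref{eq:pairing_problem} with $\beta(k)$ vertices of arity $k$, one gets
\begin{equs}
\mathbb{E}\Big[\prod_{k\in\mathbb{N}}\Big(\int_\Lambda H_k(X(x_k))\,dx_k\Big)^{\beta(k)}\Big] = \sum_{F:\ \Phi(F) = z^\beta} N(F)\,\Pi_\F(F).
\end{equs}
Splitting this sum according to whether $F$ is connected, the connected part is precisely $\Pi_\F\circ\CP(z^\beta)$ by the previous display. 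Hence the recursion for $\Pi_\F\circ\CP$ is equivalent to matching the disconnected part against the correction sum:
\begin{equs}
\sum_{\substack{F:\ \Phi(F)=z^\beta\\ F\ \text{disconnected}}} N(F)\,\Pi_\F(F) = \sum_{n\ge 2}\ \sum_{\substack{\CM\ni\tilde{z}^{\tilde{\beta}}=\tilde{\prod}_{\M\,i=1}^n z^{\beta_i}\\ \sum_i\beta_i=\beta}}\ \prod_{k\in\mathbb{N}}{\beta(k) \choose \beta_1(k), \ldots, \beta_n(k)}\ \prod_{i=1}^n \Pi_\F\circ\CP(z^{\beta_i}).
\end{equs}
Using $\Pi_\F(F) = \prod_i \Pi_\F(\Gamma_i)$ over the connected components $\Gamma_1,\ldots,\Gamma_n$ of $F$, and expanding each $\Pi_\F\circ\CP(z^{\beta_i})$ via the first display, both sides become linear combinations of products $\prod_i \Pi_\F(\Gamma_i)$ indexed by unordered multisets of connected diagrams with total content $z^\beta$; comparing coefficients reduces the whole claim to a purely combinatorial identity expressing $N(F)$ for a forest in terms of the numbers $N(\Gamma_i)$ of its connected pieces and the multinomial/symmetry coefficients of \eqref{eq:valuation_M}. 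That identity is a componentwise application of the orbit–stabiliser argument of Proposition \ref{prop:Sta_Orb} together with the multiplicativity of $S_\F$ and $S_\M$ over the forest product: a half-edge pairing of $z^\beta$ producing a disconnected diagram is the same datum as a choice of which of the $\beta(k)$ arity-$k$ vertices go into each connected piece (counted by the multinomials) followed by an independent pairing inside each piece (counted by the $N(\Gamma_i)$).

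The step I expect to be the main obstacle is exactly this last bookkeeping — handling repeated isomorphic connected components so that the factorials coming from the forest symmetry factors $S_\F, S_\M$ and those hidden in the multinomial coefficients cancel cleanly. A convenient way to isolate it from the analytic content is to prove the identity already at the level of $\langle\CF\rangle$ (replacing $N(F)\,\Pi_\F(F)$ by $N(F)\,F$ and $\prod_i\Pi_\F\circ\CP(z^{\beta_i})$ by $\CP(z^{\beta_1})\,\tilde{\bullet}_\F\cdots\tilde{\bullet}_\F\,\CP(z^{\beta_n})$), and only afterwards to apply the character $\Pi_\F$; everything else — the use of \eqref{eq:pairing_problem}, Proposition \ref{prop:Sta_Orb}, multiplicativity of $\Pi_\F$, and uniqueness of the recursion — is routine.
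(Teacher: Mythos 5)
Your proposal is correct and follows essentially the same route as the paper: the paper's proof also reduces the statement to the identity $\CP\big( z^\beta + \sum_{n\ge 2}\sum \prod_k {\beta(k) \choose \beta_1(k),\ldots,\beta_n(k)}\,\tilde{z}^{\tilde{\beta}}\big) = \sum_{F\in\CF} N(F)\,F$ in $\langle\CF\rangle$ (your ``connected plus disconnected'' split of the pairing identity, before applying $\Pi_\F$), and settles the bookkeeping for repeated isomorphic components via Proposition \ref{prop:Sta_Orb} and the ratio of forest symmetry factors exactly as you describe. Your framing via uniqueness of the defining recursion is a cosmetic rearrangement of the same argument.
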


\begin{proof}
		Recall that we compute the expectation in the following way
	\begin{equs}
		\mathbb{E}\left[\prod_i \int_{\Lambda}H_{k_i}(X(x_i)) dx_i\right] = \sum_{F \in \CF} N(F)\Pi_{\F} \left( F\right).
	\end{equs}
	Therefore, the proof is equivalent to show the following diagram commutes
		\begin{equs}
		\begin{tikzcd}
			&\mathbb{E}\left[\prod_i \int_{\Lambda}H_{k_i}(X(x_i)) dx_i\right]
			& \langle \CM \rangle  \arrow[l, "\Pi_\M"]  \arrow[r, "\mathcal{P}"]  
			& \langle  \CF \rangle  \arrow[ll, bend left=60, "\Pi_{\F}"] 
		\end{tikzcd}
		.
	\end{equs}
Since $\Pi_\M$ and $\Pi_\F$ preserve their corresponding forest products, it boils down to show
\begin{equs}\label{eq:prove}
\CP \left( z^\beta  + \sum_{n \ge 2} 
	\sum_{\tiny \substack{\CM \ni \tilde{z}^{\tilde\beta} = \tilde\prod_{\M \, i=1}^n z^{\beta_i} \\ \sum_{i=1}^n \beta_i = \beta} }
	\prod_{k \in \mathbb{N}} {\beta(k) \choose \beta_1(k), \ldots, \beta_n(k)}
	\tilde{z}^{\tilde{\beta}}  \right) = \sum_{F \in \CF} N(F) F . \quad
\end{equs}

	For any forest of multi-indices $\tilde{z}^{\tilde{\beta}} = \tilde \prod_{j=1}^{m} ({z^{\beta_j}})^{\tilde\bullet_\M r_j}$ with distinct $z^{\beta_j}$,
\begin{equs}
	\CP (\tilde{z}^{\tilde{\beta}}) = \sum_{\tilde \prod_{\F \, i = 1}^n \Gamma_i: \Phi(\Gamma_i) = z^{\beta_i}} 
	\frac{S_\M(\tilde \prod_{\M \, j=1}^{m} (z^{\beta_j})^{\tilde\bullet_\M r_j})}{S_\F(\tilde \prod_{\F \, i = 1}^n \Gamma_i)} 
	\tilde \prod_{\F \, i = 1}^n \Gamma_i.
\end{equs}
Rewriting the forest $\tilde \prod_{\F \, i = 1}^n \Gamma_i$ as
\begin{equs}
	\tilde \prod_{\F \, i = 1}^n \Gamma_i = \tilde \prod_{\F \, l = 1}^q \Gamma_l^{\tilde\bullet_\F R_l}
\end{equs}
for distinct $\Gamma_l$ leads to 
\begin{equs}
	\frac{S_\M(\tilde \prod_{\M \, j=1}^{m} (z^{\beta_j})^{\tilde\bullet_\M r_j})}{S_\F(\tilde \prod_{\F \, i = 1}^n \Gamma_i)} 
	\tilde \prod_{\F \, i = 1}^n \Gamma_i
	=\frac{\prod_{j=1}^m r_j ! S_\M(z^{\beta_j})^{r_j}}{\prod_{l=1}^q R_l ! \prod_{i=1}^n S_\F(\Gamma_i)} \tilde \prod_{\F \, l = 1}^q \Gamma_l^{\tilde\bullet_\F R_l}.
\end{equs}
By Proposition \ref{prop:Sta_Orb}, 
\begin{equs} \label{eq:individaul_pairing}
	\frac{\prod_{j=1}^m S_{\M}(z^\beta)^{r_i}}{\prod_{i=1}^n S_{\F}(\Gamma_i)} = \prod_{i=1}^n N(\Gamma_i).
\end{equs}
counts the number of pairings of each individual pre-diagram. Then, we have to calculate the repetition in the symmetry of the forest. 
We  define a index set $\CJ_j$ such $\Phi(\Gamma_l) = z^{\beta_j}$ for any $l \in \CJ_j$.
Then we have
\begin{equs}
	\sum_{l \in \CJ_j} R_l = r_j
\end{equs}
and thus
\begin{equs}
	\frac{\prod_{j=1}^m r_j ! }{\prod_{l=1}^q R_l ! } = \prod_{j=1}^m {r_j \choose \{R_l\}_{l \in \CJ_j}}
\end{equs}
which counts different ways to assign $r_j z^{\beta_j}$ to Feynman diagrams whose images of the counting map are $z^{\beta_j}$.

	Let us then analyse the right-hand side of \eqref{eq:prove}. While forming a forest of Feynman diagrams from $\mathbb{E}\left[\prod_i \int_{\Lambda}H_{k_i}(X(x_i)) dx_i\right]$, There are two steps:  Firstly partition vertices into groups. Secondly, form individual connected Feynman diagrams using vertices in each group respectively. The classification of vertices is essentially computed by
	\begin{equs}
 \sum_{n \ge 1} 
	\sum_{\tiny \substack{\CM \ni \tilde{z}^{\tilde\beta} = \tilde\prod_{\M \, i=1}^n z^{\beta_i} \\ \sum_{i=1}^n \beta_i = \beta} }
	\sum_{\Phi(\tilde \prod_{\F \, l = 1}^q \Gamma_l^{\tilde\bullet_\F R_l}) = \tilde{z}^{\tilde\beta}}
	\prod_{k \in \mathbb{N}} {\beta(k) \choose \beta_1(k), \ldots, \beta_n(k)} \prod_{j=1}^m {r_j \choose \{R_l\}_{l \in \CJ_j}}.
	\end{equs}
	The second step is achieved by \eqref{eq:individaul_pairing}. Hence, we can conclude the proof.
\end{proof}
\begin{remark}
	This corollary can be generalised to other pairing-half-edges problems. 
\end{remark}

We finish this subsection by showing that the reduced extraction-contraction coproduct $\Delta_{\F}$ is adjoint to the product $\star_\F$ under the inner product defined in \eqref{inner_product_diagram}.
\begin{theorem} \label{them: adjoint_Feyman}
For any  $\tilde{\prod}_{\F \, i=1}^n \Gamma_i \in \CF_{-}$ and any $\Gamma, \bar\Gamma \in \mathbf{F} $,
	\begin{equs}
		\langle \Delta_\F \Gamma , \tilde{\prod}_{\F \, i=1}^n \Gamma_i  \otimes \bar\Gamma \rangle = 	\langle \Gamma, \tilde{\prod}_{\F \, i=1}^n \Gamma_i  \star_\F \bar\Gamma \rangle .
	\end{equs}

\end{theorem}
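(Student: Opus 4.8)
The plan is to prove the adjointness by unwinding both sides into sums over insertions and extractions indexed by the same combinatorial data, and then matching coefficients. On the right-hand side, by the alternative formula for $\star_\F$ we have
\begin{equs}
	\langle \Gamma, \tilde{\prod}_{\F \, i=1}^n \Gamma_i \star_\F \bar\Gamma \rangle
	= I\!\left( \tilde{\prod}_{\F \, i=1}^n \Gamma_i, \bar\Gamma, \Gamma \right) S_\F(\Gamma),
\end{equs}
where $I$ counts insertions of the forest $\tilde{\prod}_{\F \, i=1}^n \Gamma_i$ into $\bar\Gamma$ producing $\Gamma$ up to isomorphism. On the left-hand side, by the definition \eqref{eq:EC_F} of $\Delta_\F$,
\begin{equs}
	\langle \Delta_\F \Gamma, \tilde{\prod}_{\F \, i=1}^n \Gamma_i \otimes \bar\Gamma \rangle
	= \sum_{\CF_- \ni \bar\Gamma' \subsetneq \Gamma}
	\langle \bar\Gamma', \tilde{\prod}_{\F \, i=1}^n \Gamma_i \rangle \,
	\langle \Gamma/\bar\Gamma', \bar\Gamma \rangle
	= \sum_{\substack{\CF_- \ni \bar\Gamma' \subsetneq \Gamma \\ \bar\Gamma' \cong \tilde{\prod} \Gamma_i,\; \Gamma/\bar\Gamma' \cong \bar\Gamma}}
	S_\F(\bar\Gamma')\, S_\F(\bar\Gamma).
\end{equs}
So the heart of the matter is a purely combinatorial identity: the number of subgraphs $\bar\Gamma' \subsetneq \Gamma$ isomorphic (as a forest) to $\tilde\prod\Gamma_i$ whose contraction gives $\bar\Gamma$, weighted by $S_\F(\bar\Gamma')$, equals $I(\tilde\prod\Gamma_i,\bar\Gamma,\Gamma)\,S_\F(\Gamma)/S_\F(\bar\Gamma)$.

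The way I would organise this is to set up an explicit bijection at the level of \emph{labelled} objects. Fix labellings of the half-edges and vertices of $\Gamma$, of $\bar\Gamma$, and of each $\Gamma_i$. An ``insertion datum'' producing $\Gamma$ consists of: a choice of $n$ distinct vertices $v_1,\dots,v_n$ of $\bar\Gamma$; for each $i$, a bijection between the half-edges created by cutting $v_i$ and a subset of the half-edges grafted onto vertices of $\Gamma_i$; together with an isomorphism of the resulting diagram onto $\Gamma$. An ``extraction datum'' producing $(\tilde\prod\Gamma_i, \bar\Gamma)$ consists of: a subforest $\bar\Gamma' \subsetneq \Gamma$, an isomorphism $\bar\Gamma' \cong \tilde\prod\Gamma_i$ (matching the $i$-th connected component of $\bar\Gamma'$ to $\Gamma_i$, allowing the reindexing that $\tilde\prod$ is an unordered product — this is where the forest symmetry factor $S_\F(\tilde\prod\Gamma_i)$ versus $\prod_i S_\F(\Gamma_i)$ bookkeeping enters), and an isomorphism $\Gamma/\bar\Gamma' \cong \bar\Gamma$. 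The key observation is that cutting a vertex $v_i$ of $\bar\Gamma$ and grafting $\Gamma_i$ is exactly the inverse operation of contracting the copy of $\Gamma_i$ sitting at the vertex $v_i$ of $\bar\Gamma = \Gamma/\bar\Gamma'$: the half-edges of $\Gamma_i$ that received grafted legs correspond precisely to the edges of $\Gamma$ running between $\bar\Gamma'$ and its complement. This gives a bijection between labelled insertion data and labelled extraction data.

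Then I would quotient by the relevant symmetry groups. The left side, read as a sum over \emph{isomorphism classes} of subgraphs $\bar\Gamma'$, is obtained from the count of labelled extraction data by dividing by $|\mathrm{Aut}(\Gamma)| = S_\F(\Gamma)$ on the ``ambient'' side and multiplying back by $S_\F(\bar\Gamma')$ and $S_\F(\bar\Gamma)$ to account for the automorphisms absorbed into the isomorphisms $\bar\Gamma'\cong\tilde\prod\Gamma_i$ and $\Gamma/\bar\Gamma'\cong\bar\Gamma$; similarly $I(\tilde\prod\Gamma_i,\bar\Gamma,\Gamma)$ is the count of labelled insertion data divided by $S_\F(\tilde\prod\Gamma_i)$ and $S_\F(\bar\Gamma)$, times $S_\F(\Gamma)$. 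Since by \eqref{symmetry_factor_2}-type bookkeeping $S_\F(\tilde\prod\Gamma_i) = \prod_i r_i!\,S_\F(\Gamma_i)^{r_i}$ matches exactly the overcounting of the unordered forest in the extraction sum, the orbit-counting (Burnside / orbit–stabiliser) bookkeeping on both sides cancels identically, yielding the claimed equality. A cleaner alternative, if the labelled bijection proves fiddly, is to invoke the Guin–Oudom universal property: $\star_\F$ is \emph{defined} as the extension of the pre-Lie product $\triangleright$ through the universal enveloping algebra, and $\Delta_\F$ is the graded dual of that enveloping-algebra product; the duality $\langle \Gamma_1\triangleright\Gamma_2,\Gamma_3\rangle_{S_\F} = \langle \Gamma_1\otimes\Gamma_2, \delta\Gamma_3\rangle$ for the ``one-step'' coproduct $\delta$ (the reduced cocomposition extracting a single connected subgraph) then propagates to all of $\star_\F$ versus $\Delta_\F$ by functoriality, and one only needs the single-insertion case, where the half-edge bijection above is transparent.

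The main obstacle I expect is precisely the symmetry-factor accounting: making sure that the automorphisms of a subgraph $\bar\Gamma'$ that \emph{do not} extend to automorphisms of the ambient $\Gamma$, together with the automorphisms of the contracted diagram $\Gamma/\bar\Gamma'$ and the reindexing freedom of the unordered forest, are all counted with the correct multiplicity on both sides — this is exactly the subtlety that motivated the paper's non-standard choice of $S_\M$ and its reconciliation with $S_\F = |\mathrm{Aut}|$ via Proposition \ref{prop:Sta_Orb}. I would handle it by always working with fully labelled objects first (where everything is a literal set-theoretic bijection and no symmetry factors appear), establishing the labelled bijection cleanly, and only passing to isomorphism classes at the very end via a single uniform orbit–stabiliser argument, so that the $S_\F$'s appear on both sides in the same way and visibly cancel.
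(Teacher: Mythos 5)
Your proposal is correct and follows essentially the same route as the paper: both reduce the statement to the coefficient identity $n(\gamma,\bar\Gamma,\Gamma)=\frac{S_\F(\Gamma)}{S_\F(\gamma)S_\F(\bar\Gamma)}\,m(\gamma,\bar\Gamma,\Gamma)$ between extraction and insertion counts, establish it by an orbit--stabiliser double count (the paper phrases this via the quotient $\mathrm{Aut}(\gamma\triangleright_{(v,\phi)}\bar\Gamma)_{\gamma}/\mathrm{Aut}(\gamma)_{(v,\phi)}\simeq\mathrm{Aut}(\bar\Gamma)_{(v,\phi)}$ adapted from van Suijlekom, which is your labelled-data bijection in group-theoretic clothing), and then extend from single insertion to the simultaneous product with the $r_j!$ forest bookkeeping. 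Your ``labelled objects first, quotient at the end'' organisation and your Guin--Oudom fallback are both faithful to what the paper actually does.
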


\begin{proof}
The main idea is firstly combining and 
adapting the proofs of Lemma 12 and Lemma 14 in \cite{VS2007} to show the adjoint relation between a single insertion and a single extraction-contraction, and
then using the property of Guin--Oudom construction to generalise it to the simultaneous insertion and the reduced extraction-contraction coproduct.
Since the insertion defined in \cite{VS2007} is different from ours, necessary modifications have to be done.
\vspace{6pt}

We start with
\begin{equs}
	\langle \bar\Delta_\F \Gamma , \gamma \otimes \bar\Gamma \rangle = 	\langle \Gamma, \Gamma_1 \triangleright \bar\Gamma \rangle 
\end{equs}
as when $n=1$ in the forest, $\star_{\F}$ is reduced to $\triangleright$, and $\bar\Delta_\F$ is obtained by projecting result of $\Delta_\F$ to $\mathbf{F}_{-} \otimes \mathbf{F}$. Suppose now we have one 
$\Gamma= \gamma \triangleright_{(v,\phi)} \bar\Gamma$ where $v$ denotes the vertex which is the insertion spot in $\bar \Gamma$ and $\phi$ is the map sending vertices in $\gamma$ to  subsets of free legs in $C_v(\bar\Gamma)$. These subsets together form a partition of $C_v(\bar\Gamma)$. It amounts to say for any vertex $u \in \CV(\gamma)$
\begin{equs}
	\phi(u) = \{\ell_1,...,\ell_{m_u}\}
\end{equs}
where $\ell_i \in C_v(\bar\Gamma)$ for $i =1, \ldots, m_u$ and $\{\phi(u)\}_{u \in \CV(\gamma)}$ is a partition of $C_v(\bar\Gamma)$. It can be seen that 
the sum of all possible $\phi$ gives the operator $\CG(C_v(\bar\Gamma), \gamma)$ in the insertion defined in \eqref{eq:def_insertion}.
\vspace{6pt}
	
Let $M(v,\phi)$ be the number of $\gamma'$ that are images of $\gamma$ under some element in $\mathrm{Aut}(\gamma \triangleright_{(v,\phi)} \bar\Gamma)$.
Then we have the size of the automorphism that sends $\gamma$ to itself
	\begin{equs} \label{eq:stab_gamma}
		|\mathrm{Aut}(\gamma \triangleright_{(v,\phi)} \bar\Gamma)_{\gamma}| = \frac{|\mathrm{Aut}(\gamma \triangleright_{(v,\phi)} \bar\Gamma)| }{M(v,\phi)} = \frac{S_\F(\gamma \triangleright_{(v,\phi)} \bar\Gamma) }{M(v,\phi)}
	\end{equs}
	because $M(v,\phi)$ is the length of the orbit and $\mathrm{Aut}(\gamma \triangleright_{(v,\phi)} \bar\Gamma)_{\gamma}$ is the stabiliser. Let us rewrite the reduced extraction-contraction coproduct for  $ \Gamma \cong \gamma \triangleright_{(v,\phi)} \bar\Gamma$ as
		\begin{equs}
		\bar\Delta_\F ( \Gamma) = \sum_{\gamma \in \mathbf{F}_{-}}\sum_{\bar \Gamma \in \mathbf{F}}
		n(\gamma, \bar\Gamma,\Gamma) \gamma \otimes \bar\Gamma
	\end{equs}
where $n(\gamma, \bar\Gamma,\Gamma)$ is the number of extraction of $\gamma'$ from $\Gamma$ such that $\gamma'$ is isomorphic to $\gamma$, and the remaining part $\bar\Gamma'$ is isomorphic to $\bar\Gamma$. Let $(v,\phi)$ denote an equivalent class for any $\hat v, \hat \phi$  that after inserting $\hat \gamma \cong \gamma$ at $\hat v$ in $\hat \Gamma \cong \bar \Gamma$ through $\hat \phi$, $\hat \gamma$ is the image of $\gamma$ under some element in $\mathrm{Aut}(\Gamma)$. Then $M(v,\phi)$ is the size of the class $(v,\phi)$.
Since the sum of the length of all different orbits gives the size of the group, we have
\begin{equs}
		n(\gamma, \bar\Gamma, \Gamma) = \sum_{(v',\phi')} M(v',\phi')
\end{equs}
where the sum runs over all equivalent classes such that $\gamma \triangleright_{(v',\phi')} \bar\Gamma \cong \Gamma$. 
\vspace{6pt}

	Since the permutation in the automorphism $\mathrm{Aut}(\gamma \triangleright_{(v,\phi)} \bar\Gamma)_{\gamma}$ is either inside $\gamma$ or not,
	the quotient group
	\begin{equs}
		\mathrm{Aut}(\gamma \triangleright_{(v,\phi)} \bar\Gamma)_{\gamma} / \mathrm{Aut}(\gamma)_{(v,\phi)} \simeq \mathrm{Aut}(\bar\Gamma)_{(v,\phi)}
	\end{equs}
	where $\mathrm{Aut}(\gamma)_{(v,\phi)}$ is the intersection $\mathrm{Aut}(\gamma) \bigcap \mathrm{Aut}(\gamma \triangleright_{(v,\phi)} \bar\Gamma)_{\gamma}$ and $\mathrm{Aut}(\bar\Gamma)_{(v,\phi)}$ is the subgroup of $\mathrm{Aut}(\bar\Gamma)$ such that $v$ is mapped to itself and for edges attached to $v$ only those that have the same pre-image in $\phi$ can be permuted.
Therefore, by the orbit-stabiliser theorem
\begin{equs}
	|\mathrm{Aut}(\gamma \triangleright_{(v,\phi)} \bar\Gamma)_{\gamma}| = 
	\frac{S_\F(\gamma)S_\F(\bar\Gamma)}{|\mathrm{Aut}(\gamma)[(v,\phi) ]| |\mathrm{Aut}(\bar\Gamma)[(v,\phi)]|}
\end{equs}
where $\mathrm{Aut}(\mathrm{T})[\cdot]$ is the orbit of $\mathrm{Aut}(\mathrm{T})_{\cdot}$ for any graph $\mathrm{T}$. Moreover,
from \eqref{eq:stab_gamma} we have
\begin{equs}
	\frac{S_\F(\gamma)S_\F(\bar\Gamma)M(v,\phi)}{S_\F(\gamma \triangleright_{(v,\phi)} \bar\Gamma)}
	= {|\mathrm{Aut}(\gamma)[(v,\phi) ]| |\mathrm{Aut}(\bar\Gamma)[(v,\phi)]|}.
\end{equs}
Consequently we have 
\begin{equs} 
	n(\gamma, \bar \Gamma, \Gamma \cong \gamma \triangleright_{(v,\phi)}\bar\Gamma) &= \sum_{(v',\phi')} M(v',\phi')
	\\& = 
	\sum_{(v',\phi')}
	\frac{|\mathrm{Aut}(\gamma)[(v',\phi') ]| |\mathrm{Aut}(\bar\Gamma)[(v',\phi')]|S_\F(\gamma \triangleright_{(v,\phi)} \bar\Gamma)}{S_\F(\gamma)S_\F(\bar\Gamma)}
	\\& = 
	\frac{S_\F(\gamma \triangleright_{(v,\phi)} \bar\Gamma)}{S_\F(\gamma)S_\F(\bar\Gamma)} \sum_{(v',\phi')} |\mathrm{Aut}(\gamma)[(v',\phi') ]| |\mathrm{Aut}(\bar\Gamma)[(v',\phi')]|
\end{equs}
Finally, since $|\mathrm{Aut}(\gamma)[(v,\phi) ]| |\mathrm{Aut}(\Gamma)[(v,\phi)]|$ gives the number of  ways inserting $\gamma$ to $\bar\Gamma$ such that the result $\Gamma \cong \gamma \triangleright_{(v,\phi)} \bar\Gamma$, one has
\begin{equs} \label{eq:mn}
	n(\gamma, \bar\Gamma, \Gamma \cong \gamma \triangleright_{(v,\phi)}\bar\Gamma) 
	= 
	\frac{S_\F(\gamma \triangleright_{(v,\phi)} \bar\Gamma)}{S_\F(\gamma)S_\F(\bar\Gamma)} 
	m(\gamma, \bar\Gamma,  \Gamma \cong \gamma \triangleright_{(v,\phi)}\bar\Gamma),
\end{equs}
where $m(\gamma, \bar\Gamma,  \Gamma \cong \gamma \triangleright_{(v,\phi)}\bar\Gamma)$ is the number of $\Gamma \cong \gamma \triangleright_{(v,\phi)}\bar\Gamma$ one can get by inserting $\gamma$ to $\bar\Gamma$.

Then we have to show the case of $n > 1$ the following equation is satisfied:
	\begin{equs}
	\langle \Delta_\F \Gamma , \tilde{\prod}_{\F \, i=1}^n \Gamma_i  \otimes \bar\Gamma \rangle = 	\langle \Gamma, \tilde{\prod}_{\F \, i=1}^n \Gamma_i  \star_\F \bar\Gamma \rangle .
\end{equs}
When $\Gamma_i $ are distinct, it is an immediate consequence of \eqref{eq:mn} since
\begin{equs}
	n(\tilde{\prod}_{\F \, i=1}^n \Gamma_i, \bar \Gamma, \Gamma) = \prod_{i=1}^n \sum_{(v_i',\phi_i')} M(v_i',\phi_i'),
\end{equs}
\begin{equs}
	|\mathrm{Aut}(\tilde{\prod}_{\F \, i=1}^n \Gamma_i \star_{\F \prod_{i=1}^{n}(v_i,\phi_i)} \bar\Gamma)_{\tilde{\prod}_{\F \, i=1}^n \Gamma_i}| 
	&= \frac{|\mathrm{Aut}(\tilde{\prod}_{\F \, i=1}^n \Gamma_i \star_{\F \,\prod_{i=1}^{n}(v_i,\phi_i)} \bar\Gamma)| }{\prod_{i=1}^n M(v_i,\phi_i)} 
	\\&
	= \frac{S_\F(\Gamma) }{\prod_{i=1}^n M(v_i,\phi_i)},
\end{equs}
and 
	\begin{equs}
	\mathrm{Aut}(\tilde{\prod}_{\F \, i=1}^n \Gamma_i \star_{\F \,\prod_{i=1}^{n}(v_i,\phi_i)} \bar\Gamma)_{\tilde{\prod}_{\F \, i=1}^n \Gamma_i} / \prod_{i=1}^n \mathrm{Aut}(\Gamma_i)_{(v_i,\phi_i)} \simeq \mathrm{Aut}(\bar\Gamma)_{\prod_{i=1}^n(v,\phi)}
\end{equs}
where $\star_{\F \,\prod_{i=1}^{n}(v_i,\phi_i)}$ is the simultaneous insertion according to pairs $(v_i,\phi_i)$, and $M(v_i, \phi_i)$ is the number of $\Gamma_i'$ which is the image of $\Gamma_i$ under some automorphism $|\mathrm{Aut}(\tilde{\prod}_{\F \, i=1}^n \Gamma_i \star_{\F \, \prod_{i=1}^{n}(v_i,\phi_i)} \bar\Gamma)|$. Similar for the coefficient 
$m(\tilde{\prod}_{\F \, i=1}^n \Gamma_i, \bar \Gamma, \Gamma).$

Since we defined the symmetry factor of forests of Feynman diagrams as
\begin{equs}
	S_\F(\tilde\prod_{\F \, j=1}^q \Gamma_i^{\tilde{\bullet_{\F}}r_j}) = \prod_{j=1}^q r_j! S_\F(\Gamma_i)^{r_j}
\end{equs}
where $r_j!$ gives the permutation coefficient in both $n(\tilde\prod_{\F \, j=1}^q \Gamma_i^{\tilde{\bullet_{\F}}r_j}, \bar \Gamma, \Gamma)$ and $m(\tilde\prod_{\F \, j=1}^q \Gamma_i^{\tilde{\bullet_{\F}}r_j}, \bar \Gamma, \Gamma)$,
the case with repetition is a consequence of the case when $\Gamma_i$ in the forest are distinct.
\end{proof}

\begin{corollary}
	The adjoint dual of $\Delta^{\! -}_\F$ denoted by $\bar \star_\M$ is the extension of $\star_\M$ by
	defining $\bar \star_\M = \star_\M$ when there is no empty forest, and otherwise defining
	\begin{equs}
		\emptyset_\F \bar \star_{\F} \Gamma = \Gamma \quad \text{ and } \quad 
		\tilde{\prod}_{\F \, i=1}^n \Gamma_i \bar \star_{\F} \emptyset_\F = \tilde{\prod}_{\F \, i=1}^n \Gamma_i.
	\end{equs}
\end{corollary}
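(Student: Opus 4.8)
The plan is to deduce this directly from Theorem \ref{them: adjoint_Feyman} by bilinearity, after separating off the primitive terms. Recall the decomposition $\Delta^{\! -}_\F = \Delta_\F + P$ with $P\Gamma := \emptyset_\F \otimes \Gamma + \Gamma \otimes \emptyset_\F$, where $\Delta_\F$ is the reduced extraction--contraction coproduct \eqref{eq:EC_F}. Since the pairing \eqref{inner_product_diagram} is bilinear (extended to forests via the forest symmetry factor and to tensor products factorwise), it suffices to establish, for every $\Gamma \in \mathbf{F}$ and every pair $(F,\bar\Gamma)$ with $F \in \CF_- \cup \{\emptyset_\F\}$ and $\bar\Gamma \in \mathbf{F} \cup \{\emptyset_\F\}$ --- these being the only objects that occur in, respectively, the first and second tensor slots of $\Delta^{\! -}_\F\Gamma$, the second slot being connected because $\Gamma/\bar\Gamma$ is always connected --- the identity
\begin{equs}
\langle \Delta^{\! -}_\F \Gamma,\, F \otimes \bar\Gamma \rangle = \langle \Gamma,\, F \bar\star_\F \bar\Gamma \rangle ,
\end{equs}
treating the contributions of $\Delta_\F$ and of $P$ separately.

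For the reduced piece, every first slot appearing in $\Delta_\F\Gamma$ is a non-empty forest of $\CF_-$ and every second slot is a non-empty connected diagram, so $\langle \Delta_\F\Gamma, F\otimes\bar\Gamma\rangle = 0$ unless $F$ is a non-empty forest in $\CF_-$ and $\bar\Gamma \in \mathbf{F}$; in that case Theorem \ref{them: adjoint_Feyman} gives $\langle \Delta_\F\Gamma, F\otimes\bar\Gamma\rangle = \langle \Gamma, F\star_\F\bar\Gamma\rangle = \langle \Gamma, F\bar\star_\F\bar\Gamma\rangle$, the last step because $\bar\star_\F$ coincides with $\star_\F$ whenever no empty forest is involved. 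For the primitive piece, expanding gives $\langle P\Gamma, F\otimes\bar\Gamma\rangle = \langle\emptyset_\F, F\rangle\langle\Gamma,\bar\Gamma\rangle + \langle\Gamma, F\rangle\langle\emptyset_\F,\bar\Gamma\rangle$; since the pairing is diagonal with $\langle\emptyset_\F,\emptyset_\F\rangle = S_\F(\emptyset_\F) = 1$ and $\Gamma$ is a non-empty connected diagram (so $\langle\Gamma,\emptyset_\F\rangle = 0$), this equals $\langle\Gamma,\bar\Gamma\rangle$ if $F = \emptyset_\F$, equals $\langle\Gamma, F\rangle$ if $\bar\Gamma = \emptyset_\F$ and $F$ is non-empty, and vanishes if $F$ and $\bar\Gamma$ are both non-empty or both empty. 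These three outcomes are precisely $\langle\Gamma, F\bar\star_\F\bar\Gamma\rangle$ read from the defining clauses $\emptyset_\F\bar\star_\F\bar\Gamma = \bar\Gamma$ and $F\bar\star_\F\emptyset_\F = F$; adding the two pieces verifies the identity in every case.

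The only point that needs care --- and the closest thing here to an obstacle --- is the bookkeeping of which primitive summand of $P$ survives a given pairing: the asymmetric pair $\emptyset_\F\otimes\Gamma$, $\Gamma\otimes\emptyset_\F$ has to be matched against the asymmetric pair of extension clauses of $\bar\star_\F$, and this matching is forced precisely because a non-empty connected diagram is never isomorphic to $\emptyset_\F$, so that $\langle\Gamma,\emptyset_\F\rangle = 0$ kills the unwanted cross terms. Beyond this purely formal matching there is no combinatorial or analytic content: the statement follows from Theorem \ref{them: adjoint_Feyman} together with the definitions of $\Delta^{\! -}_\F$, $\bar\star_\F$ and the inner product \eqref{inner_product_diagram}, exactly mirroring the analogous extension $\Delta^{\! -}_\M \leftrightarrow \bar\star_\M$ carried out for multi-indices.
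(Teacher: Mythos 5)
Your proof is correct, and it follows the route the paper intends: the paper states this corollary without proof, treating it as immediate from Theorem~\ref{them: adjoint_Feyman} once the primitive terms $\emptyset_\F \otimes \Gamma + \Gamma \otimes \emptyset_\F$ are paired against the two extension clauses of $\bar\star_\F$, which is exactly the bookkeeping you carry out. Your observation that $\langle \Gamma, \emptyset_\F\rangle = 0$ kills the cross terms is the only content, and you have it right.
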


\begin{remark}
	It can be seen that the adjoint relation will still be valid if we change $\bar\Gamma \in \CF_{-}$ to $\bar\Gamma \in \CF$ in $\Delta_\F$ and change $\tilde{\prod}_{\F \, i=1}^n \Gamma_i \in \CF_{-}$ to $\tilde{\prod}_{\F \, i=1}^n \Gamma_i \in \CF$ in $\star_\F$ at the same time.
\end{remark}


\subsection{BPHZ Renormalisation via multi-indices}

If one wants to use $\star_\F$, $\star_\M$, $\Delta_\F$, and $\Delta_\M$ to implement the BPHZ renormalisation, terms with divergence have to be found using the $\deg$ map. Therefore, from this sub-section on, we clarify that $\star_\F$, $\star_\M$, $\Delta_\F$, $\Delta_\M$, and their full versions satisfy:
\begin{itemize}
	\item $F \star_\F \Gamma = 0$ for any $F \in \CF \setminus \CF_{-}$ ,
	\item  $F  \bar{\star}_{\F}  \Gamma = 0$ if $F \in \CF \setminus \{\CF_{-}, \emptyset_\F\}$ and $\Gamma \ne \emptyset_\F$,
	\item $\tilde{z}^{\tilde{\beta}} \star_\M z^\alpha = 0$ for any $\tilde{z}^{\tilde{\beta}} \in \CM \setminus \CM_{-}$ ,
	\item  $\tilde{z}^{\tilde{\beta}}   \bar{\star}_{\M}  z^\alpha  = 0$ if $\tilde{z}^{\tilde{\beta}} \in \CM \setminus \{\CM_{-}, \emptyset_\M\}$ and $z^\alpha \ne \emptyset_\M$,
	\item The terms $F \otimes \Gamma $ are set to be $0$ in $\Delta_{\F}$ and $\Delta_{\F}^{\!-}$ for any $ F \in \CF \setminus \CF_{-}$ except primitive terms in $\Delta_{\F}^{\!-}$,
	\item The terms $\tilde{z}^{\tilde{\beta}} \otimes z^\alpha $ are set to be $0$ in $\Delta_{\M}$ and $\Delta_{\M}^{\!-}$ for any $\tilde{z}^{\tilde{\beta}} \in \CM \setminus \CM_{-}$ except primitive terms in $\Delta_{\M}^{\!-}$.
\end{itemize}
It can be verified that all the properties and theorems in the previous chapters are still valid as long as this degree restriction described above is put accordingly to each product and its adjoint coproduct.
\vspace{6pt}

Same as for Feynman diagrams, twisted antipode $  \CA_{\M} : \langle\CM_{-}\rangle \rightarrow \langle\CM\rangle $ of multi-indices is defined as 
\begin{equs} \label{eq:def_antipode_MI}
	\begin{aligned}
\	& \CA_\M(\emptyset_\M) := \emptyset_\M,  \quad \CA_{\M}( \tilde z^{\tilde{\alpha}} \tilde{\bullet}_\M \tilde z^{\tilde{\beta}} ) := \CA_{\M}( \tilde z^{\tilde{\alpha}} ) \tilde{\bullet}_\M \CA_{\M}( \tilde z^{\tilde{\beta}} ) \quad \text{ for any } \tilde z^{\tilde{\alpha}}, \tilde z^{\tilde{\beta}} \in \mathcal{M}_-,
	\\&
	\CA_\M( z^{ \beta}) := - z^{\beta} - \mu_{\M} \circ(\CA_\M \otimes \id)\circ\Delta_{\M}  z^{ \beta},
	\quad \text{ for any }  z^{ \beta} \in \mathbf{M}_-.
	\end{aligned}
\end{equs}
For any forests $\tilde{z}^{\tilde \gamma} \in \CM \setminus \CM_{-}$ we say $\CA_{\M}(\tilde{z}^{\tilde \gamma}) = 0$.

For the detection of divergence, Corollary \ref{coro:commute} together with equality \eqref{eq:deg} proves the multi-indices counterpart to Proposition \ref{prop:divergence_degree}. It follows the theorem below.
\begin{theorem} \label{them:main}
The multi-indices renormalisation for $\tilde{z}^{\tilde{\alpha}} \in \CM$
\begin{equs}
	\hat{M}_{\M} (\tilde{z}^{\tilde{\alpha}}) := \left( \Pi_\M \left(\CA_\M(\cdot )\right) \otimes \id  \right) \Delta^{\!-}_{\M} (\tilde{z}^{\tilde{\alpha}})
\end{equs}
is equivalent to the BPHZ renormalisation in the sense that
\begin{equs}
	\Pi_{\M} \circ \hat{M}_\M = \Pi_{\F} \circ \hat{M}_\F\circ \CP
\end{equs}
\end{theorem}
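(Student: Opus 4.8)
The plan is to establish the identity $\Pi_{\M} \circ \hat{M}_\M = \Pi_{\F} \circ \hat{M}_\F \circ \CP$ by unfolding both sides along the definitions of the renormalisation maps and exploiting the web of adjoint/morphism relations already in place. Recall $\hat{M}_\M = (\Pi_\M(\CA_\M \cdot) \otimes \id)\Delta^{-}_\M$ and $\hat{M}_\F = (\Pi_\F(\CA_\F \cdot) \otimes \id)\Delta^{-}_\F$. The first step is to reduce everything to the level of the twisted antipodes: since $\Pi_\M$ and $\Pi_\F$ are multiplicative over forests, and since (Corollary~\ref{coro:commute}) $\Pi_\M = \Pi_\F \circ \CP$ on $\langle \CM\rangle$, it suffices to prove the ``coproduct intertwining'' relation
\begin{equs}\label{eq:plan-intertwine}
	(\CP \otimes \CP)\,\Delta^{-}_\M = \Delta^{-}_\F\,\CP
\end{equs}
as maps $\langle \CM\rangle \to \langle \CF\rangle \otimes \langle \CF\rangle$, together with the ``antipode intertwining'' $\CP \circ \CA_\M = \CA_\F \circ \CP$ on $\langle \CM_-\rangle$. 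Given \eqref{eq:plan-intertwine} and the antipode intertwining, one computes
$\Pi_\F \circ \hat M_\F \circ \CP = (\Pi_\F(\CA_\F \CP \cdot)\otimes \Pi_\F \CP)\Delta^-_\M = (\Pi_\F(\CP \CA_\M \cdot) \otimes \Pi_\M)\Delta^-_\M = (\Pi_\M(\CA_\M \cdot)\otimes \Pi_\M)\Delta^-_\M = \Pi_\M \circ \hat M_\M$, which is the claim. Note the antipode intertwining is itself a consequence of \eqref{eq:plan-intertwine} by the recursive definitions \eqref{eq:antipode1} and \eqref{eq:def_antipode_MI}, run by induction on the norm $|z^\beta|$, using multiplicativity of $\CP$ over forests and of $\mu_\F, \mu_\M$.

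**Proving the coproduct intertwining.**
The heart of the argument is \eqref{eq:plan-intertwine}, and the natural route is duality. By Theorem~\ref{them: adjoint_Feyman}, $\Delta_\F$ is adjoint to $\star_\F$ under $\langle\cdot,\cdot\rangle$ defined via $S_\F$; by Proposition~\ref{LOT_coproduct2} (more precisely the adjoint relation \eqref{dual_Delta}), $\Delta_\M$ is adjoint to $\star_\M$ under $\langle\cdot,\cdot\rangle$ defined via $S_\M$; by Proposition~\ref{prop:adjoint_P_Phi} (in its forest-generalised form), $\CP$ is adjoint to the counting map $\Phi$ under these same two inner products; and by \eqref{counting_map_morphism}, $\Phi$ is a morphism, $\Phi(\tilde\prod_\F \Gamma_i \star_\F \bar\Gamma) = \Phi(\tilde\prod_i \Gamma_i)\star_\M \Phi(\bar\Gamma)$. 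I would therefore test \eqref{eq:plan-intertwine} against an arbitrary forest-times-diagram $\tilde\prod_\F \Gamma_i \otimes \bar\Gamma$: the left side pairs as $\langle \CP(z^\beta), \tilde\prod_\F\Gamma_i\star_\F\bar\Gamma\rangle$ after using adjointness of $\Delta_\F$, which by adjointness of $\CP$/$\Phi$ equals $\langle z^\beta, \Phi(\tilde\prod_\F\Gamma_i\star_\F\bar\Gamma)\rangle = \langle z^\beta, \Phi(\tilde\prod_i\Gamma_i)\star_\M\Phi(\bar\Gamma)\rangle$; the right side, pairing $(\CP\otimes\CP)\Delta^-_\M z^\beta$ against the same test element and using adjointness of $\CP$ twice then adjointness of $\Delta^-_\M$, equals $\langle z^\beta, \Phi(\tilde\prod_i\Gamma_i)\bar\star_\M\Phi(\bar\Gamma)\rangle$. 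These coincide. The only subtlety is bookkeeping the primitive (unit) terms: one has to check the empty-forest cases separately, which is exactly what the definitions of $\Delta^-_\M$, $\Delta^-_\F$ and the extended products $\bar\star_\M$, $\bar\star_\F$ arrange, and one has to check that the degree truncations (the bulleted conventions at the start of Section~\ref{sec:BPHZ}) on both sides match under $\CP$ — this holds because $\deg\Gamma = \deg\Phi(\Gamma)$ by \eqref{eq:deg}, so $\CP$ maps $\mathbf{M}_-$ into (spans of) $\mathbf{F}_-$ and vice versa, hence the projections defining the truncated $\Delta^-_\M$, $\Delta^-_\F$ are compatible.

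**Where the difficulty lies.**
The conceptual steps are all available as cited results, so the real work is twofold. First, one must confirm that all the adjointness statements hold in the forest-generalised, degree-truncated setting actually used here — Proposition~\ref{prop:adjoint_P_Phi} is stated for single diagrams, Theorem~\ref{them: adjoint_Feyman} requires $\tilde\prod_\F\Gamma_i \in \CF_-$, and the remark after it notes the extension; I would carefully track that these hypotheses are met when we test against an arbitrary element and that the truncation conventions are self-consistent. Second, and this is the main obstacle, one must handle the non-faithfulness issue: $\CP$ is injective (Proposition~\ref{prop:injective_P}), but to conclude \eqref{eq:plan-intertwine} from equality of pairings one needs the pairing $\langle\cdot,\cdot\rangle$ on $\langle\CF\rangle\otimes\langle\CF\rangle$ to be non-degenerate on the relevant (finite-dimensional, graded) subspaces — which it is, being diagonal in the basis of (forests of) diagrams with nonzero entries $S_\F$ — and one needs to know that both sides of \eqref{eq:plan-intertwine} land in the image of $\CP\otimes\CP$ so that testing against forests suffices; equivalently, one argues graded-piece by graded-piece where everything is finite-dimensional. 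Once these finiteness/non-degeneracy points are secured, the chain of adjunctions closes the proof. I would present \eqref{eq:plan-intertwine} as a standalone lemma, prove it by the pairing computation above, then deduce the antipode intertwining by induction, and finally assemble the theorem in the three-line computation displayed above.
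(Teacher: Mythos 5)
Your proposal is correct and follows essentially the same route as the paper: the paper also reduces the theorem, via the multiplicativity of the valuations and Corollary~\ref{coro:commute}, to the intertwining relation $(\CP\otimes\CP)\circ\Delta_\M = \Delta_\F\circ\CP$ (stated as Theorem~\ref{them:main2}), and proves that relation by exactly the chain of adjunctions you describe — adjointness of $\Delta_\F$/$\star_\F$, of $\CP$/$\Phi$, of $\Delta_\M$/$\star_\M$, and the morphism property of $\Phi$ for $\star$. Your explicit treatment of the antipode intertwining by induction and of the non-degeneracy/truncation bookkeeping only makes explicit steps the paper leaves implicit.
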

\begin{proof}
	The fact that the twisted antipodes are defined through the reduced extraction-contraction coproduct together with the Corollary~\ref{coro:commute} indicates that it boils down to prove the following theorem.
\end{proof}
	\begin{theorem} \label{them:main2}
	For any multi-indice $z^\beta \in \mathbf{M}$
		\begin{equs}
			\left( \CP \otimes \CP \right) \circ \Delta_\M (z^\beta) = \Delta_\F \circ \CP (z^\beta).
		\end{equs}
	\end{theorem}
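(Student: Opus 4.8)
The plan is to prove the identity $(\CP \otimes \CP) \circ \Delta_\M(z^\beta) = \Delta_\F \circ \CP(z^\beta)$ by exploiting the adjointness relations already established, namely that $\Phi$ and $\CP$ are adjoint under the inner products $\eqref{eq:inner_multiindices}$ and $\eqref{inner_product_diagram}$ (Proposition \ref{prop:adjoint_P_Phi}), that $\Delta_\M$ is adjoint to $\star_\M$ (formula \eqref{dual_Delta}), that $\Delta_\F$ is adjoint to $\star_\F$ (Theorem \ref{them: adjoint_Feyman}), plus the morphism property \eqref{counting_map_morphism} relating $\star_\F$ and $\star_\M$. Since both sides of the claimed identity live in $\langle \CF\rangle\otimes\langle\CF\rangle$, it suffices to test them against an arbitrary elementary tensor $\tilde\prod_{\F\,i=1}^n\Gamma_i\otimes\bar\Gamma$ using the product inner product on $\langle\CF\rangle\otimes\langle\CF\rangle$ (taking $\langle a\otimes b, c\otimes d\rangle := \langle a,c\rangle\langle b,d\rangle$); because $S_\F$ is nondegenerate this pairing separates points, so equality of all these pairings gives the result. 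The forests appearing with nonzero coefficient in $\Delta_\F\circ\CP(z^\beta)$ all lie in $\CF_-\otimes\mathbf F$ by the degree restriction, and by \eqref{eq:deg} the corresponding multi-index forests lie in $\CM_-\otimes\mathbf M$, so the two sides have matching support and the restricted versions of $\star$ are the relevant ones.

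The key computation runs as follows. On the right-hand side, $\langle \Delta_\F\circ\CP(z^\beta),\ \tilde\prod_{\F}\Gamma_i\otimes\bar\Gamma\rangle = \langle \CP(z^\beta),\ \tilde\prod_{\F}\Gamma_i\star_\F\bar\Gamma\rangle$ by Theorem \ref{them: adjoint_Feyman}, and then by the adjointness of $\CP$ with $\Phi$ (the forest version, which the excerpt notes extends Proposition \ref{prop:adjoint_P_Phi}), this equals $\langle z^\beta,\ \Phi(\tilde\prod_{\F}\Gamma_i\star_\F\bar\Gamma)\rangle$, which by the morphism property \eqref{counting_map_morphism} equals $\langle z^\beta,\ \Phi(\tilde\prod_\M\Gamma_i)\star_\M\Phi(\bar\Gamma)\rangle = \langle z^\beta,\ \tilde\prod_\M\Phi(\Gamma_i)\star_\M\Phi(\bar\Gamma)\rangle$. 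On the left-hand side, $\langle (\CP\otimes\CP)\Delta_\M(z^\beta),\ \tilde\prod_{\F}\Gamma_i\otimes\bar\Gamma\rangle$: writing $\Delta_\M(z^\beta)=\sum \tilde z^{\tilde\gamma}\otimes z^\alpha$, the factor $\langle \CP(\tilde z^{\tilde\gamma}), \tilde\prod_\F\Gamma_i\rangle\langle\CP(z^\alpha),\bar\Gamma\rangle$ becomes, again by forest-adjointness of $\CP$ and $\Phi$, $\langle\tilde z^{\tilde\gamma},\Phi(\tilde\prod_\F\Gamma_i)\rangle\langle z^\alpha,\Phi(\bar\Gamma)\rangle = \langle \Delta_\M(z^\beta),\ \tilde\prod_\M\Phi(\Gamma_i)\otimes\Phi(\bar\Gamma)\rangle$, and this equals $\langle z^\beta,\ \tilde\prod_\M\Phi(\Gamma_i)\star_\M\Phi(\bar\Gamma)\rangle$ by the defining adjointness \eqref{dual_Delta} of $\Delta_\M$ with $\star_\M$. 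The two sides therefore agree for every test tensor, and nondegeneracy of the pairing finishes the proof.

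The main obstacle I anticipate is not the chain of adjunctions itself but making sure the bookkeeping of \emph{forests} (as opposed to single connected objects) is airtight at each step: one must check that the forest-level versions of Propositions \ref{prop:injective_P} and \ref{prop:adjoint_P_Phi} (asserted in the excerpt right after the generalisation of $\CP$ to $\CM$) genuinely give $\langle\CP(\tilde z^{\tilde\gamma}),\tilde F\rangle = \langle\tilde z^{\tilde\gamma},\Phi(\tilde F)\rangle$ with the forest symmetry factors $\eqref{symmetry_factor_2}$ and $S_\F(\tilde\prod\Gamma_i^{\tilde\bullet_\F r_i})=\prod r_i!S_\F(\Gamma_i)^{r_i}$ matching up correctly, and that the morphism property \eqref{counting_map_morphism} and the adjointness of Theorem \ref{them: adjoint_Feyman} both hold in the forest-to-single setting $\CF\times\mathbf F\to\langle\mathbf F\rangle$ with the appropriate degree truncation in force. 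A secondary point to verify is that restricting $\Delta_\M$, $\Delta_\F$, $\star_\M$, $\star_\F$ to their degree-truncated versions (as stipulated at the start of Section \ref{sec:BPHZ}) does not break any of the adjoint relations being invoked — the excerpt asserts this, and one should simply note it applies. Once these structural facts are in hand, the proof is a short, purely formal dualization argument; I would also include the one-line remark that $\CP$ being defined only on $\mathbf M$ (extended to $\CM$ multiplicatively) together with $\Delta_\M$ producing forests in $\CM_-\otimes\mathbf M$ keeps everything inside the domains where the maps are defined.
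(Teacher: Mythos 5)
Your proposal is correct and follows essentially the same route as the paper's own proof: test both sides against an arbitrary tensor $\tilde{\prod}_{\F\,i=1}^n \Gamma_i \otimes \bar\Gamma$, reduce the right-hand side via Theorem~\ref{them: adjoint_Feyman} and Proposition~\ref{prop:adjoint_P_Phi} to $\langle z^\beta, \Phi(\tilde{\prod}_{\F\,i=1}^n \Gamma_i \star_\F \bar\Gamma)\rangle$, reduce the left-hand side via the $\CP$--$\Phi$ and $\Delta_\M$--$\star_\M$ adjunctions to $\langle z^\beta, \Phi(\tilde{\prod}_{\F\,i=1}^n \Gamma_i) \star_\M \Phi(\bar\Gamma)\rangle$, and identify the two by the morphism property \eqref{counting_map_morphism}. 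Your additional remarks on the forest-level symmetry factors and the degree truncation are sensible points of care but do not alter the argument, which matches the paper's.
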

	
	\begin{proof}
		We start from the right-hand side and by the duality between $\star_\F$ and $\Delta_{\F}$ in Theorem~\ref{them: adjoint_Feyman} ,
		 for any $\tilde{\prod}_{\F \, i=1}^n \Gamma_i  \in \CF_{-}$ and $\bar\Gamma \in \mathbf{F}$, one has
			\begin{equs}
				\langle \Delta_{\F} \circ\CP(z^\beta) ,  \tilde{\prod}_{\F \, i=1}^n \Gamma_i  \otimes \bar\Gamma \rangle = 	\langle \CP(z^\beta),  \tilde{\prod}_{\F \, i=1}^n \Gamma_i  \star_\F \bar\Gamma \rangle.
		\end{equs}
		By the duality between $\CP$ and $\Phi$ in Proposition~\ref{prop:adjoint_P_Phi} we have
		\begin{equs}
			\langle \CP(z^\beta),\tilde{\prod}_{\F \, i=1}^n \Gamma_i  \star_\F \bar\Gamma \rangle  =	\langle z^\beta, \Phi(\tilde{\prod}_{\F \, i=1}^n \Gamma_i  \star_\F \bar\Gamma) \rangle.
		\end{equs}
		Then we calculate the left-hand side. By the duality between $\CP$ and $\Phi$ we have
		\begin{equs}
			\langle 	\left( \CP \otimes \CP \right) \Delta_\M z^{\beta}  , \tilde{\prod}_{\F \, i=1}^n \Gamma_i  \otimes \bar\Gamma \rangle 
			= \langle \Delta_\M z^{\beta}  , \Phi(\tilde{\prod}_{\F \, i=1}^n \Gamma_i )\otimes \Phi(\bar \Gamma) \rangle 
		\end{equs}
		Then by the duality between $\star_\M$ and $\Delta_\M$ we have
		\begin{equs}
			\langle \Delta_\M z^{\beta}  , \Phi(\tilde{\prod}_{\F \, i=1}^n \Gamma_i )\otimes \Phi(\bar \Gamma) \rangle 
			= 	\langle 	 z^{\beta}  ,\Phi(\tilde{\prod}_{\F \, i=1}^n \Gamma_i) \star_\M \Phi(\bar \Gamma) \rangle 
		\end{equs}
		From Proposition \ref{prop:morphism}, and the universal property \cite[page 29]{Abe} between the pre-Lie product and its universal enveloping algebra from the 
		Guin-Oudom generalisation 
		\begin{equs}
			\langle z^{\beta}  ,\Phi(\tilde{\prod}_{\F \, i=1}^n \Gamma_i) \star_\M \Phi(\bar \Gamma) \rangle 
			= \langle 	 z^{\beta}  , \Phi(\tilde{\prod}_{\F \, i=1}^n \Gamma_i \star_\F \bar \Gamma) \rangle
		\end{equs}
		which allows us to conclude.
	\end{proof}
	
 We illustrate the results from Corollary \ref{coro:commute} and Theorem \ref{them:main} by the following commutative diagram
{\small 
	\begin{equation}	\label{main_diagram}
		\begin{tikzcd} 
			&\mathbb{E}\left[\prod_i \int_{\Lambda}H_{k_i}(X(x_i)) dx_i\right] &&\R_{\BPHZ} \left(\mathbb{E}\text{$\left[\prod_i \int_{\Lambda}H_{k_i}(X(x_i)) dx_i\right] $}\right) 
			\\
			& \langle\CM \rangle \arrow[u, "\Pi_\M"] \arrow[d, "\mathcal{P}"]   \arrow[r, "\Delta_\M^-"] & \langle\CM_{-}\rangle \otimes \langle \CM \rangle
			\arrow[r, "  \mu_{\M} \circ (\CA_{\M} \otimes \id )"] \arrow[d, "\CP \otimes \CP"] \quad 
			& \quad \langle\CM \rangle \arrow[u, "\Pi_{\M}"] \arrow[d, "\CP "] 
			\\
			& \langle\CF \rangle \arrow[r, "\Delta_\F^-"] \arrow[uu, bend left=60, "\Pi_\F"]
			&\langle \CF_{-} \rangle \otimes \langle \CF \rangle  \quad
			\arrow[r, " \mu_\F \circ (\CA_{\F} \otimes \id )"]& \quad \langle \CF \rangle
			\arrow[uu,bend right=60,"\Pi_\F"]
		\end{tikzcd}
	\end{equation}
where $\R_{\BPHZ}$ denotes the BPHZ renormalisation.

Before moving to the main theorem stating the method to find the renormalised measure, let us rigorously prove the Feynman diagram representation of the cumulant expansion of the partition function. 
	\begin{lemma} \label{prop:cumulant_F}
		The cumulant expansion of $Z(\alpha) / Z(0)$ admits the following Feynman diagram representation.
		\begin{equs}
			\log	\mathbb{E}\left[\exp\left(-\int_\Lambda\sum_{k \in \mathbb{N}} \alpha_k H_{k}(X(x)) dx\right)\right]  = \sum_{\Gamma \in \mathbf{F}}\frac{\Upsilon^\alpha_\F (\Gamma)N(\Gamma)}{\hat{S}_\F(\Gamma)} \Pi_\F(\Gamma).
		\end{equs}
		where the coefficients are 
		\begin{equs} 
			\Upsilon^\alpha_\F (\Gamma) = \prod_{k \in \mathbb{N}} (-\alpha_k)^{\beta(\Gamma, k)}, \quad \quad
			\hat{S}_\F(\Gamma) = \prod_{k \in \mathbb{N}} \beta(\Gamma, k)!
		\end{equs}
		where $\beta(\Gamma, k)$ is the frequency of $z_k$ in $\Phi(\Gamma)$.
	\end{lemma}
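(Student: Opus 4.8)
The plan is to derive the Feynman diagram representation directly from the cumulant expansion \eqref{cumulant} by a careful bookkeeping of combinatorial factors, using the multinomial theorem and the Linked Cluster Theorem. First I would expand the exponential and take the logarithm: the Linked Cluster Theorem tells us that only \emph{connected} diagrams survive, so that
\begin{equs}
	\log \mathbb{E}\left[\exp\left(-\int_\Lambda \sum_{k \in \mathbb{N}} \alpha_k H_k(X(x))\, dx\right)\right]
	= \sum_{n=2}^\infty \frac{1}{n!} \proj\left(\mathbb{E}\left[\left(\sum_{k\in\mathbb{N}} -\alpha_k \int_\Lambda H_k(X(x))\, dx\right)^n\right]\right),
\end{equs}
where $\proj$ retains only the connected pairings. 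The core of the argument is then to reorganise this sum by grouping terms according to the isomorphism class of the connected Feynman diagram they produce.

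Next I would apply the multinomial theorem to $\left(\sum_k -\alpha_k \int_\Lambda H_k\, dx\right)^n$: a term in this expansion is indexed by a function assigning to each of the $n$ factors an arity $k$, or equivalently by a multi-index $(\beta(\Gamma,k))_k$ with $\sum_k \beta(\Gamma,k) = n$. Such a term comes with coefficient $\frac{n!}{\prod_k \beta(\Gamma,k)!}\prod_k(-\alpha_k)^{\beta(\Gamma,k)}$, which accounts for the $\frac{1}{n!}$ prefactor, the factor $\Upsilon^\alpha_\F(\Gamma) = \prod_k(-\alpha_k)^{\beta(\Gamma,k)}$, and the denominator $\hat S_\F(\Gamma) = \prod_k \beta(\Gamma,k)!$. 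Then, for a fixed collection of $n$ vertices (with prescribed arities $\beta(\Gamma,\cdot)$), taking the Gaussian expectation amounts to summing over all ways of pairing the half-edges; by Lemma~\ref{lemma:expectation} and the generalised version \eqref{eq:pairing_problem}, the connected pairings that yield a diagram isomorphic to a fixed connected $\Gamma$ number exactly $N(\Gamma)$, and each contributes $\Pi_\F(\Gamma)$. Collecting, the coefficient of $\Pi_\F(\Gamma)$ is $\frac{\Upsilon^\alpha_\F(\Gamma) N(\Gamma)}{\hat S_\F(\Gamma)}$, as claimed. Note the sum starts at $n=2$ because a single vertex cannot form a diagram with no self-connections, which is why $\mathbf{F}$ contains only diagrams with at least two vertices; this is consistent since $\beta(\Gamma,k) = \beta(\Phi(\Gamma),k)$ by definition of the counting map $\Phi$ in \eqref{eq:Phi_map}.

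The main obstacle, and the point requiring the most care, is making the passage from ``summing over labelled terms in the multinomial expansion together with labelled pairings of half-edges'' to ``summing over isomorphism classes of unlabelled connected Feynman diagrams'' fully rigorous, i.e.\ tracking precisely how the overcounting by $n!$ (permutations of the $n$ factors/vertices) and the $k!$-type factors interact with $N(\Gamma)$, which was \emph{defined} as the number of distinct pairings producing a diagram isomorphic to $\Gamma$ rather than a labelled one. Concretely one must check that the number of labelled pairings (on vertices labelled $1,\dots,n$ with fixed arities) giving $\Gamma$, when divided by the symmetry already absorbed into the multinomial coefficient, equals $N(\Gamma)$; here I would lean on the Orbit--Stabiliser bookkeeping already carried out in Proposition~\ref{prop:Sta_Orb}, which relates $S_\M(\Phi(\Gamma))$, $S_\F(\Gamma)$, and $N(\Gamma)$, to confirm the counts are consistent. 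A secondary technical point is justifying the termwise manipulation of the (formal or analytic) power series and the interchange of $\log$, expectation, and summation, which is standard given the formal-power-series setting in $\alpha$ but should be noted; the Linked Cluster Theorem references \cite{connect1,connect2,connect3,connect4,BK23} supply the connectedness reduction. Once these are in place the identification of coefficients is immediate.
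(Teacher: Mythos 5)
Your proposal follows essentially the same route as the paper's proof: invoke the Linked Cluster Theorem to reduce to connected pairings, apply the multinomial theorem so that the $1/n!$ cancels against the multinomial coefficient leaving $\Upsilon^\alpha_\F(\Gamma)/\hat S_\F(\Gamma)$, and identify the remaining Gaussian pairing count with $N(\Gamma)\Pi_\F(\Gamma)$. The extra care you flag about labelled versus unlabelled pairings (via the Orbit--Stabiliser relation of Proposition~\ref{prop:Sta_Orb}) is a reasonable refinement of a step the paper states without elaboration, but it does not change the argument.
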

	\begin{proof}
		By the Linked Cluster Theorem  \cite{connect1,connect2,connect3} (see also \cite[Section 3.3]{connect4} and \cite[Propisition 3.1]{BK23} for the explanation in the Hopf algebra setting), the moment-generating function is obtained by projecting moments in the power series to the valuation of connected Feynman diagrams, i.e.,
		\begin{equs} \label{eq:def_cumulant}
			&\log	\mathbb{E}\left[\exp\left(-\int_\Lambda\sum_{k \in \mathbb{N}} \alpha_k H_{k}(X(x)) dx\right)\right] 
			\\=& 
			\sum_{n=2}^\infty \frac{1}{n!}\proj \left(\mathbb{E}\left[
			\left(\sum_{k \in \mathbb{N}} -\alpha_k \int_{\Lambda}H_{k}(X(x)) dx\right)^n
			\right]\right)
		\end{equs}
		where $\proj$ is the projection.
		Here the connected Feynman diagrams are formed by regarding each $H_k$ as a vertex with $k$ half-edges and by then pairing all half-edges from different vertices in all possible ways. 
		
		Then, by the multinomial theorem, the second line of \eqref{eq:def_cumulant} equals
		\begin{equs}
			&\sum_{n=2}^\infty \frac{1}{n!}
			\sum_{\substack{\{m_k\}_{k \in \mathbb{N}} \\ \sum_{k}m_k = n}}
			{n \choose \{m_k\}_{k \in \mathbb{N}}} \prod_{k \in \mathbb{N}}(-\alpha_k)^{m_k}
			\proj \left(
			\mathbb{E}\left[
			\prod_{k \in \mathbb{N}} 
			\left(\int_{\Lambda}H_{k}(X(x)) dx\right) ^n
			\right]\right)		
			\\=&
			\sum_{n=2}^\infty 
			\sum_{\substack{\{m_k\}_{k \in \mathbb{N}} \\ \sum_{k}m_k = n}}
			\prod_{k \in \mathbb{N}} \frac{(-\alpha_k)^{m_k}}{m_k!}
			\proj \left(
			\mathbb{E}\left[
			\prod_{k \in \mathbb{N}} 
			\left(\int_{\Lambda}H_{k}(X(x)) dx\right) ^n
			\right]\right)
			\\=&
			\sum_{\Gamma \in \mathbf{F}}\frac{\Upsilon^\alpha_\F (\Gamma)}{\hat{S}_\F(\Gamma)}
			\left(  N(\Gamma)\Pi_\F(\Gamma)  \right)
		\end{equs}
		where for $k \in \mathbb{M}$,  $m_k \in \mathbb{N}$.
\end{proof}

Finally, we are able to find the renormalised measure from renormalised cumulant by the following Theorem which is the generalisation of \cite[Prop. 3.10]{BK23}.
}
\begin{theorem} \label{prop:exponential}
	For a model with rule $\CR$ and the renormalisation $\R_{\BPHZ}$ through the linear map 
	\begin{equs}
		\hat{M}_\M := \left(\Pi_{\M} (\CA_{\M}\cdot) \otimes \id \right) \Delta_{\M}^{-}  
	\end{equs}
	of multi-indices, one has in terms of formal series
		\begin{equs}
		\R_{\BPHZ} \exp \left(-  \int_\Lambda \sum_{k \in \CK(\CR)} \alpha_k H_k(X(x)) dx \right) = \exp \left(-  \int_\Lambda \sum_{k \in \CK(\CR) \cup \{0 \}} (\alpha_k +\gamma_k) H_k(X(x)) dx \right)
	\end{equs}
where $H_k$ represents the $k$-th order Hermite polynomial, and the coefficients are
	\begin{equs}
		\gamma_k =   - \sum_{z^{\check{\delta}} \in \check{\mathbf{M}}_{\CR,k}} 
		\sum_{z^{\delta} \in \mathbf{M}_-}  \Pi_{\M}\CA_{\M}(z^{\delta})
		 \frac{ \langle  D^k z^{\delta} ,z^{\check{\delta}} \rangle}{k! \hat{S}_{\M}(z^{\check{\delta}}) S_{\M}(z^{\delta})}
		  \Upsilon^{\alpha}_\M[z^{\check{\delta}}],
	\end{equs}
\begin{equs}
	\Upsilon^{\alpha}_\M[z^\beta] = \prod_{k \in \mathbb{N}} (-\alpha_k)^{\beta(k)}, \quad \hat{S}_{\M}(z^{\beta}) = \prod_{k \in \mathbb{N}} \beta(k)! .
\end{equs}
Here $\check{\mathbf{M}}_{\CR,k}$ is the space of multi-indices that can form Feynman diagrams with $k$ unpaired half-edges, and for any $\check{\delta}(k) \in \check{\mathbf{M}}_{\CR,k}$ the arity $k \in \CK(\CR)$ if $\check{\delta}(k) \ne 0$.
	\end{theorem}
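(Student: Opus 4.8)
Taking logarithms, the claim becomes an identity between cumulants. By Lemma~\ref{prop:cumulant_F} and Corollary~\ref{coro:commute} (which identifies $\Pi_\F\circ\CP$ with $\Pi_\M$), the normalised partition function satisfies $\log\mathbb{E}[\exp(-\int_\Lambda\sum_k\alpha_k H_k(X(x))\,dx)]=\sum_{z^\beta\in\mathbf{M}}\frac{\Upsilon^\alpha_\M(z^\beta)}{\hat S_\M(z^\beta)}\Pi_\M(z^\beta)=:\mathcal{C}_\alpha$, and likewise $\mathcal{C}_{\alpha+\gamma}$ for the shifted coupling. Since $\R_{\BPHZ}$ acts on the measure by exponentiating the renormalised cumulant $\mathcal{C}_{\mathrm{ren}}:=\sum_{z^\beta\in\mathbf{M}}\frac{\Upsilon^\alpha_\M(z^\beta)}{\hat S_\M(z^\beta)}\Pi_\M(\hat M_\M z^\beta)$, it suffices to prove $\mathcal{C}_{\mathrm{ren}}=\mathcal{C}_{\alpha+\gamma}$ as formal series, with $\gamma_k$ as stated. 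First I would unfold $\hat M_\M z^\beta$ using $\Delta^-_\M z^\beta=z^\beta\otimes\emptyset_\M+\emptyset_\M\otimes z^\beta+\Delta_\M z^\beta$, $\CA_\M(\emptyset_\M)=\emptyset_\M$ and $\Pi_\M(\emptyset_\M)=1$, giving $\hat M_\M z^\beta=z^\beta+\Pi_\M\CA_\M(z^\beta)\,\emptyset_\M+(\Pi_\M\CA_\M\otimes\id)\Delta_\M z^\beta$, hence $\mathcal{C}_{\mathrm{ren}}=\mathcal{C}_\alpha+c_0+\mathcal{D}$ where $c_0=\sum_{z^\beta\in\mathbf{M}_-}\frac{\Upsilon^\alpha_\M(z^\beta)}{\hat S_\M(z^\beta)}\Pi_\M\CA_\M(z^\beta)$ is a constant (supported on $z^\beta$ that pair up into closed diagrams, i.e.\ on $\check{\mathbf{M}}_{\CR,0}\cap\mathbf{M}_-$) and $\mathcal{D}=\sum_{z^\beta}\frac{\Upsilon^\alpha_\M(z^\beta)}{\hat S_\M(z^\beta)}\Pi_\M\big((\Pi_\M\CA_\M\otimes\id)\Delta_\M z^\beta\big)$ is the genuine counterterm series.

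\textbf{Reorganising the counterterm series.} The next step is to expand $\Delta_\M z^\beta$ through its explicit kernel $E$ of Proposition~\ref{LOT_coproduct2} and, for a fixed extracted forest $\tilde{\prod}_{\M\,i}z^{\delta_i}$ and contracted skeleton $z^{\check\alpha}$, to carry out the $z^\beta$-sum. The key algebraic fact is that $\frac{\Upsilon^\alpha_\M(z^\gamma)\,S_\M(z^\gamma)}{\hat S_\M(z^\gamma)}=\prod_k(-\alpha_k\,k!)^{\gamma(k)}$ is \emph{multiplicative} in $\gamma$; combined with the splitting $\beta=\hat\beta_1+\cdots+\hat\beta_n+\hat\alpha$ occurring in $E$, the $z^\beta$-sum factorises into one skeleton factor and one factor per extracted piece, the latter being $\sum_{z^{\check\delta_i}\in\check{\mathbf{M}}_{\CR,k_i}}\frac{\Upsilon^\alpha_\M[z^{\check\delta_i}]\,\langle D^{k_i}z^{\delta_i},z^{\check\delta_i}\rangle}{\hat S_\M(z^{\check\delta_i})}$ up to normalisation, where $k_i$ is the arity of the $i$-th contracted vertex and $z^{\check\delta_i}$ is the $i$-th divergent piece together with its $k_i$ connecting legs. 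Multiplying by $\Pi_\M\CA_\M(z^{\delta_i})$, summing $z^{\delta_i}\in\mathbf{M}_-$, and collecting the remaining $S_\M$– and $k_i!$–factors (the $k_i!$ coming from $D^{k_i}$ against the $\partial_{z_{k_i}}$ on the skeleton side), one recognises exactly $-\gamma_{k_i}$ as in the stated formula.

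\textbf{Recognising the coupling shift.} After the factorisation, the coefficient of $\Pi_\M(z^{\check\alpha})$ in $\mathcal{C}_{\mathrm{ren}}$ is, for each arity $k$, a binomial-weighted sum over how many of the $\check\alpha(k)$ arity-$k$ vertices of $z^{\check\alpha}$ are "filled" by an extracted divergent piece (each filled one contributing $-\gamma_k$, each unfilled one $-\alpha_k$). Because $\Upsilon^\alpha_\M$ and $\hat S_\M$ both factorise over arities as $\prod_k\frac{(\,\cdot\,)^{\check\alpha(k)}}{\check\alpha(k)!}$, this sum equals $\prod_k\frac{(-(\alpha_k+\gamma_k))^{\check\alpha(k)}}{\check\alpha(k)!}=\frac{\Upsilon^{\alpha+\gamma}_\M(z^{\check\alpha})}{\hat S_\M(z^{\check\alpha})}$; the arity-$0$ vertices together with $c_0$ reproduce the $\gamma_0$ term by the same computation with $D^0=\id$. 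Summing over $z^{\check\alpha}$ gives $\mathcal{C}_{\mathrm{ren}}=\mathcal{C}_{\alpha+\gamma}$, and exponentiating yields the theorem.

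\textbf{Main obstacle.} The hard part is the symmetry-factor bookkeeping in the middle step: the cumulant weights use $\hat S_\M(z^\beta)=\prod_k\beta(k)!$, whereas the duality \eqref{dual_Delta}, the twisted antipode $\CA_\M$, and the kernel $E$ are phrased with $S_\M(z^\beta)=\prod_k\beta(k)!\,(k!)^{\beta(k)}$. One must verify that the excess half-edge–ordering factors $(k!)^{\beta(k)}$ cancel consistently across $z^\beta$, each extracted $z^{\delta_i}$, its completion $z^{\check\delta_i}$, and the skeleton $z^{\check\alpha}$ — precisely where the multiplicativity of $\Upsilon^\alpha_\M S_\M/\hat S_\M$, the explicit shape of $E$ in Proposition~\ref{LOT_coproduct2}, and Proposition~\ref{prop:Sta_Orb} enter. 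A secondary point needing care is that $\CA_\M$ is recursive, so the extracted forest must contribute $\Pi_\M\CA_\M$ of the \emph{whole} forest, not of its outermost piece; this is automatic once one writes $\hat M_\M=(\Pi_\M\CA_\M\otimes\id)\Delta^-_\M$, since $\CA_\M$ already packages the nested-subdivergence forest recursion and the coassociativity inherited from the Guin–Oudom construction for $\star_\M$ keeps the reorganisation self-consistent.
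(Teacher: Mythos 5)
Your proposal is correct and follows essentially the same route as the paper: reduce to the cumulant identity via Lemma~\ref{prop:cumulant_F}, Corollary~\ref{coro:commute} and Theorem~\ref{them:main}, then exploit the multiplicativity of $\Upsilon^\alpha_\M(\cdot)\,S_\M(\cdot)/\hat S_\M(\cdot)$ over the variables $z_k$ to turn the counterterm sum into a per-arity coupling shift. The only difference is presentational — the paper packages your sum-reorganisation into an adjoint operator $\hat M_\M^*$ defined through the duality \eqref{dual_Delta} and proves it is multiplicative, $\hat M_\M^* z^\theta=\prod_k(\hat M_\M^* z_k)^{\theta(k)}$, which is exactly your binomial-weighted "filled/unfilled vertex" expansion in disguise.
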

\begin{proof} 
Lemma \ref{prop:cumulant_F}, Corollary \ref{coro:commute} together with the equalities $\hat{S}_{\M}(\Phi(\Gamma)) = \hat{S}_{\F}(\Gamma)$ and $\Upsilon_\M^{\alpha}(\Phi(\Gamma)) = \Upsilon_\F^{\alpha}(\Gamma)$ yield that 
  	\begin{equs}
  	\log \mathbb{E} \left[\exp \left(-  \int_\Lambda \sum_{k \in \CK(\CR)} \alpha_k H_k(X(x)) dx \right) \right]  = \sum_{z^{\beta} \in \mathbf{M}_\CR}  \frac{\Upsilon_\M^{\alpha}[z^{\beta}]}{\hat{S}_{\M}(z^{\beta})} \Pi_{\M}z^{\beta}
  \end{equs}
  By the same reasoning
  	\begin{equs}
  	\log \mathbb{E} \left[\exp \left(-  \int_\Lambda \sum_{k \in \CK(\CR) } (\alpha_k +\gamma_k) H_k(X(x) \right) dx) \right]  
  	= \sum_{z^{\beta} \in \mathbf{M}_\CR}  \frac{\Upsilon_\M^{(\alpha+\gamma)}[z^{\beta}]}{\hat{S}_{\M}(z^{\beta})} \Pi_{\M}z^{\beta}.
  \end{equs}
 By Theorem \ref{them:main},
 \begin{equs}
 	\R_{\BPHZ} \exp \left(-  \int_\Lambda \sum_{k \in \CK(\CR)} \alpha_k H_k(X(x)) dx \right)
 	 = \sum_{z^{\beta} \in \mathbf{M}_\CR}  \frac{\Upsilon_\M^{\alpha}[z^{\beta}]}{\hat{S}_{\M}(z^{\beta})} \Pi_{\M} \hat{M}_\M z^{\beta}.
 \end{equs}
%
Then let us focus on the renormalisation of multi-indices, and suppose there exist an operator $\hat{M}_\M^*$ such that
\begin{equs}
 \sum_{z^{\beta} \in \mathbf{M}_\CR}  \frac{\Upsilon_\M^{\alpha}[z^{\beta}]}{\hat{S}_{\M}(z^{\beta})} \hat{M}_\M z^{\beta}
	 = \sum_{z^{\theta} \in \mathbf{M}_\CR }  \frac{\Upsilon^{\alpha}[\hat{M}_\M^* z^{\theta}]}{\hat{S}_{\M}(z^{\theta})}  z^{\theta}.
\end{equs}
Since the adjoint relation gives that the number of isomorphic extraction-contraction
\begin{equs}
	\frac{\langle\tilde{z}^{\tilde{\mu}} \otimes z^\theta , \Delta^{\! -}_{\M_\CR} z^\beta \rangle }{S(\tilde{z}^{\tilde{\mu}})S(z^\theta )}
	=
	\frac{\langle\tilde{z}^{\tilde{\mu}} \bar\star_{\M_\CR} z^\theta , z^\beta \rangle }{S_{\M}(z^\beta)}   \frac{S_{\M}(z^\beta) }{ S_{\M}(z^\theta)S_{\M}(\tilde{z}^{\tilde{\mu}})},
\end{equs}
$ \hat{M}_\M^* $ has the expression
\begin{equs}
	\hat{M}_\M^* z^{\theta} = z^{\theta}  
	+ \sum_{z^{\beta}\in \mathbf{M}_\CR} 
\sum_{\tilde{z}^{\tilde{\mu}} \in \mathcal{M}_-}   \frac{\langle\tilde{z}^{\tilde{\mu}} \bar\star_{\M_\CR} z^\theta , z^\beta \rangle }{S_{\M}(z^\beta)}  (\Pi_{\M}\CA_{\M}(\tilde{z}^{\tilde{\mu}})) \frac{S_{\M}(z^\beta) \hat{S}_{\M}(z^\theta)}{\hat{S}_{\M}(z^{\beta}) S_{\M}(z^\theta)S_{\M}(\tilde{z}^{\tilde{\mu}})} z^{\beta}.
\end{equs}
One can easily observe, due to the multiplicativity of $ \frac{\hat{S}_{\M}(\cdot)}{S_{\M}(\cdot)} $, that $ \hat{M}_\M^* $ is multiplicative in the sense that
 \begin{equs}
 	\hat{M}_\M^* z^{\theta} = \prod_{k \in \mathbb{N}} (\hat{M}_\M^* z_k)^{\theta(k)}
 \end{equs}
\begin{equs}
	\hat{M}_\M^* z_k = z_k
	+ \sum_{z^{\beta}\in \mathbf{M}_\CR} 
	\sum_{{z}^{{\mu}} \in \mathbf{M}_-}   \frac{\langle{z}^{{\mu}} \blacktriangleright_{\M_\CR} z_k , z^\beta \rangle }{S_{\M}(z^\beta)}  (\Pi_{\M}\CA_{\M}({z}^{{\mu}})) \frac{S_{\M}(z^\beta) }{\hat{S}_{\M}(z^{\beta}) k! S_{\M}({z}^{{\mu}})} z^{\beta},
\end{equs}
where we use $\blacktriangleright_{\M_\CR}$  instead of its primitive-terms extension of $\bar\blacktriangleright_{\M_\CR}$ as both $z_k$ and $z^\mu$ are non-empty ($z_k$ is not a proper multi-indice but we abuse the notation of $\blacktriangleright_{\M_\CR}$ as the formula of this product is the same as $\blacktriangleright_{\M_\CR}$ defined before. Only the space that the product works on is different.).
Therefore, one has
\begin{equs}
	\Upsilon_\M^{\alpha}[\hat{M}^* z^{\theta}] = \Upsilon_\M^{\hat{M}_\M^{*}\alpha}[ z^{\theta}]
	\end{equs}
where $ \Upsilon_\M^{\hat{M}_\M^{*}\alpha} $ is defined by
\begin{equs}
	\Upsilon^{\hat{M}^{*}\alpha}[z_k] := \Upsilon_\M^{\alpha}[\hat{M}_\M^* z_k] = -\alpha_k - \gamma_k,
\end{equs}
and then it is extended multiplicatively, which allows us to conclude.
Note that $\gamma_0$ is from the primitive term $\Pi_{\M}(z^\beta) \otimes \emptyset_\M$.
	\end{proof}
	\begin{remark}
		The adjoint argument used in the previous proof is at the core of all the proofs for getting renormalised equations for singular SDEs with rough paths (see \cite{BCFP,L23}) and singular SPDEs with Regularity Structures (see \cite{BCCH,BB21,BL23}). Let us mention that in the present case, the definition of $ \hat{M}_\M^*  $ is more complicated due to the use of two combinatorial factors $ S_\M(\cdot) $ and $ \hat{S}_\M(\cdot) $.  
		\end{remark}

		One has in fact a group structure on $\hat{M}_\M$. Indeed, one has
		\begin{equs}
			\hat{M}_\M := \left( \ell^{\BPHZ}_{\M} \otimes \id  \right) \Delta^{\!-}_{\M}, \quad \ell^{\BPHZ}_{\M} := \Pi_{\M} \CA_{\M}.
		\end{equs}
	The map $ \ell^{\BPHZ}_{\M} : \langle\CM_{-}\rangle \rightarrow \mathbb{R} $ is a  character being multiplicative for the forest product. 
Now the map $ \hat{\Delta}^{\!-}_{\M} $ is defined from $ \langle\CM_{-}\rangle$ into $ \langle\CM_{-}\rangle \otimes \langle\CM_{-}\rangle $ via the same defintion as for $ \Delta^{\!-}_{\M}  $: 
\begin{equs}
	\hat{\Delta}^{\!-}_{\M} =  \left( \id \otimes \pi_{\M}^{-}  \right)\Delta^{\!-}_{\M} 
\end{equs}
where $ \pi_{\M}^{-} $ is the canonical projection from $ \langle\CM\rangle $ into $ \langle\CM_{-}\rangle $. One can see easily that $ \Delta_{\M}^{\!-} $ is a coaction namely
\begin{equs}
	\left( \id  \otimes \Delta_{\M}^{\!-}  \right)\Delta_{\M}^{\!-}  = \left(  \hat{\Delta}_{\M}^{\!-} \otimes \id \right)\Delta_{\M}^{\!-}.
\end{equs}
Then, one can define an antipode $ \hat {\CA}_{\M} : \langle\CM_{-}\rangle \rightarrow \langle\CM_{-}\rangle $ by
\begin{equs} \label{eq:def_antipode_MI_-}
\begin{aligned}	& \hat{\CA}_\M(\emptyset_\M) := \emptyset_\M,  \quad \hat{\CA}_{\M}( \tilde z^{\tilde{\alpha}} \tilde{\bullet}_\M \tilde z^{\tilde{\beta}} ) := \hat{\CA}_\M( \tilde z^{\tilde{\alpha}} ) \tilde{\bullet}_\M \hat{\CA}_\M( \tilde z^{\tilde{\beta}} ) \quad \text{ for any } \tilde z^{\tilde{\alpha}}, \tilde z^{\tilde{\beta}} \in \mathcal{M}_-,
	\\&
	\hat{\CA}_\M( z^{ \beta}) = - z^{\beta} - \mu_{\M} \circ(\hat{\CA}_\M \otimes \id)\circ\hat{\Delta}_{\M}  z^{ \beta},
	\quad \text{ for any }  z^{ \beta} \in \mathbf{M}_-.
	\end{aligned}
\end{equs}
where	$\hat{\Delta}_{\M}$ is the reduced map of 	$\hat{\Delta}_{\M}^{\!-}$. Equipped with $ \hat{\Delta}_{\M}^{\!-} $, $ 	\hat{\CA}_\M( z^{ \beta}) $ and the forest product, $ \langle\CM_{-}\rangle $ is Hopf algebra and $ \langle\CM \rangle $ equipped with $\hat{\Delta}_{\M}^{\!-}$ is a left comodule over $ \langle\CM \rangle $. One has also a group structure on the characters of $ \langle\CM_{-}\rangle $ given by
\begin{equs}
	\mathcal{G}_{\M}^{-} = \lbrace f_{\M} :\langle\CM_{-}\rangle \rightarrow \mathbb{R}, \, f_{\M}(\tilde z^{\tilde{\alpha}} \tilde{\bullet}  \tilde z^{\tilde{\beta}}) = f_{\M}(\tilde z^{\tilde{\alpha}}) \tilde{\bullet}  f_{\M}(\tilde z^{\tilde{\beta}})  \rbrace.
	\end{equs}
The product for this group is given by $ \star_{\M}^- $ and its inverse by the antipode for every $f_{\M}, g_{\M} \in \CG_{\M}^-$ by
\begin{equs}
f_{\M} \star_{\M}^- g_{\M} :=\left( f_{\M} \otimes g_{\M} \right) \hat{\Delta}^{\!-}_{\M}, \quad	f^{-1}_{\M} = f_{\M}(\hat{\CA}_\M \cdot).
\end{equs} 
Then,  one has
\begin{equs}
	M_{f_{\M}} \circ M_{g_{\M}} = M_{f_{\M} \star_{\M}^- g_{\M}}, \quad   M_{f_{\M}} := \left( f_{\M} \otimes \id \right) \Delta_{\M}^{\!-}.
\end{equs}
One has a similar construction directly on the Feynman diagrams with the extraction-contraction coproduct  $ \hat{\Delta}^{\!-}_{\F} $  defined from $ \langle\CF_{-}\rangle$ into $ \langle\CF_{-}\rangle \otimes \langle\CF_{-}\rangle $ via the same defintion as for $ \Delta^{\!-}_{\F}  $: 
\begin{equs}
	\hat{\Delta}^{\!-}_{\F} =  \left( \id \otimes  \pi_{\F}^{-}  \right)\Delta^{\!-}_{\F} 
\end{equs}
where $ \pi_{\F}^{-} $ is the canonical projection from $ \langle\CF\rangle $ into $ \langle\CF_{-}\rangle $. we denote by $ \star_{\F}^- $ the convolution product associated with $ \hat{\Delta}^{\!-}_{\F} $ and $ \CG_{\F}^- $ is a group of characters. Using the counting map and the identity \eqref{counting_map_morphism}, one can move from one group to the other: 
\begin{equs}
\Phi \left( f_{\F} \star_{\F}^-	 g_{\F} \right) = 	f_{\M} \star_{\M}^-	 g_{\M}, \quad f_{\M} = \Phi \left( f_{\F} \right), \quad  	 g_{\M} = \Phi \left( g_{\F} \right),
\end{equs}
where $ 	f_{\M}, g_{\M} \in \CG_{\M}^- $ and $ 	f_{\F}, g_{\F} \in \CG_{\F}^- $.

\section{Example: Renormalisation of the $\Phi^4$ measure} \label{sec:Example}
In this section, we will use multi-indices to renormalise the $\Phi^4$ measure and show how it is equivalent to the BPHZ renormalisation of some Feynman diagrams appearing in the cumulant expansion. The renormalisation of $\Phi^4$ measure using a new type of Hopf algebra with some ``monomials" representing the vertices instead of Feynman diagrams was initially studied in \cite{BK23}. The main idea of this section is to show that these ``monomials" are essentially some multi-indices and the Hopf algebra the authors used can be formalised as the extraction-contraction of multi-indices which allows us to generalise their results to broader models.

\subsection{The model}
We consider the $\Phi^4$ model on the $d$-dimensional torus $\Lambda = \mathbb{T}^d = (\mathbb{R}/\mathbb{Z})^d$
\begin{equs}
	\partial_t\phi(t,x) = \Delta \phi(t,x)-m^2\phi(t,x) - \varepsilon\phi(t,x)^3+\xi(t,x)
\end{equs}
with $ \phi: \mathbb{R}_{+} \otimes \Lambda \mapsto \mathbb{R}$.
The corresponding invariant measure is 
\begin{equs}
	\mu_{\varepsilon}(d\phi) = \frac{1}{Z(\varepsilon)} \exp\left\{ -\int_\Lambda \left(\frac{1}{2}\Vert \nabla \phi(x)\Vert^2 +\frac{1}{2}m^2\phi(x)^2+\frac{\varepsilon}{4}\phi(x)^4\right)dx \right\}d\phi
\end{equs}
where $Z(\varepsilon)$ is the partition function such that $\int_{\mathbb{R}} \mu_{\varepsilon}(d\phi) = 1$.
For simplicity we consider $m=1$. It can be extended to any $m$ by a Gaussian change of measure. Then, let $\alpha  = \frac{\varepsilon}{4}$ and the measure can be rewritten as
\begin{equs}
	\mu_{\alpha}(d\phi) = \frac{1}{Z(\alpha)} \exp\left\{ -\int_\Lambda \left(\frac{1}{2}\Vert \nabla \phi(x)\Vert^2 +\frac{1}{2}\phi(x)^2+\alpha \phi(x)^4\right)dx \right\}d\phi
\end{equs}
However, in the second-dimensional torus case, this measure diverges, which makes Wick renormalisation  necessary. This amounts to changing all the products of $\phi(x)^n$ to the Wick power $:\phi(x)^n:$ \cite{WB08}. For Gaussian $\phi(x)$, the Wick power can be described by the Hermite polynomial
\begin{equs}
	:\phi(x)^n: = H_n\left(\phi(x), \mathrm{Var} \left(\phi(x)\right)\right).
\end{equs}

\subsection{Renormalise $\Phi^4$ measure through multi-indices}
For $d >2$, besides the Wick renormalisation we also need the BPHZ renormalisation. We will implement the renormalisation by cumulant expansion of the partition functions used in \cite{BK23} and take the example $d=3$.
Firstly denote the exponent after Wick renormalisation as 
\begin{equs}
	G_\alpha (\phi) :&= \int_\Lambda \left(\frac{1}{2}\Vert \nabla \phi(x)\Vert^2 +\frac{1}{2}:\phi(x)^2:+\alpha :\phi(x)^4:\right)dx 
	\\& = G_0(\phi)+ \alpha \int_\Lambda  :\phi(x)^4:dx .		
\end{equs}
Then, we have the partition functions
\begin{equs}
	Z(\alpha) = Z(0) \mathbb{E}\left[e^{-\alpha \int_\Lambda :\phi(x)^4:dx }\right].
\end{equs}
where under the expectation the Wick power of $\phi$ is the Hermit polynomial of Gaussian free field.
By the
	Linked Cluster Theorem
, the cumulant expansion
\begin{equs}\label{eq:cumulant}
	\mathrm{log}	\mathbb{E}\left[e^{-\alpha \int_\Lambda :\phi(x)^4:dx }\right] = \sum_{n=2}^\infty (-\alpha)^n \frac{\kappa_n}{n!} 
\end{equs}
where $\kappa_n$ is the $\mathbb{E}\left[\left(\int_\Lambda :\phi(x)^4:dx \right)^n\right]$ projected to all the connected Feynman diagrams in the pairing-half-edges problem \eqref{eq:pairing_problem} in which the kernel is the $d$-dimensional Green's function. By Corollary~\ref{coro:commute}, it is equivalent to lifting the expectation to the forest of multi-indices with only one element which is the single multi-indice $z_4^n$, which means
\begin{equs}
	\mathrm{log}	\mathbb{E}\left[e^{-\alpha \int_\Lambda :\phi(x)^4:dx }\right] = \sum_{n=2}^\infty  \frac{(-\alpha)^n}{n!} \Pi_\M(z_4^n)
\end{equs}
Therefore, we should renormalise the $\Phi^4$ measure via the reduced extraction-contraction coproduct $\Delta_\M$ operating on $z_4^n$.
\vspace{6pt}

For the $d=3$ case, the degree map of multi-indices is
\begin{equs}
	\deg z^\beta =   -\frac{1}{2}\sum_{k \in \mathbb{N}}k \beta(k) + 3(|z^\beta| - 1)
\end{equs}
as when $d=1$ the Green's function behaves like $-1$-Hölder. One can check that the only forests in $\CM_{-}$ can be extracted from $z_4^n$ are $({z_3^2})^{\tilde{\bullet}^m}$ for $n \ge 4$ and $m \le \lfloor\frac{n}{2}\rfloor$.
Therefore, the rule of insertion is 
\begin{equs}
	\CR  = \{k: k =2, 4\}.
\end{equs}
Then for $n \ge 4$ extraction-contraction coproduct is
\begin{equs}
	\Delta_\M z_4^n =
	\sum_{m=1}^{\lfloor\frac{n}{2}\rfloor}
	E\left(({z_3^2})^{\tilde{\bullet}^m}, z_2^mz_4^{n-2m} ,z_4^n\right)  
	 ({z_3^2})^{\tilde{\bullet}^m}
	\otimes   z_2^mz_4^{n-2m}.
\end{equs}
Then we have to calculate
\begin{equs}
	&E\left(({z_3^2})^{\tilde{\bullet}^m}, z_2^mz_4^{n-2m} ,z_4^n\right)  
	\\&=   
	\sum_{k_1,...,k_m\in \mathbb{N}}
	\sum_{\beta  = \hat{\beta}_1 + \cdots + \hat{\beta}_n+\hat\alpha} 
	\frac{4^mn!}{m!^2(n-2m)!}
	\frac{
		\langle \prod_{i=1}^m\partial_{z_{k_i}} z^{\alpha} , z^{\hat{\alpha}} \rangle}
	{S(z^{\hat{\alpha}})}
	\prod_{i=1}^m
	\frac{\langle  
		D^{k_i}z^{\beta_i}, z^{\hat{\beta}_i} \rangle }{S(z^{\hat{\beta}_i})},
\end{equs}
with $z^\alpha  = z_2^mz_4^{n-2m}$.
According to the rule, the only way allowed to add edges to $z_3^2$ is to form one $z_4^2$. i.e., the possible $z^{\hat\beta_i}$ is $z_4^2$ and thus the only possible $z^{\hat\alpha}$ is $z_4^{n-2m}$. Therefore,
\begin{equs}
&	\prod_{i=1}^m
	\frac{\langle  
		D^{k_i}z^{\beta_i}, z^{\hat{\beta}_i} \rangle }{S(z^{\hat{\beta}_i})}
	= 2^m, \quad 
	\frac{
		\langle \prod_{i=1}^m\partial_{z_{k_i}} z^{\alpha} , z^{\hat{\alpha}} \rangle}
	{S(z^{\hat{\alpha}})} = m!
	.
\end{equs}
Finally, we have 
\begin{equs} \label{eq:coproduct_example}
	\Delta_\M z_4^n =
	\sum_{m=1}^{\lfloor\frac{n}{2}\rfloor}
	\frac{2^{3m}n!}{m!(n-2m)!}
	({z_3^2})^{\tilde{\bullet}^m}
	\otimes   z_2^mz_4^{n-2m}.
\end{equs}
Then, the twisted antipode
\begin{equs} 
\CA_\M(z_4^2) = -z_4^2,
	\quad
	\CA_\M(z_4^3) = -z_4^3,
	\quad
	\CA_\M(z_4^n) =  0,
	\quad \text{ for } n\ge4.
\end{equs}
Meanwhile, the corresponding renormalisation is
\begin{equs} 
\hat{M}_\M(z_4^2) &= z_4^2 - \Pi_{\M} z_4^2,
	\quad
	\hat{M}_\M(z_4^3) = z_4^3 - \Pi_{\M} z_4^3,
	\\
	\hat{M}_\M(z_4^n) &=  \sum_{m=0}^{\lfloor\frac{n}{2}\rfloor} \frac{(-8)^{m}n!}{m!(n-2m)!}
	(\Pi_{\M}{z_3^2})^{m}
	  z_2^mz_4^{n-2m},
	\quad \text{ for } n\ge4.
\end{equs}
Consequently, the cumulant expansion \eqref{eq:cumulant} after renormalisation through $\hat{M}_\M$ is
\begin{equs}
 \sum_{n=2}^\infty  \frac{(-\alpha)^n}{n!} \hat{M}_\M z_4^n = 	& -  \frac{\alpha^2}{2}\Pi_\M(z_4^2) 
	+  \frac{\alpha^3}{3!}\Pi_\M(z_4^3)
	+ \frac{\alpha^2}{2} z_4^2 
	-  \frac{\alpha^3}{3!} z_4^3 
	\\+&
	\sum_{n=4}^\infty  
\sum_{m=0}^{\lfloor\frac{n}{2}\rfloor} \frac{(-8)^{m}n!}{m!(n-2m)!}
(\Pi_{\M}{z_3^2})^{m}
z_2^mz_4^{n-2m}.
\end{equs}
Let $p = n-m$, $q= n-2m$ and $\beta = 8\alpha^2 \Pi_\M({z_3^2})$. Then the expansion is equivalent to
\begin{equs}
& \sum_{p=0}^\infty  \sum_{q=0}^\infty \frac{1}{p!}{p \choose q} 
(-\beta )^{p-q}(-\alpha)^p z_2^{p-q} z_4^q -  \frac{\alpha^2}{2}\Pi_\M(z_4^2) 
+  \frac{\alpha^3}{3!}\Pi_\M(z_4^3)
\\=&
 \sum_{p=0}^\infty   \frac{1}{p!} (-\beta z_2- \alpha z_4)^p -  \frac{\alpha^2}{2}\Pi_\M(z_4^2) 
 +  \frac{\alpha^3}{3!}\Pi_\M(z_4^3).
\end{equs}
This is in agreement with the renormalisation by Theorem 5.2.4 in \cite{B22}. It is given by 
\begin{equs} \label{renormalisation_measure2}
	\hat Z_N(\alpha) = Z(0) \mathbb{E}\left[e^{-\alpha \int_\Lambda :\phi(x)^4:dx - \beta \int_\Lambda :\phi(x)^2:dx  - \frac{\alpha^2}{2}\Pi_\M(z_4^2) 
		+  \frac{\alpha^3}{3!}\Pi_\M(z_4^3) }\right]
\end{equs}
which is the same as one in \cite{BK23}.

Moreover, we can check that the same result can be obtained from Theorem \ref{prop:exponential}.
Now, we proceed with the following renormalisation
\begin{equs}
\R_{\BPHZ} \exp\left(-  \int_\Lambda \sum_{k \in \CK(\CR)} \alpha_k H_k(X(x)) dx \right) = \exp \left(-  \int_\Lambda \sum_{k \in \CK(\CR) \cup \{0 \}} (\alpha_k +\gamma_k) H_k(X(x)) dx \right).
\end{equs}
	\begin{equs}
	\gamma_k =   - \sum_{z^{\check{\delta}} \in \check{\mathbf{M}}_{\CR,k}} 
	\sum_{z^{\delta} \in \mathbf{M}_-}  \Pi_{\M}\CA_{\M}(z^{\delta})
	\frac{ \langle  D^k z^{\delta} ,z^{\check{\delta}} \rangle}{k! \hat{S}_{\M}(z^{\check{\delta}}) S_{\M}(z^{\delta})}
	\Upsilon^{\alpha}_\M[z^{\check{\delta}}].
\end{equs}
Firstly, $\gamma_4 = 0$ as there is no such $D^4 z^\delta$ obeys the rule $\CR$. Then we have to compute 
	\begin{equs}
	\gamma_2 &=   -  (-\Pi_{\M}z_3^2) \frac{\langle  D^2 z_3^2 ,z_4^2 \rangle}{2! \hat{S}(z_4^2) S(z_3^2)} \Upsilon_\M^{\alpha}[z_4^2]
	\\&= -
	(-\Pi_{\M}z_3^2) \frac{ 2 \cdot 2!4!^2}{2! 2! (3!^2 2!)} (-\alpha)^2
	\\&= 8 \alpha^2 \Pi_{\M}z_3^2 =  \beta.
\end{equs} 
With a similar computation, one has
\begin{equs}
	\gamma_0 =  \frac{\alpha^2}{2}\Pi_\M(z_4^2) 
	- \frac{\alpha^3}{3!}\Pi_\M(z_4^3). 
	\end{equs}
The previous computation shows that we obtain \eqref{renormalisation_measure2}.

\subsection{Some examples of Theorem~\ref{them:main2}}
Let us consider the example of $z_4^4$. According to \eqref{eq:coproduct_example}, the reduced coproduct  is
\begin{equs}
	\Delta_\M z_4^4 &=
	\frac{8 \times 4!}{2!}
	z_3^2
	\otimes   z_2z_4^2 
	+\frac{64 \times 4!}{2!}
	({z_3^2})^{\tilde{\bullet}^2}
	\otimes   z_2^2 
	\\&= 4!\left(4z_3^2
	\otimes   z_2z_4^2  + 32({z_3^2})^{\tilde{\bullet}^2}
	\otimes   z_2^2  \right)	
\end{equs}
Then lift multi-indices in the result to Feynman diagrams and we have 
\begin{equs}
	\CP\left(({z_3^2})^{\tilde{\bullet}^m}\right) 
	= \frac{S(\left(z_3^2\right)^{\tilde{\bullet}^m}) }{S(\text{$\FGIII$}^{\tilde{\bullet}_\F^m})} \text{$\FGIII$}^{\tilde{\bullet}_\F^m}
	= \frac{m! (3!)^{2m}(2!)^m}{m!(2!)^m(3!)^m} = (3!)^m \text{$\FGIII$}^{\tilde{\bullet}_\F^m},
\end{equs}
\begin{equs}
	\CP\left(z_2z_4^2 \right) = \frac{S(z_2z_4^2 )}{S(\text{\FGIIIplus})} \text{\FGIIIplus}= \frac{2!4!^22!}{2!3!} \text{\FGIIIplus}= 8\times 4! \text{\FGIIIplus},
\end{equs}
\begin{equs}
	\CP(z_2^2) = \frac{S(z_2^2)}{S(\text{\YII})} \text{\YII} = \frac{2!^22!}{2!2!} \text{\YII} = 2\text{\YII}.
\end{equs}
Therefore,
\begin{equs}
	(\CP \otimes \CP) \circ \Delta_\M (z_4^4)  = 4\times 4! ^3  \left( 2 \text{\FGIII} \otimes \text{\FGIIIplus} +  \text{\FGIII} \text{\FGIII} \otimes \text{\YII}\right).
\end{equs}
Let us then check the other direction $\Delta_\F \circ \CP(z_4^4)$ of the commutative diagram.
\begin{equs}
	\CP(z_4^4) &= \frac{S(z_4^4)}{S(\ZIVI)}  \ZIVI+ \frac{S(z_4^4)}{S(\ZIVII)}  \ZIVII+ \frac{S(z_4^4)}{S(\ZIV)} \ZIV 
	\\&=
	 \frac{4!^5}{8\times2!^4} \ZIVI
	 + \frac{4!^5}{2!^32!^2} \ZIVII
	 + \frac{4!^5}{4\times3!^2} \ZIV 
	 \\&= 2^83^5\ZIVI 
	 + 2^{10}3^5 \ZIVII
	 + 4\times 4! ^3 \ZIV.
\end{equs}
Notice that $\ZIVI$ and $\ZIVII$ have no divergent subgraph and thus their coproduct is $0$. We then need to calculate
\begin{equs}
	\Delta_\F \ZIV = 2 \FGIII \otimes \FGIIIplus + \FGIII \FGIII \otimes \YII.
\end{equs} 
which concludes 
\begin{equs}
	(\CP \otimes \CP) \circ \Delta_\M (z_4^4)  = \Delta_\F \circ \CP(z_4^4).
\end{equs}

\end{document}